\numberwithin{equation}{section}
\newtheorem{thm}{Theorem}[section]
\newtheorem{defi}{Definition}[section]
\newtheorem{assume}{Assumption}[section]
\newtheorem{prop}{Proposition}[section]
\newtheorem*{nota}{Notation}
\newtheorem{lemma}{Lemma}[section]
\newtheorem{ex}{Example}[section]
\newtheorem{remark}{Remark}[section]
\begin{document}

\author{Hyungbin Park \thanks{hyungbin@snu.ac.kr, hyungbin2015@gmail.com} \\ \\ \normalsize{Department of Mathematical Sciences} \\ 
	\normalsize{Seoul National University}\\
	\normalsize{1, Gwanak-ro, Gwanak-gu, Seoul, Republic of Korea} 
}
\title{Convergence rates of large-time sensitivities with the Hansen--Scheinkman decomposition}

\maketitle

\abstract{This paper investigates the large-time asymptotic behavior of the sensitivities of cash flows. In quantitative finance, the price of a cash flow  is expressed in terms of a pricing operator of a Markov diffusion process.  We study the extent to which the pricing operator is affected by small changes of the underlying Markov diffusion. The main idea is a partial differential equation (PDE) representation of the pricing operator
	by incorporating  the Hansen--Scheinkman decomposition  method. The sensitivities of the cash flows and their large-time convergence rates can be represented via simple expressions in terms of  eigenvalues and eigenfunctions of the pricing operator. Furthermore, compared to the work of Park (Finance Stoch. 4:773-825, 2018), more detailed convergence rates are provided. In addition, we discuss the application of our results to three practical problems: utility maximization, entropic risk measures, and bond prices. Finally, as examples, explicit results for several market models such as the Cox--Ingersoll--Ross (CIR) model, 3/2 model and  constant elasticity of variance (CEV) model are presented.}

\section{Introduction}

In financial mathematics, sensitivity analysis is used to demonstrate how  changes of parameters affect cash flows.
A cash flow is expressed in expectation form as  
\begin{equation}\label{eqn:p_T}
	p_T:=\mathbb{E}_\xi^{\mathbb{P}}[e^{-\int_0^T r(X_s)\,ds} h(X_T)],  
\end{equation}
where $\mathbb{E}_\xi^\mathbb{P}$ is an expectation, $r$ and $h$ are suitable measurable functions, and $X=(X_t)_{t\geq0}$ is an underlying stochastic process with $X_0=\xi.$
This paper deals with the 
sensitivities of the expectation $p_T$  with respect to changes of the underlying process $X$ as well as their large-time asymptotic behaviors as $T\to\infty.$ 
The underlying process $X$ is assumed to be a   Markov diffusion, and the Markov process  $X$ with killing rate $r$ generates a pricing operator. 
The main idea is a partial differential equation (PDE) representation of the pricing operator  
by incorporating  the Hansen--Scheinkman decomposition method.
We conclude that the large-time behavior of the sensitivities is expressed in terms of eigenvalues and eigenfunctions of the pricing operator.

One of the core concepts of this paper is the Hansen--Scheinkman decomposition. 
Under suitable conditions, the expectation $p_T$ is expressed as 
\begin{equation}
	\label{eqn:intor_p_T}
	p_T=\phi(\xi)e^{-\lambda T}
	f(T,\xi)
\end{equation} 
for a positive measurable function  $\phi(\cdot),$    a positive number $\lambda$,
and a measurable function $f(T,\cdot).$
The function $f(T,\xi)$ converges to a nonzero constant, which is independent of $\xi$, as $T\to\infty.$ 
The key aspect of this decomposition is that the function $f$ is regarded as a  negligible term as $T\to\infty$ so that the behavior of $p_T$ is determined by the two factors $\phi(\xi)$ and $e^{-\lambda T}.$
In particular,  the large-time behavior of $p_T$ 
satisfies $|p_T|\leq c e^{-\lambda T}$ for a positive constant $c,$ independent of $T.$

This paper essentially investigates two types of sensitivities. The first is the sensitivity with respect to the initial value $X_0=\xi,$ which will be discussed in Section \ref{sec:long_term_sen_analy}. For the first-order sensitivity, known as the delta value,   the derivative $\partial_\xi p_T$ and its large-time behavior as $T\to\infty$ are of interest to us.  
From the Hansen--Scheinkman decomposition presented in Eq.\eqref{eqn:intor_p_T}, it follows that
$$\frac{\partial_\xi p_T}{p_T}=\frac{\phi'(\xi)}{\phi(\xi)}+\frac{f_x(T,\xi) }{f(T,\xi)}\,.$$
We  show that 
the derivative $f_x(T,\xi)$ can  also be expressed in expectation form in Eq.\eqref{eqn:p_T}.
This is achieved by representing the function $f(T,\xi)$
into a PDE form using the Feynman–Kac formula, then differentiating the PDE,
and then reconstructing the expectation for the derivative.
Applying  
the Hansen--Scheinkman decomposition repeatedly to the derivative $f_x(T,\xi),$ 
we have the expression, which is similar to Eq.\eqref{eqn:intor_p_T}, 
\begin{equation}
	\begin{aligned}
		f_x(T,\xi)
		&=\hat{\phi}(\xi)e^{-\hat{\lambda} T}\hat{f}(T,\xi)
	\end{aligned}
\end{equation}
for a positive measurable function $\hat{\phi}(\cdot),$    
a positive number $\hat{\lambda}$,
and a measurable function $\hat{f}(T,\cdot)$
converging to a nonzero constant as $T\to\infty.$ 
Finally, we have
\begin{equation}\label{eqn:intro_delta}
	\left|\frac{\partial_\xi p_T}{p_T}
	-\frac{\phi'(\xi)}{\phi(\xi)}\right|
	=\left|\frac{f_x(T,\xi) }{f(T,\xi)}\right|\leq  ce^{-\hat{\lambda} T}\,,\;T\ge0 
\end{equation} 
for some positive constant $c.$
Thus, $\frac{\partial_\xi p_T}{p_T}$ converges to $\frac{\phi'(\xi)}{\phi(\xi)}$ as $T\to\infty$ and its exponential convergence rate is $\hat{\lambda}.$ Further details are discussed in Theorem  \ref{thm:initial_main_thm}.
For the second-order sensitivity, known as the gamma value,  we present a similar argument for $\partial_{\xi\xi}p_T$ and its asymptotic behavior. We show that
\begin{equation}
	\left|\frac{{\partial_{\xi\xi}} p_T}{p_T}-\frac{\phi''(\xi)}{\phi(\xi)}
	\right|\leq ce^{-\hat{\lambda} T}
\end{equation}
for some positive constant $c.$
The asymptotic behavior is discussed in further detail in Theorem \ref{thm:2nd_sen}.

The second type of sensitivity
includes the drift and diffusion sensitivities,
which are known as the rho value and the vega value, respectively.  
Let $(X_t^{(\epsilon)})_{t\ge0}$ be the 
underlying process with perturbed drift or diffusion terms. The precise meaning of  perturbation is given in
Assumption \ref{assume:epsilon}. Here, $\epsilon$ can be understood as a perturbation parameter.
Let
$$p_T^{(\epsilon)}:=\mathbb{E}_{\xi}^{\mathbb{P}}
[e^{-\int_{0}^{T}r(X_{s}^{\epsilon})ds}
f(X_{T}^{\epsilon})]$$
be the expectation corresponding to the perturbed underlying process. We want to investigate the large-time behavior of $$\frac{\partial}{\partial\epsilon}\Big|_{\epsilon=0}p_T^{(\epsilon)}\,.$$
Under the assumption that the Hansen--Scheinkman decomposition is applicable to each $\epsilon$ 
(Assumption  \ref{assume:epsilon_a}), 
we obtain
$$p_T^{(\epsilon)}=\phi^{(\epsilon)}(\xi)\,e^{-\lambda^{(\epsilon)} T}f^{(\epsilon)}(T,\xi)\,,$$
which is analogous to Eq.\eqref{eqn:intor_p_T}. 
We verify that under some circumstances,
$$\left|\frac{1}{T}\frac{\partial}{\partial \epsilon}\Big|_{\epsilon=0} \ln p_T^{(\epsilon)}+\frac{\partial}{\partial \epsilon}\Big|_{\epsilon=0}\lambda^{(\epsilon)}\right|
\leq  \frac{c}{T}$$
for a positive constant $c.$
Thus, $\frac{1}{T}\frac{\partial}{\partial \epsilon}|_{\epsilon=0} \ln p_T^{(\epsilon)}$ converges to $-\frac{\partial}{\partial \epsilon}|_{\epsilon=0}\lambda^{(\epsilon)}$ as $T\to\infty$ and its  convergence rate is $O(1/T).$

Compared to the previous work of \cite{park2018sensitivity}, this paper has three distinguishing features.
First, for the first-order sensitivity with respect to the initial value $X_0=\xi$, the exponential convergence rate is demonstrated. 
Eq.\eqref{eqn:intro_delta} (or Theorem \ref{thm:initial_main_thm}) 
implies that $\frac{\partial_\xi p_T}{p_T}$ converges to $\frac{\phi'(\xi)}{\phi(\xi)}$ as $T\to\infty$ and its exponential convergence rate is $\hat{\lambda}.$
The previous work also showed that $\frac{\partial_\xi p_T}{p_T}$ converges to $\frac{\phi'(\xi)}{\phi(\xi)}$ as $T\to\infty$; however, its exponential convergence rate was not provided.
Second, the second-order sensitivity with respect to the initial value $X_0=\xi$ is analyzed in this paper (Section \ref{sec:2nd_sen}), whereas it was not addressed  in the previous work  at all.
Third, for the drift and diffusion perturbations, 
the current paper adopts more relaxed assumptions compared to the previous work. As the previous work relies on Malliavin calculus, it requires strong conditions such as continuous differentiability with bounded derivatives and  uniform ellipticity on drift and diffusion functions.  
As a specific example, if the underlying process is the constant elasticity of variance (CEV) model 
$$\frac{dX_t}{X_t}=\mu\,dt+\sigma {X_t}^{\beta}\,dB_t\,,\;X_0=\xi\,,$$
this paper can address the sensitivity with respect to  the  leverage effect parameter $\beta$ (Sections \ref{sec:CEV_n} and \ref{sec:CEV_p}), whereas it cannot be analyzed by the method used in the previous work.

Sensitivity analysis has been studied for various topics in quantitative finance.  
\cite{fournie1999applications}  
presented an original probabilistic method for numerical computation of the sensitivities. They employed Malliavin calculus and 
demonstrated option price sensitivities for hedging purposes.
\cite{gobet2005sensitivity} derived an expectation form of the sensitivity of the expected cost by employing three methods: the Malliavin calculus approach, the adjoint approach, and the martingale approach.  
\cite{kramkov2006sensitivity} developed sensitivity analysis of the optimal expected utility with respect to initial capital perturbations. \cite{mostovyi2017sensitivity} and \cite{mostovyi2018asymptotic} conducted sensitivity analysis of  the optimal expected utility with respect to small changes of the underlying market models.
\cite{park2019sensitivity} investigated
the sensitivities of the long-term expected utility of optimal portfolios for an investor with constant
relative risk aversion under incomplete markets.

Many authors have investigated the behavior of long-term cash flows.
\cite{fleming1995risk} studied the  long-term growth of expected utility with constant relative risk aversion and reformulated it as an infinite-time-horizon risk-sensitive control problem. 
\cite{hansen2012dynamic}, \cite{hansen2009long}, and \cite{hansen2012pricing}  
demonstrated
a long-term risk-return trade-off by employing the Hansen--Scheinkman decomposition.
\cite{leung2017long} studied the long-term growth rate of the expected utility from holding the leveraged exchange traded funds. They also used the Hansen--Scheinkman decomposition for the analysis. However, the parameter sensitivities were not covered in that paper.
\cite{liu2013portfolio}
demonstrated a computational method for evaluating optimal portfolios with special emphasis on long-horizon asymptotics.
\cite{robertson2015large} analyzed the large-time asymptotic behavior of solutions to semi-linear Cauchy problems with 
direct applications to  long-term portfolio choice problems.

The remainder of this paper is organized as follows. 
Section \ref{sec:HS} introduces the Hansen--Scheinkman decomposition
method as an essential tool of this paper.
Section \ref{sec:long_term_sen_analy}
investigates the sensitivities with respect to the initial value, the drift term, and the diffusion term  
and demonstrates their large-time behaviors.
Section \ref{sec:appli} discusses direct applications to three topics: utility maximization, entropic risk measures, and bond prices.
Section \ref{sec:ex} presents three specific examples:
the Cox--Ingersoll--Ross (CIR) model, the 3/2 model, and the CEV model.
Finally, Section \ref{sec:con} summarizes the results. The proofs and detailed calculations are provided in the appendices.

\section{Hansen--Scheinkman decomposition}
\label{sec:HS}

The  Hansen--Scheinkman decomposition (\cite{hansen2009long}) is one of  the main  techniques employed in this paper.
Given a Markov diffusion $(X_t)_{t\ge0}$ and a function $r(\cdot),$
this decomposition provides an expression 
of the  operator
$$h\mapsto\mathbb{E}_\xi^{\mathbb{P}}[e^{-\int_0^T r(X_s)\,ds} h(X_T)]  $$
in terms of  eigenvalues and eigenfunctions of this  operator.
In the following sections,  we will describe its mathematical formulation.

\begin{nota}
	Regarding the initial value $X_0$ as a given  constant, we use the notation $\xi$ so that 
   $\mathbb{E}_\xi^{\mathbb{P}}[\,\cdots]$ means the expectation when $X_0=\xi.$ Occasionally, we need to regard the initial value as a variable, in which case, the notation $x$ is used instead of $\xi.$ 
   This new notation $x$ makes us much convenient while dealing with PDEs. For example, see Eq.\eqref{eqn:1st_order_f} and Eq.\eqref{eqn:f_x_PDE}.
\end{nota}

\subsection{Consistent family of probability measures}

We begin with the notion of a consistent family of probability measures.
While considering a filtered probability space $(\Omega,\mathcal{F},(\mathcal{F}_t)_{t\ge0},\mathbb{P}),$ the  probability measure $\mathbb{P}$ is an object defined on the sigma-algebra $\mathcal{F}.$ The probability measure $\mathbb{P}$ is universal in the sense that the sigma-algebra contains all sub-sigma-algebras $(\mathcal{F}_t)_{t\ge0}.$ Instead of such a universal probability measure, we introduce  a family of probability measures  $(\mathbb{P}_t)_{t\ge0}$ where each $\mathbb{P}_t$ is defined on the sub-sigma-algebra $\mathcal{F}_t$.

The main reason for introducing this concept is that  
defining such a universal probability measure 
is occasionally impossible.
As a specific case, the change of measure in the Girsanov theorem holds for finite time horizon $[0,T]$ but it may not hold for infinite time horizon $[0,\infty)$. Thus, to use the  Girsanov theorem for studying large-time behavior as $T\to\infty,$ it is more convenient to deal with a family of probability measures instead of a universal probability measure.

\begin{defi}
	Let $(\Omega,\mathcal{F})$ be a measurable space and $(\mathcal{F}_t)_{t\ge0}$ be a  filtration. 
	We say that a family $(\mathbb{P}_t)_{t\ge0}$ of probability measures is {\em consistent} if each  probability measure $\mathbb{P}_t$ is defined on $\mathcal{F}_t$ and if
	$$\mathbb{P}_{t}(A)=\mathbb{P}_{t'}(A)$$ for any $0\le t \le t'$ and any $A\in\mathcal{F}_{t}.$
	We say that $(\Omega,\mathcal{F},(\mathcal{F}_t)_{t\ge0},(\mathbb{P}_t)_{t\ge0})$
	is   a {\em consistent probability space}.
\end{defi}
\noindent In this paper, we abuse the notations $\mathbb{P}$ and $\mathbb{P}_t$ without ambiguity. For a $\mathcal{F}_t$-measurable random variable $X,$ the expectation $\mathbb{E}^{\mathbb{P}_t}(X)$ is denoted as $\mathbb{E}^{\mathbb{P}}(X).$ 
This notation is not confusing because the family $(\mathbb{P}_t)_{t\ge0}$ is consistent so that $\mathbb{E}^{\mathbb{P}_t}(X)=\mathbb{E}^{\mathbb{P}_{t'}}(X)$ for any $t'\ge t.$

We present basic definitions of several probabilistic concepts. These definitions are straightforward.
\begin{defi}\label{defi:consistent_prob}
	Let $(\Omega,\mathcal{F},(\mathcal{F}_t)_{t\ge0},(\mathbb{P}_t)_{t\ge0})$
	be  a   consistent probability space.
	\begin{enumerate}[(i)]
		\item We say that a process $B=(B_t)_{t\ge0}$ 
		is a $d$-dimensional Brownian motion on the consistent probability space if for each $T\ge0$,  the process 
		$(B_t)_{0\le t\le T}$ is a usual $d$-dimensional Brownian motion on the filtered probability space $(\Omega,\mathcal{F}_T,(\mathcal{F}_t)_{0\le t\le T},\mathbb{P}_T).$ 
		\item We say that a process $X=(X_t)_{t\ge0}$ is a Markov process  (respectively, a martingale) on the consistent probability space if for each $T\ge0$, the process $(X_t)_{0\le t\le T}$ is a Markov process  (respectively, a martingale)  on  the filtered probability space $(\Omega,\mathcal{F}_T,(\mathcal{F}_t)_{0\le t\le T},\mathbb{P}_T).$ 
		\item We say that a process $X=(X_t)_{t\ge0}$ is
		a unique strong solution of the  SDE   (respectively, satisfies the SDE)
		$$dX_t=b(X_t)\,dt+\sigma(X_t)\,dB_t\,,\;X_0=\xi$$  on the consistent probability space 
		if for each $T\ge 0$, the process $(X_t)_{0\le t\le T}$ is a unique strong solution of the   SDE (respectively, satisfies the SDE) on  the filtered probability space $(\Omega,\mathcal{F}_T,(\mathcal{F}_t)_{0\le t\le T},\mathbb{P}_T).$ 
		\item Let $\mathcal{D}$ be an open subset of $ \mathbb{R}^d$ and  $X$ be a Markov process with state space $\mathcal{D}$ on the consistent probability space. We say that the process $X$ is recurrent   if  
		$$R(x,A):=\int_0^\infty \mathbb{E}_x^\mathbb{P}[\mathbb{I}_A(X_t)]\,dt=0\;\textnormal{ or }R(x,A)= \infty$$
		for any Borel set $A\subseteq\mathcal{D}$ and any $x\in \mathcal{D}.$ Refer to \cite[Theorem 4.3.6 on page 150]{pinsky1995positive} and \cite[Definition 3.1]{qin2016positive}.
	\end{enumerate} 
\end{defi}

\subsection{Recurrent eigenpairs}

Let  $(\Omega,\mathcal{F},(\mathcal{F}_t)_{t\ge0},(\mathbb{P}_t)_{t\ge0})$ be a consistent probability space that has  
a one-dimensional  Brownian motion $(B_t)_{t\ge0}.$ 
We consider a quadruple of functions $(b,\sigma,r,h)$ satisfying Assumptions \ref{assume:a}--\ref{assume:h} below on the consistent probability space $(\Omega,\mathcal{F},(\mathcal{F}_t)_{t\ge0},(\mathbb{P}_t)_{t\ge0})$ with given Brownian motion $(B_t)_{t\ge0}.$

\begin{assume}\label{assume:a} Let $\mathcal{D}$ be an open interval in $\mathbb{R}$ and $b:\mathcal{D}\to\mathbb{R}$ and $\sigma:\mathcal{D}\to\mathbb{R}$  be continuously differentiable functions with $\sigma>0.$ 
	For each $\xi\in\mathcal{D},$ the SDE
	$$dX_t=b(X_t)\,dt+\sigma(X_t)\,dB_t\,,\;X_0=\xi$$
	has a unique strong solution on $\mathcal{D},$ i.e.,  $\mathbb{P}_T(X_t\in\mathcal{D} \textnormal{ for all } 0\le t \le T)=1$  for each $T\ge0.$ 
\end{assume}

\begin{assume}\label{assume:r}
	The function $r:\mathcal{D}\to\mathbb{R}$ is a continuous function.
\end{assume}

For given $b,\sigma$ and $r$ satisfying the above-mentioned assumptions, we can 
define a pricing operator $\mathcal{P}$ as $$\mathcal{P}_Th(x)=\mathbb{E}_x^{\mathbb{P}}[e^{-\int_0^T r(X_s)\,ds} h(X_T)]\,,\;x\in\mathcal{D}$$
so that $p_T=\mathcal{P}_Th(\xi).$
For a real number $\lambda$ and a positive measurable function $\phi,$ we say that
a pair $(\lambda,\phi)$
is an {\em eigenpair} of $\mathcal{P}$ if
\begin{equation}\label{eqn:eigen_semigp}
\mathcal{P}_T\phi(x)=e^{-\lambda T}\phi(x)\quad\textnormal{for all }\; T>0\,,\,x\in\mathcal{D}\,. 
\end{equation}
For each eigenpair $(\lambda,\phi),$ the process
\begin{equation}\label{eqn:M}
M_t^\phi:=e^{\lambda t-\int_{0}^{t} r(X_{s})\,ds}\,\frac{\phi(X_{t})}{\phi(\xi)}\,,\;t\ge0 
\end{equation}
is a positive martingale.
For the proof, observe that 
$\mathbb{E}_\xi^\mathbb{P}[e^{-\int_{t}^{T} r(X_{s})\,ds} \phi(X_{T})|\mathcal{F}_t]=e^{-\lambda(T-t)}\phi(X_t)$ for $0\le t\le T$ from the Markov property of $X.$
Then 
\begin{equation}
\begin{aligned}
\mathbb{E}_\xi^\mathbb{P}[M_T^\phi|\mathcal{F}_t]
&=\mathbb{E}_\xi^\mathbb{P}[e^{-\int_{t}^{T} r(X_{s})\,ds} \phi(X_{T})|\mathcal{F}_t] e^{\lambda T-\int_0^tr(X_s)\,ds}\frac{1}{\phi(\xi)}=M_t^\phi\,,
\end{aligned}
\end{equation}
which implies that $M^\phi$ is a martingale.

One can define a measure $\hat{\mathbb{P}}_t^\phi$ on
each $\mathcal{F}_t$
by
$$\frac{d\hat{\mathbb{P}}_t^\phi}{d\mathbb{P}_{t}}=M_t^\phi\,.$$
Then, it can be easily checked that the family $(\hat{\mathbb{P}}_t^\phi)_{t\ge0}$  is consistent, i.e., 
$$\hat{\mathbb{P}}_{t}^\phi(A)=\hat{\mathbb{P}}_{t'}^\phi(A)$$
for any $A\in\mathcal{F}_{t}$ and $0\leq t\le t'.$  The family $(\hat{\mathbb{P}}_t^\phi)_{t\ge0}$
is called the consistent family of the {\em eigen-measures} with respect to $\phi.$

\begin{assume}\label{assume:HS}
	There exists an eigenpair $(\lambda,\phi)$ of the operator $\mathcal{P}$ with $\lambda>0$ 
	such that  
	the process $X$ is recurrent on the consistent probability space 
	$(\Omega,\mathcal{F},(\mathcal{F}_t)_{t\ge0},(\hat{\mathbb{P}}_t^\phi)_{t\ge0}).$
\end{assume}

\noindent In this case, the positive number $\lambda$, the measurable function $\phi$, and the pair $(\lambda,\phi)$ are called the {\em recurrent eigenvalue}, the {\em recurrent eigenfunction}, and the {\em recurrent eigenpair}, respectively. Several studies have investigated the existence of the recurrent eigenpair, e.g., \cite[Section 9]{hansen2009long} and
\cite[Section 5]{qin2016positive}.
Hereafter, we use the notations $M,$  $\hat{\mathbb{P}},$ $(\hat{\mathbb{P}}_t)_{t\ge0},$ and  $(\hat{B}_t)_{t\ge0}$  instead of 
$M^\phi,$  $\hat{\mathbb{P}}^\phi,$ $(\hat{\mathbb{P}}_t^\phi)_{t\ge0},$ and $(\hat{B}_t^\phi)_{t\ge0},$ respectively, without $\phi.$
These notations are not confusing because 
the recurrent eigenpair $(\lambda,\phi)$ is unique if it exists (Proposition 7.2 in \cite{hansen2009long}
and Theorem 3.1 in \cite{qin2016positive}).

\begin{remark}\label{rem:HS}
	In Assumption \ref{assume:HS}, we assume that the recurrent eigenvalue $\lambda$ is positive. This does not hold in general, however it is known that if   $r\ge0$ and $r\not\equiv 0$, then $\lambda>0.$ Refer to \cite[Theorem 3.3 (iii) on page 148]{pinsky1995positive}.
\end{remark}

The next statement is about some regularity condition on the recurrent eigenfunction.
\begin{assume} \label{assume:diff_phi}
	The recurrent eigenfunction $\phi$ is twice continuously differentiable on $\mathcal{D}.$
\end{assume}
\noindent We provide a sufficient condition to satisfy this assumption.
Assumption \ref{assume:diff_phi} holds if 
$r$ satisfies the
Lipschitz condition on every open subinterval $\mathcal{D}'$ with $\overline{\mathcal{D}'}\subseteq\mathcal{D}.$
This is obtained from \cite[Theorem 3.2 (ii) on page 146]{pinsky1995positive}, which implies that
the recurrent eigenfunction $\phi$ is twice continuously differentiable on $\mathcal{D}$ (in fact, $\phi$ is in $C_L(D)$) if Assumption $\tilde{H}_{\textnormal{loc}}$ holds.
See \cite[pages 123 and 144]{pinsky1995positive} for definitions of Assumption $\tilde{H}_{\textnormal{loc}}$ and $C_L(D),$ respectively.
Since
$b,\sigma:\mathcal{D}\to\mathbb{R}$  are continuously differentiable and $\sigma>0$ 
by Assumption \ref{assume:a}, 
Assumption $\tilde{H}_{\textnormal{loc}}$ holds if $r$ satisfies the
Lipschitz condition on every open subinterval $\mathcal{D}'$ with $\overline{\mathcal{D}'}\subseteq\mathcal{D}.$


We now describe the dynamics of $X$ under the recurrent eigen-measure.
By the Ito formula, 
$$M_T^\phi=e^{\int_0^T(\sigma\phi'/\phi)(X_s)\,dB_s-\frac{1}{2}\int_0^T(\sigma\phi'/\phi)^2(X_s)\,ds}\,.$$ 
From the Girsanov theorem, the process
$$\hat{B}_t^\phi=-\int_0^t(\sigma\phi'/\phi)(X_s)\,ds+B_t\,,\;0\le t\le T$$
is a Brownian motion on the filtered probability space $(\Omega,\mathcal{F}_T,(\mathcal{F}_t)_{0\le t\le T},\hat{\mathbb{P}}_T^\phi)$ for each $T\ge0.$ In other words, 
the process $(\hat{B}_t^\phi)_{t\ge0}$ is a Brownian motion
on the consistent probability space 
$(\Omega,\mathcal{F},(\mathcal{F}_t)_{t\ge0},(\hat{\mathbb{P}}_t^\phi)_{t\ge0})$ (see Definition 
\ref{defi:consistent_prob}).
Similarly,  
the process
$X$ satisfies
$$dX_t=(b+\sigma^2\phi'/\phi)(X_t)\,dt+\sigma(X_t)\,d\hat{B}_t^\phi $$
on the consistent probability space 
$(\Omega,\mathcal{F},(\mathcal{F}_t)_{t\ge0},(\hat{\mathbb{P}}_t^\phi)_{t\ge0}).$

In this paper, a two-variable function $f=f(t,x)$ is said to be $C^{1,2}$ if
$f$ is once continuously  differentiable with respect to $t$ and twice continuously differentiable with respect to $x.$ 

\begin{assume}\label{assume:h}
	The function $h:\mathcal{D}\to\mathbb{R}$ is continuously differentiable. The function 
	\begin{equation}
	\label{eqn:1st_order_f}
	f(t,x):=\mathbb{E}_x^{\mathbb{P}}
	[M_t(h/\phi)(X_t)]=\mathbb{E}_x^{\hat{\mathbb{P}}}
	[(h/\phi)(X_t)]
	\end{equation}   is  $C^{1,2}$  and  converges to a nonzero constant as $t\to\infty$ for each $x\in\mathcal{D}.$  This function $f$ is referred to as the {\em remainder function}.  
\end{assume}
\noindent A sufficient condition for the function $f$ to be $C^{1,2}$
is given at \cite[Theorem 1]{heath2000martingales}.
One can also find a sufficient condition for the convergence of $f(t,x)$ to a nonzero constant as $t\to\infty$ from \cite[Theorem 5.2]{qin2016positive}.

For a given quadruple of functions $(b,\sigma,r,h)$ satisfying   Assumptions \ref{assume:a}--\ref{assume:h}
on the consistent probability space $(\Omega,\mathcal{F},(\mathcal{F}_t)_{t\ge0},(\mathbb{P}_t)_{t\ge0})$ with  Brownian motion $(B_t)_{t\ge0},$
we have constructed the process $X,$ the pricing operator $\mathcal{P}$, the recurrent eigenpair $(\lambda,\phi),$ 
the martingale $M,$ the remainder function $f,$ 
the consistent family  $(\hat{\mathbb{P}}_t)_{t\ge0}$ of recurrent eigen-measures, and the Brownian motion $(\hat{B}_t)_{t\ge0}.$
Hereafter,   these objects
$$X,\,\mathcal{P},\,(\lambda,\phi),\,M,\,f,\,(\hat{\mathbb{P}}_t)_{t\ge0},\,(\hat{B}_t)_{t\ge0}$$
appear frequently and the notations are self-explanatory.

Under these assumptions, the {\em Hansen--Scheinkman decomposition} is a very useful tool for large-time analysis.
From Eq.\eqref{eqn:M},
the discount factor $e^{-\int_0^T r(X_t)\,dt}$ can be written as  
$$e^{-\int_0^T r(X_s)\,ds}=M_T e^{-\lambda T}\frac{\phi(\xi)}{\phi(X_T)}\,.$$
This expression is referred to as the Hansen--Scheinkman decomposition. 
The expectation $p_T$ satisfies
\begin{equation}  \label{eqn:HS_tranform}
\begin{aligned}
p_T=\mathbb{E}_\xi^{\mathbb{P}}[e^{-\int_0^T r(X_s)\,ds} h(X_T)]
&=\phi(\xi)\,e^{-\lambda T}\,\mathbb{E}_\xi^{\mathbb{P}}
[M_T(h/\phi)(X_T)]\\
&=\phi(\xi)\,e^{-\lambda T}\,
\mathbb{E}_\xi^{\hat{\mathbb{P}}}
[(h/\phi)(X_T)]\\
&=\phi(\xi)e^{-\lambda T}
f(T,\xi)\,.
\end{aligned}
\end{equation}
Because $f(T,\xi)$ converges to a  nonzero constant as $T\to\infty,$ we obtain the inequality
\begin{equation}
\label{eqn:decay}
|p_T|\leq c e^{-\lambda T}  
\end{equation}  
for some positive constant $c,$ which is independent of $T.$
This implies that the large-time behavior of $p_T$ is governed 
by the recurrent eigenvalue.

\section{Long-term sensitivity analysis}
\label{sec:long_term_sen_analy}

This section develops the Hansen--Scheinkman decomposition presented above to investigate the large-time behavior of the sensitivities.

\subsection{Sensitivity of the initial value}

We begin with the initial-value sensitivity, i.e., the extent to which the expectation \begin{equation}\label{eqn:main_p_T}
	p_T:=\mathbb{E}_\xi^{\mathbb{P}}[e^{-\int_0^T r(X_s)\,ds} h(X_T)]  
\end{equation}
is
affected by small changes of the initial value $\xi=X_0$ of the underlying Markov diffusion.
The first-order and second-order sensitivities
with respect to the initial value are demonstrated below.

\subsubsection{First-order sensitivity}
\label{sec:1st_sen}

In this section, we develop the first-order initial-value sensitivity for large time $T.$
For a given 
quadruple of functions $(b,\sigma,r,h)$ satisfying Assumptions \ref{assume:a}--\ref{assume:h}, the large-time asymptotic behavior of the partial derivative 
$$\partial_\xi p_T=\partial_\xi \mathbb{E}_\xi^{\mathbb{P}}[e^{-\int_0^T r(X_s)\,ds} h(X_T)]$$
is of interest to us.
For notational simplicity, we define
$$\kappa:=b+\sigma^2\phi'/\phi\,.$$
If
$(\kappa+\sigma'\sigma,\sigma,-\kappa',(h/\phi)')$
also satisfies Assumptions \ref{assume:a}--\ref{assume:h} on the consistent probability space 
$(\Omega,\mathcal{F},(\mathcal{F}_t)_{t\ge0},(\hat{\mathbb{P}}_t)_{t\ge0})$ with Brownian motion $(\hat{B}_t)_{t\ge0},$
we can construct the corresponding objects $$\hat{X},\,\hat{\mathcal{P}},\,(\hat{\lambda},\hat{\phi}),\,\hat{M},\,\hat{f},\,(\tilde{\mathbb{P}}_t)_{t\ge0}\,,(\tilde{B}_t)_{t\ge0}$$
and these notations are self-explanatory.
For example, $\hat{X}$ satisfies
\begin{equation}
	\label{eqn:hat_X}
	d\hat{X}_t=(\kappa+\sigma'\sigma)(\hat{X}_t)\,dt+\sigma(\hat{X}_t)\,d\hat{B}_t\,.
\end{equation} 
These objects will be used in the statement and the proof of the following theorem.

\begin{thm}\label{thm:initial_main_thm}
	Suppose that the partial derivative $f_{tx}$ exists and  the following conditions hold.  
	\begin{enumerate}[(i)]
		\item The quadruple of functions $(b,\sigma,r,h)$ satisfies Assumptions \ref{assume:a}--\ref{assume:h} on the consistent probability space $(\Omega,\mathcal{F},(\mathcal{F}_t)_{t\ge0},(\mathbb{P}_t)_{t\ge0})$ with  Brownian motion $(B_t)_{t\ge0}.$
		\item The quadruple of functions  $(\kappa+\sigma'\sigma,\sigma,-\kappa',(h/\phi)')$ satisfies Assumptions \ref{assume:a}--\ref{assume:h}
		on the consistent probability space $(\Omega,\mathcal{F},(\mathcal{F}_t)_{t\ge0},(\hat{\mathbb{P}}_t)_{t\ge0})$ with Brownian motion $(\hat{B}_t)_{t\ge0}.$
	\end{enumerate}  
	Then,  for each $T>0$, 
	the process $(f_x(T-t,\hat{X}_t)e^{\int_0^t \kappa'(\hat{X}_s)\,ds})_{0\le t\le T}$ is a local martingale under the probability measure  $\hat{\mathbb{P}}_T.$  
	For each $T>0$, if   this process is a martingale under the probability measure $\hat{\mathbb{P}}_T$   or if  
	the function $f_x$ satisfies
	\begin{equation}
		\label{eqn:f_x_exp_repn}
		f_x(T,x)=\mathbb{E}^{\hat{\mathbb{P}}}[ e^{\int_0^T \kappa'(\hat{X}_s)\,ds}(h/\phi)'(\hat{X}_T)|\hat{X}_0=x]\,,
	\end{equation}
	then   
	\begin{equation}
		\label{eqn:1st_sen_conv_rate}
		\left|\frac{\partial_\xi p_T}{p_T}
		-\frac{\phi'(\xi)}{\phi(\xi)}\right|\leq  ce^{-\hat{\lambda} T}\,,\;T\ge0
	\end{equation} 
	for some positive constant $c,$ which is dependent of $\xi$ but independent of $T.$
\end{thm}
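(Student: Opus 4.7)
The plan is to differentiate the Hansen--Scheinkman identity $p_T=\phi(\xi)e^{-\lambda T}f(T,\xi)$ from \eqref{eqn:HS_tranform} in the initial value $\xi$, producing
\begin{equation}
\frac{\partial_\xi p_T}{p_T}=\frac{\phi'(\xi)}{\phi(\xi)}+\frac{f_x(T,\xi)}{f(T,\xi)},
\end{equation}
and then to control the ratio $f_x/f$. Because $f(T,\xi)$ converges to a nonzero constant by Assumption~\ref{assume:h} and must remain nonzero in order that $\partial_\xi p_T/p_T$ be well-defined, $|f(T,\xi)|$ is bounded away from $0$ uniformly in $T$, so the task reduces to showing $|f_x(T,\xi)|\leq c'e^{-\hat{\lambda} T}$. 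The key insight is that $f_x$ itself is a pricing expectation of Hansen--Scheinkman type for the ``derived'' quadruple $(\kappa+\sigma'\sigma,\sigma,-\kappa',(h/\phi)')$, so the decay bound \eqref{eqn:decay} can be invoked a second time.

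First I would derive the backward PDE satisfied by $f$. From \eqref{eqn:1st_order_f} and the $\hat{\mathbb{P}}$--dynamics $dX_t=\kappa(X_t)\,dt+\sigma(X_t)\,d\hat{B}_t$, the Feynman--Kac formula gives $f_t=\tfrac{1}{2}\sigma^2 f_{xx}+\kappa f_x$ with $f(0,x)=(h/\phi)(x)$. Differentiating once in $x$ yields
\begin{equation}
(f_x)_t=\tfrac{1}{2}\sigma^2(f_x)_{xx}+(\kappa+\sigma'\sigma)(f_x)_x+\kappa'\,f_x,\qquad f_x(0,x)=(h/\phi)'(x),
\end{equation}
which is the backward Kolmogorov equation for the diffusion $\hat{X}$ of \eqref{eqn:hat_X} with ``killing rate'' $-\kappa'$. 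Applying It\^o's formula to
\begin{equation}
N_t:=f_x(T-t,\hat{X}_t)\,\exp\Bigl(\int_0^t \kappa'(\hat{X}_s)\,ds\Bigr)
\end{equation}
and substituting this PDE for $f_x$ makes the finite-variation terms cancel, so $N$ is a local $\hat{\mathbb{P}}_T$--martingale, which is the first assertion of the theorem. Under either of the supplied hypotheses (that $N$ is a true martingale, or that \eqref{eqn:f_x_exp_repn} holds), evaluating $\mathbb{E}^{\hat{\mathbb{P}}}[N_T\mid\hat{X}_0=x]=N_0$ produces exactly \eqref{eqn:f_x_exp_repn}.

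Next I would recognize \eqref{eqn:f_x_exp_repn} as the pricing operator of the quadruple $(\kappa+\sigma'\sigma,\sigma,-\kappa',(h/\phi)')$ on the consistent space $(\Omega,\mathcal{F},(\mathcal{F}_t)_{t\ge0},(\hat{\mathbb{P}}_t)_{t\ge0})$ with driving Brownian motion $(\hat{B}_t)_{t\ge0}$, applied to $(h/\phi)'$. By hypothesis~(ii) this derived quadruple satisfies Assumptions~\ref{assume:a}--\ref{assume:h}, and hence admits a recurrent eigenpair $(\hat{\lambda},\hat{\phi})$ and a remainder function $\hat{f}$. Invoking the Hansen--Scheinkman decomposition \eqref{eqn:HS_tranform} for this new quadruple gives
\begin{equation}
f_x(T,\xi)=\hat{\phi}(\xi)\,e^{-\hat{\lambda} T}\,\hat{f}(T,\xi).
\end{equation}
Since $\hat{f}(T,\xi)$ is continuous in $T$ and converges to a nonzero constant, it is bounded on $[0,\infty)$, yielding $|f_x(T,\xi)|\leq c_1 e^{-\hat{\lambda} T}$ with $c_1$ depending on $\xi$ but not on $T$. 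Combined with the uniform-in-$T$ lower bound on $|f(T,\xi)|$, this gives \eqref{eqn:1st_sen_conv_rate}.

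The main obstacle is the local-martingale-to-true-martingale upgrade for $N$ in the first branch of the hypotheses; in general this cannot be extracted from Assumptions~\ref{assume:a}--\ref{assume:h} alone, which is precisely why the theorem offers it as a side hypothesis and equivalently allows \eqref{eqn:f_x_exp_repn} to be taken as given. Once this step is granted, the proof is a clean ``bootstrap'' of the Hansen--Scheinkman machinery: differentiating the decomposition once produces a new Feynman--Kac problem whose recurrent eigenvalue $\hat{\lambda}$ controls the exponential rate of convergence of $\partial_\xi p_T/p_T$ to $\phi'(\xi)/\phi(\xi)$.
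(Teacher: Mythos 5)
Your proposal is correct and follows essentially the same route as the paper's own proof: differentiate the Feynman--Kac PDE for $f$ in $x$, show the process $N_t=f_x(T-t,\hat{X}_t)e^{\int_0^t\kappa'(\hat{X}_s)\,ds}$ is a local $\hat{\mathbb{P}}_T$-martingale via It\^o, identify $f_x$ as the pricing operator for the derived quadruple under the stated hypothesis, and apply the Hansen--Scheinkman decomposition a second time to extract the rate $\hat{\lambda}$. The only addition you make beyond the paper is to flag explicitly that $|f(T,\xi)|$ must be bounded away from zero uniformly in $T$ for the final ratio bound, which the paper also relies on implicitly.
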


\begin{proof}   
	Applying the Feynman--Kac formula 
	to Eq.\eqref{eqn:1st_order_f},
	it follows that
	\begin{equation}
		\label{eqn:1st_initial_ode}
		-f_t+\frac{1}{2}\sigma^2(x)f_{xx}+\kappa(x)f_x=0\,,\;f(0,x)=(h/\phi)(x)\,.
	\end{equation}
	Observe that  $f$ is $C^{1,2}$ from Assumption \ref{assume:h}.
	Since every coefficient is continuously differentiable in $x$ in the above PDE, the partial derivatives $f_x$ is also in $C^{1,2}$ by \cite[Section 3.5]{friedman2008partial}.  
	Taking the differentiation in $x,$ we get
	\begin{equation}
		\label{eqn:f_x_PDE}
		-f_{xt}+\frac{1}{2}\sigma^2(x)f_{xxx}+(\kappa+\sigma'\sigma)(x)f_{xx}+\kappa'(x)f_x=0\,,\;f_x(0,x)=(h/\phi)'(x)\,.
	\end{equation}

	Meanwhile, since $(\kappa+\sigma'\sigma,\sigma,-\kappa',(h/\phi)')$ satisfies Assumptions \ref{assume:a}--\ref{assume:h} on the consistent probability space
	$(\Omega,\mathcal{F},(\mathcal{F}_t)_{t\ge0},(\hat{\mathbb{P}}_t)_{t\ge0})$ having Brownian motion $(\hat{B}_t)_{t\ge0},$
	we can construct the corresponding objects $$\hat{X},\,\hat{\mathcal{P}},\,(\hat{\lambda},\hat{\phi}),\,\hat{M},\,\hat{f},\,(\tilde{\mathbb{P}}_t)_{t\ge0}\,,(\tilde{B}_t)_{t\ge0}$$
	and these notations are self-explanatory.
	Since the process $\hat{X}$ satisfies
	$$d\hat{X}_t=(\kappa+\sigma'\sigma)(\hat{X}_t)\,dt+\sigma(\hat{X}_t)\,d\hat{B}_t\,,$$
	by applying the Ito formula to $f_x(T-t,\hat{X}_t)e^{\int_0^t \kappa'(X_s)\,ds},$ we have
	\begin{equation}\label{eqn:1st_order_Ito_formula_f_x}
		\begin{aligned}
			&d\big(f_x(T-t,\hat{X}_t)e^{\int_0^t \kappa'(\hat{X}_s)\,ds}\big)\\
			=&\;e^{\int_0^t \kappa'(\hat{X}_s)\,ds}\Big(-f_{xt}+\frac{1}{2}\sigma^2(\hat{X}_t)f_{xxx}+(\kappa+\sigma'\sigma)(\hat{X}_t)f_{xx}+\kappa'(\hat{X}_s)f_x\Big)\,dt
			+e^{\int_0^t \kappa'(\hat{X}_s)\,ds}\sigma(\hat{X}_t) f_{xx}\,d\hat{B}_t\\
			=&\;e^{\int_0^t \kappa'(\hat{X}_s)\,ds}\sigma(\hat{X}_t) f_{xx}\,d\hat{B}_t.
		\end{aligned}
	\end{equation}
	Thus, for each $T>0$, the  process  $(f_x(T-t,\hat{X}_t)e^{\int_0^t \kappa'(\hat{X}_s)\,ds})_{0\le t\le T}$ is a  local martingale  under the probability measure $\hat{\mathbb{P}}_T.$

	Suppose that $f_x$ satisfies
	\begin{equation}
	\label{eqn:1st_order_f_x}
	f_x(T,x)=\mathbb{E}^{\hat{\mathbb{P}}}[ e^{\int_0^T \kappa'(\hat{X}_s)\,ds}(h/\phi)'(\hat{X}_T)|\hat{X}_0=x]\,.
	\end{equation} 
	Note that this equality holds if   $(f_x(T-t,\hat{X}_t)e^{\int_0^t \kappa'(\hat{X}_s)\,ds})_{0\le t\le T}$ is a martingale since
	\begin{equation}\label{eqn:f_x_prob_repn}
		f_x(T,x)=\mathbb{E}^{\hat{\mathbb{P}}}[ e^{\int_0^T \kappa'(\hat{X}_s)\,ds}f_x(0,\hat{X}_T)|\hat{X}_0=x]=\mathbb{E}^{\hat{\mathbb{P}}}[ e^{\int_0^T \kappa'(\hat{X}_s)\,ds}(h/\phi)'(\hat{X}_T)|\hat{X}_0=x]\,.
	\end{equation}
	Replacing $T$ by $t,$ 
	it follows that
	$$f_x(t,x)=\mathbb{E}^{\hat{\mathbb{P}}}\big[e^{\int_0^t\kappa'(\hat{X}_s)\,ds}(h/\phi)'(\hat{X}_t)\big|\hat{X}_0=x\big]=\hat{\mathcal{P}}_t(h/\phi)'(x)\,.$$
	Now, we apply the Hansen--Scheinkman decomposition here.
	Since $(\hat{\lambda},\hat{\phi})$ is the recurrent eigenpair  and $$\hat{f}(t,x)=\mathbb{E}_x^{\hat{\mathbb{P}}}[\hat{M}_t(h/\phi)'(\hat{X}_t)]=\mathbb{E}_x^{\tilde{\mathbb{P}}}[((h/\phi)'/\hat{\phi})(\hat{X}_t)]$$ is the remainder function,
 we have
	\begin{equation}
		\label{eqn:f_x_general}
		\begin{aligned}
			f_x(t,x)
			&=\hat{f}(t,x)e^{-\hat{\lambda} t}\hat{\phi}(x)\,.
		\end{aligned}
	\end{equation}
	Since $f(t,x)$ and $\hat{f}(t,x)$ converge to nonzero constants as $t\to\infty,$  Eq.\eqref{eqn:HS_tranform} 
	implies that
	\begin{equation}
		\label{eqn:first_order_sensi}
		\left|\frac{\partial_\xi p_T}{p_T}
		-\frac{\phi'(\xi)}{\phi(\xi)}\right|=\left|\frac{f_x(T,\xi)}{f(T,\xi)}\right|=\left|\frac{\hat{f}(T,\xi)}{f(T,\xi)}\right|e^{-\hat{\lambda} T}\hat{\phi}(\xi)\leq ce^{-\hat{\lambda} T}
	\end{equation} 
	for some positive constant $c,$ which is dependent of $\xi$ but independent of $T.$	
\end{proof}

\begin{remark} Theorem \ref{thm:initial_main_thm} has an interesting implication. 
By Assumption \ref{assume:HS}, the recurrent eigenvalue $\hat{\lambda}$ is implicitly assumed to be positive (and this holds if $\kappa'\le 0$ and $\kappa'\not\equiv0$ by Remark \ref{rem:HS}).		
	Eq.\eqref{eqn:1st_sen_conv_rate} says  $$\lim_{T\to\infty}\frac{\partial_\xi p_T}{p_T}=\frac{\phi'(\xi)}{\phi(\xi)}\,.$$
	This means that the large-time behavior of the sensitivity  $\frac{\partial_\xi p_T}{p_T}$ as $T\to\infty$ is expressed in terms of the recurrent eigenfunction $\phi$  induced by $(b,\sigma,r,h).$ However, as one can observe from Eq.\eqref{eqn:1st_sen_conv_rate},  its exponential convergence rate  to the limit is determined
	by the recurrent eigenvalue $\hat{\lambda}$ induced  by $(\kappa+\sigma'\sigma,\sigma,-\kappa',(h/\phi)').$ 
\end{remark}

\begin{remark}
The methodology used in this paper is not applicable for
multi-dimensional cases.
The main idea of the methodology is 
the PDE representation in  Eq.\eqref{eqn:f_x_PDE}.
Since the Feynman-Kac formula can be used to this form of PDE, we derived Eq.\eqref{eqn:1st_order_f_x} and then applied the Hansen--Scheinkman decomposition.
If the underling process $X$ is multi-dimensional, 
we can also derive a PDE corresponding to   Eq.\eqref{eqn:f_x_PDE} by following the same procedure, however the Feynman-Kac formula cannot be used to this form of PDE. As a result, our analysis is not applicable if the underling process $X$ is multi-dimensional.
\end{remark}

\begin{remark}	Under the same hypothesis of Theorem \ref{thm:initial_main_thm}, 
	the probabilistic representation 
	$$	f_x(T,x)=\mathbb{E}^{\hat{\mathbb{P}}}[ e^{\int_0^T \kappa'(\hat{X}_s)\,ds}(h/\phi)'(\hat{X}_T)|\hat{X}_0=x]$$
	stated in Eq.\eqref{eqn:f_x_exp_repn} holds  if
	$$\mathbb{E}_\xi^{\hat{\mathbb{P}}}\Big[\int_0^Te^{2\int_0^t \kappa'(\hat{X}_s)\,ds}\sigma^2(\hat{X}_t) f_{xx}^2(T-t,\hat{X}_t)  \,dt\Big]<\infty\,.$$ 
	This is evident from Eq.\eqref{eqn:1st_order_Ito_formula_f_x} since 	
	the  process  $(f_x(T-t,\hat{X}_t)e^{\int_0^t \kappa'(\hat{X}_s)\,ds})_{0\le t\le T}$ is a  martingale  under the probability measure $\hat{\mathbb{P}}_T.$
\end{remark}

Recall  the hypothesis in Theorem \ref{thm:initial_main_thm}:
the process $(f_x(T-t,\hat{X}_t)e^{\int_0^t \kappa'(\hat{X}_s)\,ds})_{0\le t\le T}$ is a  $\mathbb{P}_T$-martingale  or  	the function $f_x$ satisfies Eq.\eqref{eqn:f_x_exp_repn}.
It can be easily shown that these two statements are equivalent.
In the remainder of this section,
we investigate a sufficient condition  to guarantee that this hypothesis holds, i.e., the function $f_x$ satisfies 
\begin{equation}
	\label{eqn:delta_condi_mart}
	f_x(T,x)=\mathbb{E}^{\hat{\mathbb{P}}}[ e^{\int_0^T \kappa'(\hat{X}_s)\,ds}(h/\phi)'(\hat{X}_T)|\hat{X}_0=x]
\end{equation}
presented in Eq.\eqref{eqn:f_x_exp_repn}.
The sufficient condition is based on the Feynman--Kac formula and is given in Proposition \ref{prop:1st_initial_sen_linear}.

Before stating  Proposition \ref{prop:1st_initial_sen_linear}, we present a slight modification of the standard Feynman--Kac formula
\cite[Theorem 5.7.6 on page 366]{karatzas1991brownian}
in  the remark below. 
Recall that $\mathcal{D}$ is an open interval in $\mathbb{R}.$
We say that a function $f$  on $\mathcal{D}$ has polynomial growth (respectively, linear growth) if there is a constant $C>0$ and  $m\in\mathbb{N}$ (respectively, $m=1$) such that for all $x\in\mathcal{D}$,
$$|f(x)|\leq C (1+|x|^m)\,.$$

\begin{remark}\label{remark:FK}
	(Feynman--Kac formula)  Consider a quadruple of functions $(b,\sigma,r,h)$ defined on an open interval $\mathcal{D}\subseteq\mathbb{R}.$ Suppose that the following conditions hold.
	\begin{enumerate}[(i)]
		\item Two functions $b$ and $\sigma$ are continuous and have linear growth.
		\item For each $x\in\mathcal{D},$ the stochastic differential equation (SDE)  
		$$dX_t=b(X_t)\,dt+\sigma(X_t)\,dB_t\,,\;X_0=x$$	
		has a unique strong solution on $\mathcal{D},$ i.e., $\mathbb{P}_T(X_t\in\mathcal{D} \textnormal{ for all } 0\le t\le T)=1$ for each $T\ge0.$
		\item Three functions $r,$  $h$ and $g$   are continuous. 
		In addition,  $r$ is bounded below,  $h$ has polynomial growth or is nonnegative, and $\max_{0\le t\le T}|g(t,\cdot)|$ has polynomial growth  or $g$ is nonnegative. 
	\end{enumerate}	    
	If $f(t,x)$ is $C^{1,2}$ and satisfies
	\begin{equation}
		\label{eqn:PDE_FK_usual}
		-f_t(t,x)+\frac{1}{2}\sigma^2(x)f_{xx}(t,x)+b(x)f_x(t,x)-r(x)f(t,x)+g(t,x)=0\,,\;f(0,x)=h(x)
	\end{equation} 
	as well as the polynomial growth condition
	$$\max_{0\le t\le T}|f(t,x)|\leq C(1+|x|^m)$$
	for some $C>0$ and $m\ge1,$ then
	\begin{equation}
		\label{eqn:FK_usual_form}
		f(t,x)=	\mathbb{E}\Big[e^{-\int_0^tr(X_s)\,ds}h(X_t)+\int_0^tg(T-s,X_s)e^{-\int_0^sr(X_u)\,du}\,ds\Big|X_0=x\Big]
	\end{equation} 
	for $x\in\mathcal{D}$ and $0\le t\le T.$
\end{remark}

The Feynman--Kac formula stated above differs from the standard statement \cite[Theorem 5.7.6 on page 366]{karatzas1991brownian}  in two ways.
First, the domain in this case is an open interval $\mathcal{D}$, whereas the domain is the whole real line $\mathbb{R}$ in the case of the standard statement. This is easily verified 
because the proof of the standard statement  can be directly applied.
Second, the time-derivative term $-f_t$  in the PDE \eqref{eqn:PDE_FK_usual} has a negative sign. 
The PDE in the standard statement is expressed in the time-reverse order, i.e., the final-time condition $f(T,\cdot)$ is given.  
However, in this paper, we are interested in the time order, i.e., the initial-time condition $f(0,\cdot)$ is given.
The time-reverse order can be easily changed to the time order by using the Markov property.
For fixed $T>0,$ define $F(t,x)=f(T-t,x).$ Then, Eq.\eqref{eqn:PDE_FK_usual} becomes
$$F_t(t,x)+\frac{1}{2}\sigma^2(x)F_{xx}(t,x)+b(x)F_x(t,x)-r(x)F(t,x)+g(T-t,x)=0\,,\;F(T,x)=h(x).$$
From the standard  Feynman--Kac formula, we know that
\begin{equation}
	\begin{aligned}
		F(t,x)
		&=\mathbb{E}\Big[e^{-\int_t^Tr(X_s)\,ds}h(X_T)+\int_t^Tg(T-s,X_s)e^{-\int_t^sr(X_u)\,du}\,ds\Big|X_t=x\Big]. 
	\end{aligned}
\end{equation}
Thus,
\begin{equation}
	\begin{aligned}
		f(T,x)=F(0,x)=
		\mathbb{E}\Big[e^{-\int_0^Tr(X_s)\,ds}h(X_T)+\int_0^Tg(T-s,X_s)e^{-\int_0^sr(X_u)\,du}\,ds\Big|X_0=x\Big]. 
	\end{aligned}
\end{equation}
Replacing $T$ by $t,$ we obtain Eq.\eqref{eqn:FK_usual_form}.


The Feynman--Kac formula in the above-mentioned remark occasionally cannot be applied to obtain the representation
$$f_x(T,x)=\mathbb{E}^{\hat{\mathbb{P}}}[ e^{\int_0^T \kappa'(\hat{X}_s)\,ds}(h/\phi)'(\hat{X}_T)|\hat{X}_0=x]$$
because it requires the function $\max_{0\le t\le T}|f_x(t,x)|$   to have polynomial growth in $x.$ Some financial models do not satisfy this condition (e.g., the CIR model in Appendix \ref{app:CIR_model}). 
One way to overcome this problem is to consider 
$f_x(t,x)\phi(x)$ instead of $f_x(t,x).$
The main aspect of Proposition \ref{prop:1st_initial_sen_linear}
is that the   polynomial growth condition on $\max_{0\le t\le T}|f_x(t,x)|$ can be replaced by  the   polynomial growth condition on $\max_{0\le t\le T}|f_x(t,x)|\phi(x).$
Indeed, in the CIR model, the function $\max_{0\le t\le T}|f_x(t,x)|\phi(x)$
has polynomial growth in $x,$ whereas $\max_{0\le t\le T}|f_x(t,x)|$ does not.

\begin{prop}\label{prop:1st_initial_sen_linear}
	Suppose that the following conditions hold.  
	\begin{enumerate}[(i)]
		\item The quadruple of functions $(b,\sigma,r,h)$ satisfies Assumptions \ref{assume:a}--\ref{assume:h} on the consistent probability space $(\Omega,\mathcal{F},(\mathcal{F}_t)_{t\ge0},(\mathbb{P}_t)_{t\ge0})$ with  Brownian motion $(B_t)_{t\ge0}.$
		\item The quadruple of functions  $(\kappa+\sigma'\sigma,\sigma,-\kappa',(h/\phi)')$ satisfies Assumptions \ref{assume:a}--\ref{assume:h}
		on the consistent probability space $(\Omega,\mathcal{F},(\mathcal{F}_t)_{t\ge0},(\hat{\mathbb{P}}_t)_{t\ge0})$ with Brownian motion $(\hat{B}_t)_{t\ge0}.$
	\end{enumerate}  		
	Furthermore, assume the following conditions for given $T>0.$
	\begin{enumerate}[(i)]
		\item Two functions $\kappa+\sigma'\sigma-\sigma^2\phi'/\phi$ and $\sigma$ have linear growth.
		\item The function   ${(\hat{\mathcal{L}}\phi)}/{\phi}-\kappa'$ is bounded below, where
		\begin{equation}
		\label{eqn:hat_L}
		\hat{\mathcal{L}}=\frac{1}{2}\sigma^2(x)\partial_{xx}+(\kappa+\sigma'\sigma-\sigma^2\phi'/\phi)(x)\partial_x\,.
		\end{equation}
		\item The function $h'-h\phi'/\phi$ has polynomial growth or is nonnegative.
		\item The function $\max_{0\le t\le T}|f_x(t,x)|\phi(x)$
		has polynomial growth in $x.$ 
		\item A local martingale 
		\begin{equation}
		\label{eqn:f}
		\frac{\phi(\hat{X}_0)}{\phi(\hat{X}_t)}e^{\int_0^t\frac{\hat{\mathcal{L}}\phi(\hat{X}_s)}{\phi(\hat{X}_s)}\,ds}\,,\;0\le t\le T
		\end{equation} 	
		is a martingale under the probability measure $\hat{\mathbb{P}}_T.$
	\end{enumerate}
	Then,  the process $(f_x(T-t,\hat{X}_t)e^{\int_0^t \kappa'(\hat{X}_s)\,ds})_{0\le t\le T}$ is a martingale  under the probability measure $\hat{\mathbb{P}}_T.$ In particular, 
	$$f_x(T,x)=\mathbb{E}^{\hat{\mathbb{P}}}[ e^{\int_0^T \kappa'(\hat{X}_s)\,ds}(h/\phi)'(\hat{X}_T)|\hat{X}_0=x]\,.$$
\end{prop}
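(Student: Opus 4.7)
The strategy is to bypass the difficulty that $f_x$ itself may fail to have polynomial growth by working instead with the rescaled function $u(t,x):=f_x(t,x)\phi(x)$, which is exactly the quantity controlled by hypothesis (iv). I will derive a PDE for $u$, apply the Feynman--Kac formula of Remark \ref{remark:FK} to it under an auxiliary measure, and convert the resulting representation back to \eqref{eqn:delta_condi_mart}; the martingale statement will then follow at once from the Markov property.

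The first step is to convert the PDE \eqref{eqn:f_x_PDE} into a PDE for $u$. Using $\phi f_x=u$, $\phi f_{xx}=u_x-(\phi'/\phi)u$, and $\phi f_{xxx}=u_{xx}-2(\phi'/\phi)u_x+[2(\phi'/\phi)^2-\phi''/\phi]u$, multiplying \eqref{eqn:f_x_PDE} by $\phi$ and regrouping terms yields
\begin{equation}
-u_t+\tfrac{1}{2}\sigma^2 u_{xx}+\bigl(\kappa+\sigma'\sigma-\sigma^2\phi'/\phi\bigr)u_x-\bigl(\hat{\mathcal{L}}\phi/\phi-\kappa'\bigr)u=0,\qquad u(0,x)=h'(x)-h(x)\phi'(x)/\phi(x),
\end{equation}
where the coefficient of $u$ collapses to $\kappa'-\hat{\mathcal{L}}\phi/\phi$ precisely because the drift in $\hat{\mathcal{L}}$ of \eqref{eqn:hat_L} is $\kappa+\sigma'\sigma-\sigma^2\phi'/\phi$. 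I expect this algebraic identity to be the main obstacle: it is the one point at which the particular form of $\hat{\mathcal{L}}$ is indispensable, and without it the rescaling trick would not yield a clean discount term.

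The second step is to apply Feynman--Kac to the PDE for $u$. Under $\hat{\mathbb{P}}_T$ the drift of $\hat{X}$ is $\kappa+\sigma'\sigma$, which does not match the drift in the PDE, so I introduce an equivalent measure $\tilde{\mathbb{P}}_T$ on $\mathcal{F}_T$ by $d\tilde{\mathbb{P}}_T/d\hat{\mathbb{P}}_T=Z_T$, where $Z_T$ is the value at $T$ of the martingale in hypothesis (v). Applying Ito's formula to $\log\phi(\hat{X}_t)$ rewrites $Z_t$ as the stochastic exponential of $-\int_0^\cdot(\sigma\phi'/\phi)(\hat{X}_s)\,d\hat{B}_s$, so Girsanov's theorem produces a $\tilde{\mathbb{P}}_T$-Brownian motion $\tilde{B}$ with $d\hat{X}_t=(\kappa+\sigma'\sigma-\sigma^2\phi'/\phi)(\hat{X}_t)\,dt+\sigma(\hat{X}_t)\,d\tilde{B}_t$ on $[0,T]$; equivalence of the two measures keeps $\hat{X}$ inside $\mathcal{D}$. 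The hypotheses of Remark \ref{remark:FK} are then checked term by term against (i)--(iv) of the proposition: linear growth of drift and diffusion from (i); continuity (via Assumption \ref{assume:diff_phi}) and the bounded-below property of $\hat{\mathcal{L}}\phi/\phi-\kappa'$ from (ii); growth or positivity of $h'-h\phi'/\phi$ from (iii); and polynomial growth of $\max_{0\le t\le T}|u(t,x)|=\max_{0\le t\le T}|f_x(t,x)|\phi(x)$ from (iv). Feynman--Kac then delivers
\begin{equation}
u(T,x)=\mathbb{E}^{\tilde{\mathbb{P}}}_x\Bigl[e^{-\int_0^T(\hat{\mathcal{L}}\phi/\phi-\kappa')(\hat{X}_s)\,ds}\bigl(h'-h\phi'/\phi\bigr)(\hat{X}_T)\Bigr].
\end{equation}

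The final step is to unwind the change of measure and deduce the martingale property. Reverting to $\hat{\mathbb{P}}_T$ via $Z_T$, the product $Z_T\,e^{-\int_0^T(\hat{\mathcal{L}}\phi/\phi-\kappa')(\hat{X}_s)\,ds}$ telescopes to $(\phi(x)/\phi(\hat{X}_T))\,e^{\int_0^T\kappa'(\hat{X}_s)\,ds}$; invoking the identity $(h'-h\phi'/\phi)/\phi=(h/\phi)'$ together with $u=f_x\phi$ then divides out $\phi(x)$ and reproduces exactly the representation \eqref{eqn:delta_condi_mart}. Replacing $T$ by $T-t$ and starting from $\hat{X}_t$, the Markov property identifies $f_x(T-t,\hat{X}_t)\,e^{\int_0^t\kappa'(\hat{X}_s)\,ds}$ with $\mathbb{E}^{\hat{\mathbb{P}}}\bigl[e^{\int_0^T\kappa'(\hat{X}_s)\,ds}(h/\phi)'(\hat{X}_T)\mid\mathcal{F}_t\bigr]$, which is a $\hat{\mathbb{P}}_T$-martingale by the tower rule.
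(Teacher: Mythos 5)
Your proof is correct and follows essentially the same route as the paper's: set $g:=f_x\phi$ (your $u$), derive the PDE for $g$ from Eq.\eqref{eqn:f_x_PDE}, change measure via the martingale in hypothesis (v), apply the Feynman--Kac formula of Remark \ref{remark:FK} under the new measure, convert back to $\hat{\mathbb{P}}$ through the Radon--Nikodym density, and conclude with the time-homogeneous Markov property. One small caution on notation: the paper already reserves $\tilde{\mathbb{P}}_T$ for the recurrent eigen-measure induced by the quadruple $(\kappa+\sigma'\sigma,\sigma,-\kappa',(h/\phi)')$, so the auxiliary measure you call $\tilde{\mathbb{P}}_T$ is the paper's $\mathbb{Q}_T$; apart from that relabelling, the argument matches step for step.
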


\begin{proof}
	Recall that 
	$(\Omega,\mathcal{F},(\mathcal{F}_t)_{t\ge0},(\hat{\mathbb{P}}_t)_{t\ge0})$ is a consistent probability space having  Brownian motion $(\hat{B}_t)_{t\ge0}$
	and the process $(\hat{X}_t)_{t\ge0}$ satisfies Eq.\eqref{eqn:hat_X}.
	One can show that 
	$$\frac{\phi(\hat{X}_0)}{\phi(\hat{X}_t)}e^{\int_0^t\frac{\hat{\mathcal{L}}\phi(\hat{X}_s)}{\phi(\hat{X}_s)}\,ds}\,,\;0\le t\le T$$
	is a $\hat{\mathbb{P}}_T$-local martingale by applying the Ito formula and checking that the $dt$-term vanishes.
	Since this process is assumed to be a martingale,
	we can define a new measure $\mathbb{Q}_T$ on $\mathcal{F}_T$ as
	\begin{equation}
		\label{eqn:FK_RN}
		\frac{d\mathbb{Q}_T}{d\hat{\mathbb{P}}_T}=\frac{\phi(\hat{X}_0)}{\phi(\hat{X}_T)}e^{\int_0^T\frac{\hat{\mathcal{L}}\phi(\hat{X}_s)}{\phi(\hat{X}_s)}\,ds}\,.
	\end{equation} 
	It is easy to check that the family $(\mathbb{Q}_t)_{t>0}$ 
	of probability measures is consistent and 
	the process 
	$$B_t^\mathbb{Q}:=\hat{B}_t+(\sigma \phi'/\phi)(\hat{X}_t)\,,\;t\ge0$$
	is a Brownian motion on the consistent probability space $(\Omega,\mathcal{F},(\mathcal{F}_t)_{t\ge0},(\mathbb{Q}_t)_{t\ge0}).$ 
	The process  $\hat{X}$ satisfies
	\begin{equation}
		\begin{aligned}
			d\hat{X}_t
			&=(\kappa+\sigma'\sigma)(\hat{X}_t)\,dt+\sigma(\hat{X}_t)\,d\hat{B}_t\\
			&=(\kappa+\sigma'\sigma-\sigma^2\phi'/\phi)(\hat{X}_t)\,dt+\sigma(\hat{X}_t)\,d{B}_t^{\mathbb{Q}} \,.
		\end{aligned}
	\end{equation}
	Note that 
	$\hat{\mathcal{L}}$
	is the generator of $\hat{X}$ under the family of probability measures $(\mathbb{Q}_t)_{t\ge0}.$

	Define
	$$g(t,x):=f_x(t,x)\phi(x)\,.$$
	Then, Eq.\eqref{eqn:f_x_PDE} gives
	\begin{equation}
		\begin{aligned}
			&-g_t+\frac{1}{2}\sigma^2(x)g_{xx}+(\kappa+\sigma'\sigma-\sigma^2\phi'/\phi)(x)g_x+(\kappa'-{(\hat{\mathcal{L}}\phi)}/{\phi})(x)g=0\\
			&g(0,x)=(h/\phi)'(x)\phi(x)=(h'-h\phi'/\phi)(x)\,.
		\end{aligned}
	\end{equation}
	The Feynman--Kac formula (Remark \ref{remark:FK})
	states that
	\begin{equation}
		\begin{aligned}
			f_x(t,x)\phi(x)=g(t,x)&=\mathbb{E}^{{\mathbb{Q}}}\big[
			e^{\int_0^t(\kappa'-\frac{\hat{\mathcal{L}}\phi}{\phi})(\hat{X}_s)\,ds}\phi(\hat{X}_{t})(h/\phi)'(\hat{X}_t)\big|\hat{X}_0=x\big]\\
			&=\mathbb{E}^{{\mathbb{Q}}}\Big[
			e^{-\int_0^t\frac{\hat{\mathcal{L}}\phi}{\phi}(\hat{X}_s)\,ds}\,\frac{\phi(\hat{X}_{t})}{\phi(\hat{X}_{0})}
			e^{\int_0^t\kappa'(\hat{X}_s)\,ds}(h/\phi)'(\hat{X}_t)\Big|\hat{X}_0=x\Big]\phi(x)\\
			&=\mathbb{E}^{{\mathbb{Q}}}\Big[
			\frac{d\hat{\mathbb{P}}_t}{d\mathbb{Q}_t}
			e^{\int_0^t\kappa'(\hat{X}_s)\,ds}(h/\phi)'(\hat{X}_t)\Big|\hat{X}_0=x\Big]\phi(x)\\
			&=\mathbb{E}^{\hat{\mathbb{P}}}\big[
			e^{\int_0^t\kappa'(\hat{X}_s)\,ds}(h/\phi)'(\hat{X}_t)\big|\hat{X}_0=x\big]\phi(x)\,,
		\end{aligned}
	\end{equation}
	which implies that
	$$f_x(t,x)=\mathbb{E}^{\hat{\mathbb{P}}}\big[e^{\int_0^t\kappa'(\hat{X}_s)\,ds}(h/\phi)'(\hat{X}_t)\big|\hat{X}_0=x\big]\,.$$  
	From the time-homogeneous Markov property, 
	\begin{equation}
		\begin{aligned}
			f_x(T-t,x)
			&=\mathbb{E}^{\hat{\mathbb{P}}}\big[e^{\int_0^{T-t}\kappa'(\hat{X}_s)\,ds}(h/\phi)'(\hat{X}_{T-t})\big|\hat{X}_0=x\big]\\
			&=\mathbb{E}^{\hat{\mathbb{P}}}\big[e^{\int_t^T\kappa'(\hat{X}_s)\,ds}(h/\phi)'(\hat{X}_T)\big|\hat{X}_t=x\big]
		\end{aligned}
	\end{equation}
	so that
	$$f_x(T-t,\hat{X}_t)e^{\int_0^t \kappa'(\hat{X}_s)\,ds}
	=\mathbb{E}^{\hat{\mathbb{P}}}\big[e^{\int_0^T\kappa'(\hat{X}_s)\,ds}(h/\phi)'(\hat{X}_T)\big|\mathcal{F}_t\big]\,.$$
	In conclusion, the process
	$(f_x(T-t,\hat{X}_t)e^{\int_0^t \kappa'(\hat{X}_s)\,ds})_{0\le t\le T}$ is a martingale.
\end{proof}

\begin{remark}\label{rem:mart}
	For condition (v) in the above proposition,
	Theorem 5.1.8 of \cite{pinsky1995positive} provides a useful criterion  to check whether Eq.\eqref{eqn:f} is 
	a martingale or not.
This process is a martingale if and only if
\begin{equation} 
\begin{aligned}
&\int_a^{x_0} \frac{1}{\sigma^2(x)}e^{-\int_{x_0}^x\frac{2\kappa(s)}{\sigma^2(s)}\,ds}\int_x^{x_0} e^{\int_{x_0}^y\frac{2\kappa(s)}{\sigma^2(s)}\,ds}\,dy\,dx =\infty\,,\\
&\int_{x_0}^b \frac{1}{\sigma^2(x)}e^{-\int_{x_0}^x\frac{2\kappa(s)}{\sigma^2(s)}\,ds}\int_{x_0}^{x} e^{\int_{x_0}^y\frac{2\kappa(s)}{\sigma^2(s)}\,ds}\,dy\,dx =\infty\,,
\end{aligned}
\end{equation}
where  $\mathcal{D}=(a,b)$ and $x_0$ is any reference point in $(a,b).$ 
\end{remark}

The proof is as follows. 
	Define a operator $\overline{\mathcal{L}}$ as
	\begin{equation}
	\begin{aligned}
	\overline{\mathcal{L}}h(x)
	&=\hat{\mathcal{L}}h(x)-\frac{\hat{\mathcal{L}}\phi(x)}{\phi(x)}h(x)\\
	&=\frac{1}{2}\sigma^2(x)h''(x)+(\kappa+\sigma'\sigma-\sigma^2\phi'/\phi)(x)h'(x)-\frac{\hat{\mathcal{L}}\phi(x)}{\phi(x)}h(x)\,,\;h\in C^2(\mathcal{D}) 
	\end{aligned}
	\end{equation}
	for $\hat{\mathcal{L}}$ in Eq.\eqref{eqn:hat_L}. It is clear that $\overline{\mathcal{L}}\phi=0.$
	By  Theorem 5.1.8 of \cite{pinsky1995positive}, Eq.\eqref{eqn:f} is a martingale  under the probability measure $\hat{\mathbb{P}}_T$ if and only if
	\begin{equation} 
	\begin{aligned}
	&\int_a^{x_0} \frac{1}{\phi^2(x)}e^{-\int_{x_0}^x\frac{2}{\sigma^2(s)}(\kappa+\sigma'\sigma-\sigma^2\frac{\phi'}{\phi})(s)\,ds}\int_x^{x_0}\frac{\phi^2(y)}{\sigma^2(y)} e^{\int_{x_0}^y\frac{2}{\sigma^2(s)}(\kappa+\sigma'\sigma-\sigma^2\frac{\phi'}{\phi})(s)\,ds}\,dy\,dx =\infty\,,\\
	&\int_{x_0}^b \frac{1}{\phi^2(x)}e^{-\int_{x_0}^x\frac{2}{\sigma^2(s)}(\kappa+\sigma'\sigma-\sigma^2\frac{\phi'}{\phi})(s)\,ds}\int_{x_0}^{x} \frac{\phi^2(y)}{\sigma^2(y)} e^{\int_{x_0}^y\frac{2}{\sigma^2(s)}(\kappa+\sigma'\sigma-\sigma^2\frac{\phi'}{\phi})(s)\,ds}\,dy\,dx =\infty\,,
	\end{aligned}
	\end{equation}	
	where  $\mathcal{D}=(a,b)$ and $x_0$ is any reference point in $(a,b).$ 
	By simplifying these equations, we obtain the desired result.

\subsubsection{Second-order sensitivity}
\label{sec:2nd_sen}

In this section, we develop the second-order initial-value sensitivity for large-time $T.$
For a given 
quadruple of functions $(b,\sigma,r,h)$ satisfying Assumptions \ref{assume:a}--\ref{assume:h}, the large-time asymptotic behavior of the partial derivative 
$$\partial_{\xi\xi} p_T=\partial_{\xi\xi} \mathbb{E}_{\xi}^{\mathbb{P}}[e^{-\int_0^T r(X_s)\,ds} h(X_T)]$$
is of interest to us.

Toward this end, we need to study two   quadruples in addition to the original quadruple $(b,\sigma,r,h).$
Recall that $\kappa=b+\sigma^2\phi'/\phi,$ and suppose that 
$(\kappa+\sigma'\sigma,\sigma,-\kappa',(h/\phi)')$
satisfies Assumptions \ref{assume:a}--\ref{assume:h}
on the consistent probability space $(\Omega,\mathcal{F},(\mathcal{F}_t)_{t\ge0},(\hat{\mathbb{P}}_t)_{t\ge0})$ having  Brownian motion $(\hat{B}_t)_{t\ge0}$
as stated in Theorem \ref{thm:initial_main_thm}.
Then,
we can construct the corresponding objects
$$\hat{X},\,\hat{\mathcal{P}},\,(\hat{\lambda},\hat{\phi}),\,\hat{M},\,\hat{f},\,(\tilde{\mathbb{P}}_t)_{t\ge0}\,,(\tilde{B}_t)_{t\ge0}\,.$$
Define
$$\gamma:=\kappa+\sigma'\sigma+\sigma^2\hat{\phi}'/\hat{\phi}\,.$$

If $(\gamma+\sigma'\sigma,\sigma,-\gamma',((h/\phi)'/\hat{\phi})')$ also satisfies Assumptions \ref{assume:a} -- \ref{assume:h}
on the consistent probability space $(\Omega,\mathcal{F},(\mathcal{F}_t)_{t\ge0},(\tilde{\mathbb{P}}_t)_{t\ge0})$ having  Brownian motion $(\tilde{B}_t)_{t\ge0},$
we can construct the corresponding objects
$$\tilde{X},\,\tilde{\mathcal{P}},\,(\tilde{\lambda},\tilde{\phi}),\,\tilde{M},\,\tilde{f},\,(\overline{\mathbb{P}}_t)_{t\ge0},\,(\overline{B}_t)_{t\ge0}\,.$$
We summarize three quadruples and their corresponding objects in the following table. 
\begin{center}
	\begin{tabular}{ | l | c |c | r | }
	\hline
	\rule{0pt}{12pt}	Quadruples & 	$(b,\sigma,r,h)$   & $(\kappa+\sigma'\sigma,\sigma,-\kappa',(h/\phi)')$    &  $(\gamma+\sigma'\sigma,\sigma,-\gamma',((h/\phi)'/\hat{\phi})')$ \\ \hline	
	\rule{0pt}{12pt} Underlying space & $(\Omega,\mathcal{F},(\mathcal{F}_t)_{t\ge0},(\mathbb{P}_t)_{t\ge0})$   & $(\Omega,\mathcal{F},(\mathcal{F}_t)_{t\ge0},(\hat{\mathbb{P}}_t)_{t\ge0})$   & $(\Omega,\mathcal{F},(\mathcal{F}_t)_{t\ge0},(\tilde{\mathbb{P}}_t)_{t\ge0})$ \\ \hline				
	\rule{0pt}{12pt} Induced objects & $X,\mathcal{P},(\lambda,\phi),M,f,(\hat{\mathbb{P}}_t)_{t\ge0}$ &  $\hat{X},\hat{\mathcal{P}},(\hat{\lambda},\hat{\phi}),\hat{M},\hat{f},(\tilde{\mathbb{P}}_t)_{t\ge0}$ &$\tilde{X},\tilde{\mathcal{P}},(\tilde{\lambda},\tilde{\phi}),\tilde{M},\tilde{f},(\overline{\mathbb{P}}_t)_{t\ge0}$  \\ \hline
	\rule{0pt}{12pt}	Auxiliary functions 	  &   & $\kappa:=b+\sigma^2\phi'/\phi$ & 	$\gamma:=\kappa+\sigma'\sigma+\sigma^2\hat{\phi}'/\hat{\phi}$  \\ \hline
	\rule{0pt}{12pt} Sections used & \ref{sec:1st_sen}, \ref{sec:2nd_sen}, \ref{sec:drift}, \ref{sec:diffusion}  & \ref{sec:1st_sen}, \ref{sec:2nd_sen}, \ref{sec:drift}, \ref{sec:diffusion} 
	&\ref{sec:2nd_sen}, \ref{sec:diffusion}   \\
	\hline
\end{tabular} 
	\captionof{table}{Three quadruples \label{Tab:Tcr}}
\end{center}

Before stating the rigorous results, 
we provide   heuristic arguments to understand the motivation for Eq.\eqref{eqn:delta_second_order_conclusion} in Theorem \ref{thm:2nd_sen}.  Suppose that  three quadruples of functions $(b,\sigma,r,h),$  $(\kappa+\sigma'\sigma,\sigma,-\kappa',(h/\phi)')$ and $(\gamma+\sigma'\sigma,\sigma,-\gamma',((h/\phi)'/\hat{\phi})')$ satisfy Assumptions \ref{assume:a}--\ref{assume:h}.
Using the Hansen--Scheinkman decomposition, we have
$p_T=e^{-\lambda T}\phi(\xi)f(T,\xi)$
and 
$f_x(t,x)
=\hat{f}(t,x)e^{-\hat{\lambda} t}\hat{\phi}(x)$ in Eq.\eqref{eqn:f_x_general}.
Direct calculation gives  
\begin{equation}
\label{eqn:second_order_deriva_ori}
\frac{{\partial_{\xi\xi}} p_T}{p_T}-\Big(\frac{\partial_\xi p_T}{p_T}\Big)^2-\frac{\phi''(\xi)}{\phi(\xi)}+\Big(\frac{\phi'(\xi)}{\phi(\xi)}\Big)^2
-\frac{\hat{\phi}'(\xi)}{\hat{\phi}(\xi)} \frac{\partial_\xi p_T}{p_T} +\frac{\hat{\phi}'(\xi)}{\hat{\phi}(\xi)} \frac{\phi'(\xi)}{\phi(\xi)}
=\frac{\hat{f}_x(T,\xi)\hat{\phi}(\xi)e^{-\hat{\lambda} T}}{f(T,\xi)}-\Big(\frac{f_x(T,\xi)}{f(T,\xi)}\Big)^2\,.
\end{equation}
We shift our attention to the two terms on the right-hand side. 
For the  first term on the right-hand side,
we derive the equality
\begin{equation} 
\begin{aligned}
\hat{f}_x(t,x)
&=\tilde{f}(t,x)e^{-\tilde{\lambda} t}\tilde{\phi}(x),
\end{aligned}
\end{equation}
which is similar to Eq.\eqref{eqn:f_x_general}.
Thus,
$$\frac{\hat{f}_x(T,\xi)\hat{\phi}(\xi)e^{-\hat{\lambda} T}}{f(T,\xi)}= \frac{\tilde{f}(T,\xi)e^{-(\tilde{\lambda}+\hat{\lambda}) T}\tilde{\phi}(\xi)\hat{\phi}(\xi)}{f(T,\xi)}\simeq  e^{-(\tilde{\lambda}+\hat{\lambda}) T}\,.$$
Here, for two nonzero functions $h_1(T)$
and $h_2(T)$, the notation $h_1(T)\simeq h_2(T)$ implies that the limit $\lim_{T\to\infty}\frac{h_1(T)}{h_2(T)}$ 
converges to
a nonzero constant.
For the second term on the right-hand side of Eq.\eqref{eqn:second_order_deriva_ori},  
observe that 
$f_x(T,\xi)\simeq e^{-\hat{\lambda} T}$
from Eq.\eqref{eqn:f_x_general}. Thus, $$\Big(\frac{f_x(T,\xi)}{f(T,\xi)}\Big)^2\simeq e^{-2\hat{\lambda} T}\,.$$
In conclusion,  Eq.\eqref{eqn:second_order_deriva_ori} satisfies
$$\frac{{\partial_{\xi\xi}} p_T}{p_T}-\Big(\frac{\partial_\xi p_T}{p_T}\Big)^2-\frac{\phi''(\xi)}{\phi(\xi)}+\Big(\frac{\phi'(\xi)}{\phi(\xi)}\Big)^2
-\frac{\hat{\phi}'(\xi)}{\hat{\phi}(\xi)} \frac{\partial_\xi p_T}{p_T} +\frac{\hat{\phi}'(\xi)}{\hat{\phi}(\xi)} \frac{\phi'(\xi)}{\phi(\xi)} \simeq (e^{-\tilde{\lambda} T}+e^{-\hat{\lambda} T})e^{-\hat{\lambda} T}\,, $$
which is the motivation for Eq.\eqref{eqn:delta_second_order_conclusion}.

A rigorous estimation of the second-order partial derivative $\partial_{\xi\xi} p_T$ for large $T$ is obtained by analyzing the three quadruples stated above, as described in the following theorem.

\begin{thm} \label{thm:2nd_sen}
	Assume that  $\sigma$ is twice continuously differentiable, and the partial derivatives $f_{tx}$ and $\hat{f}_{tx}$ exist.
	Suppose the following conditions hold.
	\begin{enumerate}[(i)]
		\item The quadruple of functions $(b,\sigma,r,h)$ satisfies Assumptions \ref{assume:a}--\ref{assume:h} on the consistent probability space $(\Omega,\mathcal{F},(\mathcal{F}_t)_{t\ge0},(\mathbb{P}_t)_{t\ge0})$ having  Brownian motion $(B_t)_{t\ge0}.$
		\item The quadruple of functions  $(\kappa+\sigma'\sigma,\sigma,-\kappa',(h/\phi)')$ satisfies Assumptions \ref{assume:a}--\ref{assume:h}
		on the consistent probability space $(\Omega,\mathcal{F},(\mathcal{F}_t)_{t\ge0},(\hat{\mathbb{P}}_t)_{t\ge0})$ having Brownian motion $(\hat{B}_t)_{t\ge0}.$
		\item The quadruple of functions   $(\gamma+\sigma'\sigma,\sigma,-\gamma',((h/\phi)'/\hat{\phi})')$ satisfies Assumptions \ref{assume:a}--\ref{assume:h}
		on the consistent probability space $(\Omega,\mathcal{F},(\mathcal{F}_t)_{t\ge0},(\tilde{\mathbb{P}}_t)_{t\ge0})$ having Brownian motion $(\tilde{B}_t)_{t\ge0}.$
	\end{enumerate} 
	Then, for each $T>0$, two processes $(f_x(T-t,\hat{X}_t)e^{\int_0^t \kappa'(\hat{X}_s)\,ds})_{0\le t\le T}$ and
	$(\hat{f}_x(T-t,\tilde{X}_t)e^{\int_0^t \gamma'(\tilde{X}_s)\,ds})_{0\le t\le T}$ are local martingales under the probability measures $\hat{\mathbb{P}}_T$ and $\tilde{\mathbb{P}}_T,$ respectively. 
	For each $T>0$, if these are martingales 
	or if
	two functions $f_x$ and $\hat{f}_x$ satisfy
	\begin{equation}
	\label{eqn:f_xx_exp_repn}
	\begin{aligned}
	&f_x(T,x)=\mathbb{E}^{\hat{\mathbb{P}}}[ e^{\int_0^T \kappa'(\hat{X}_s)\,ds}(h/\phi)'(\hat{X}_T)|\hat{X}_0=x]\,,\\
	&\hat{f}_x(T,x)=\mathbb{E}^{\tilde{\mathbb{P}}}[e^{\int_0^T\gamma'(\tilde{X}_s)\,ds}((h/\phi)'/\hat{\phi})'(\tilde{X}_T)|\tilde{X}_0=x] \,,
	\end{aligned} 
	\end{equation}
then  
	\begin{equation}
	\label{eqn:delta_second_order_conclusion}
	\left|\frac{{\partial_{\xi\xi}} p_T}{p_T}-\Big(\frac{\partial_\xi p_T}{p_T}\Big)^2-\frac{\phi''(\xi)}{\phi(\xi)}+\Big(\frac{\phi'(\xi)}{\phi(\xi)}\Big)^2
 -\frac{\hat{\phi}'(\xi)}{\hat{\phi}(\xi)} \frac{\partial_\xi p_T}{p_T} +\frac{\hat{\phi}'(\xi)}{\hat{\phi}(\xi)} \frac{\phi'(\xi)}{\phi(\xi)}\right|\leq c(e^{-\tilde{\lambda} T}+e^{-\hat{\lambda} T})e^{-\hat{\lambda} T}
	\end{equation}
	for some positive constant $c,$ which is dependent of $\xi$ but independent of $T.$
	In particular,   we have
\begin{equation}
\left|\frac{{\partial_{\xi\xi}} p_T}{p_T}-\frac{\phi''(\xi)}{\phi(\xi)}
\right|\leq c'e^{-\hat{\lambda} T}
\end{equation}	
	for some positive constant $c',$ which is dependent of $\xi$ but independent of $T.$
\end{thm}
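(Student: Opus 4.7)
The plan is to iterate the first--order argument of Theorem \ref{thm:initial_main_thm}. A direct application of that theorem to the original quadruple $(b,\sigma,r,h)$ immediately gives the first local martingale statement and, under the martingale/representation hypothesis, produces the Hansen--Scheinkman type identity
\begin{equation}
f_x(t,x)=\hat\phi(x)\,e^{-\hat\lambda t}\,\hat f(t,x)
\end{equation}
exactly as in Eq.\eqref{eqn:f_x_general}. Because the hatted quadruple $(\kappa+\sigma'\sigma,\sigma,-\kappa',(h/\phi)')$ plays the role of $(b,\sigma,r,h)$ for the hatted objects, and because its associated ``twisted drift'' is precisely $\gamma=\kappa+\sigma'\sigma+\sigma^2\hat\phi'/\hat\phi$, the same theorem applied to this hatted quadruple yields the second local martingale statement and, under the hypothesis, the analogous identity
\begin{equation}
\hat f_x(t,x)=\tilde\phi(x)\,e^{-\tilde\lambda t}\,\tilde f(t,x).
\end{equation}
The assumption that $\sigma$ is twice continuously differentiable and that $\hat f_{tx}$ exists is what allows this second invocation.

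Next I would carry out a purely algebraic calculation. Differentiating $p_T=\phi(\xi)e^{-\lambda T}f(T,\xi)$ twice in $\xi$ gives
\begin{equation}
\frac{\partial_\xi p_T}{p_T}=\frac{\phi'(\xi)}{\phi(\xi)}+\frac{f_x(T,\xi)}{f(T,\xi)},\qquad \frac{\partial_{\xi\xi}p_T}{p_T}=\frac{\phi''(\xi)}{\phi(\xi)}+\frac{2\phi'(\xi)}{\phi(\xi)}\frac{f_x(T,\xi)}{f(T,\xi)}+\frac{f_{xx}(T,\xi)}{f(T,\xi)}.
\end{equation}
Substituting these expressions into the left-hand side of \eqref{eqn:delta_second_order_conclusion}, the $\phi$-only contributions and the cross terms involving $\phi'/\phi$ cancel, reducing the entire left-hand side to
\begin{equation}
\frac{f_{xx}(T,\xi)}{f(T,\xi)}-\left(\frac{f_x(T,\xi)}{f(T,\xi)}\right)^{\!2}-\frac{\hat\phi'(\xi)}{\hat\phi(\xi)}\frac{f_x(T,\xi)}{f(T,\xi)}.
\end{equation}
Differentiating the identity $f_x=\hat\phi\,e^{-\hat\lambda T}\hat f$ once more in $x$ gives $f_{xx}=e^{-\hat\lambda T}(\hat\phi'\hat f+\hat\phi\hat f_x)$, and the critical observation is that the first summand yields $\hat\phi'\hat f\,e^{-\hat\lambda T}/f=(\hat\phi'/\hat\phi)(f_x/f)$, which exactly cancels the last displayed term. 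What remains, after inserting the identity for $\hat f_x$, is
\begin{equation}
\frac{\hat\phi(\xi)\tilde\phi(\xi)\tilde f(T,\xi)}{f(T,\xi)}\,e^{-(\hat\lambda+\tilde\lambda)T}-\frac{\hat\phi(\xi)^{2}\hat f(T,\xi)^{2}}{f(T,\xi)^{2}}\,e^{-2\hat\lambda T}.
\end{equation}
Since $f$, $\hat f$ and $\tilde f$ all converge to nonzero constants as $T\to\infty$, the three ratios are bounded uniformly in $T$, which gives the estimate $c(e^{-\tilde\lambda T}+e^{-\hat\lambda T})e^{-\hat\lambda T}$. The ``in particular'' claim follows at once: from the two identities above, $f_x/f$ and $f_{xx}/f$ are both $O(e^{-\hat\lambda T})$, so $\partial_{\xi\xi}p_T/p_T-\phi''(\xi)/\phi(\xi)=(2\phi'(\xi)/\phi(\xi))(f_x/f)+f_{xx}/f$ decays at rate $\hat\lambda$.

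The main obstacle is not analytic but organizational: one must verify that Theorem \ref{thm:initial_main_thm} really is applicable to the hatted quadruple under the hypotheses here (so that the role of $(\lambda,\phi),(\hat\lambda,\hat\phi)$ is played by $(\hat\lambda,\hat\phi),(\tilde\lambda,\tilde\phi)$), and keep careful track of which of the three probability measures $\hat{\mathbb{P}},\tilde{\mathbb{P}},\overline{\mathbb{P}}$ each local-martingale/representation claim lives on. Once that bookkeeping is in place, the only algebraic subtlety is the cancellation described in the second paragraph, which is precisely what explains the otherwise peculiar correction terms $-(\hat\phi'/\hat\phi)(\partial_\xi p_T/p_T)+(\hat\phi'/\hat\phi)(\phi'/\phi)$ that appear in the statement of \eqref{eqn:delta_second_order_conclusion}.
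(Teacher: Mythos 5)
Your proposal is correct and follows essentially the same route as the paper: both obtain the Hansen--Scheinkman identities $f_x=\hat\phi e^{-\hat\lambda t}\hat f$ and $\hat f_x=\tilde\phi e^{-\tilde\lambda t}\tilde f$ and then perform the same algebraic reduction, with the cancellation of the $(\hat\phi'/\hat\phi)(f_x/f)$ term being precisely the paper's identity $f_{xx}=\hat f_x e^{-\hat\lambda t}\hat\phi+(\hat\phi'/\hat\phi)f_x$. The only cosmetic difference is that you invoke Theorem \ref{thm:initial_main_thm} twice as a black box applied to the original and hatted quadruples, whereas the paper re-derives the Feynman--Kac PDE for $\hat f_x$ directly; the underlying steps and the final estimate are identical.
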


\begin{proof}  
	Using the same argument as that presented in the proof of Theorem \ref{thm:initial_main_thm}, it can be shown that
	two processes $(f_x(T-t,\hat{X}_t)e^{\int_0^t \kappa'(\hat{X}_s)\,ds})_{0\le t\le T}$ and
	$(\hat{f}_x(T-t,\tilde{X}_t)e^{\int_0^t \gamma'(\tilde{X}_s)\,ds})_{0\le t\le T}$ are local martingales  under the probability measures $\hat{\mathbb{P}}_T$ and $\tilde{\mathbb{P}}_T,$ respectively.  Thus, we omit the proof.
	
	Now, assume that two functions $f_x$ and $\hat{f}_x$ satisfy  Eq.\eqref{eqn:f_xx_exp_repn}.
	If the two local martingales above are martingales, then Eq.\eqref{eqn:f_xx_exp_repn} holds by the same argument as that in Eq.\eqref{eqn:f_x_prob_repn}.
	Since  $(\kappa+\sigma'\sigma,\sigma,-\kappa',(h/\phi)')$ satisfies Assumptions \ref{assume:a}--\ref{assume:h} on $(\Omega,\mathcal{F},(\mathcal{F}_t),(\hat{\mathbb{P}}_t)_{t\ge0})$ having Brownian motion $(\hat{B}_t)_{t\ge0},$ we can construct the corresponding objects
	$$\hat{X},\,\hat{\mathcal{P}},\,(\hat{\lambda},\hat{\phi}),\,\hat{M},\,\hat{f},\,(\tilde{\mathbb{P}}_t)_{t\ge0}\,,(\tilde{B}_t)_{t\ge0}$$
	and these notations are self-explanatory. 
	The remainder function
	\begin{equation}
	\label{eqn:hat_f_exp}
	\hat{f}(t,x)=\mathbb{E}_x^{\tilde{\mathbb{P}}}[((h/\phi)'/\hat{\phi})(\hat{X}_t)]
	\end{equation} 
	satisfies
	$f_x(t,x)=\hat{f}(t,x)e^{-\hat{\lambda} t}\hat{\phi}(x)$
	by Eq.\eqref{eqn:f_x_general}, and the dynamics of $\hat{X}$ is
	\begin{equation}
	\begin{aligned}
	d\hat{X}_t
	&=(\kappa+\sigma'\sigma)(\hat{X}_t)\,dt+\sigma(\hat{X}_t)\,d\hat{B}_t\\
	&=\gamma(\hat{X}_t)\,dt+\sigma(\hat{X}_t)\,d\tilde{B}_t
	\end{aligned}
	\end{equation} 
	for
	$\gamma=\kappa+\sigma'\sigma+\sigma^2\hat{\phi}'/\hat{\phi}.$
	Applying the Feynman--Kac formula to Eq.\eqref{eqn:hat_f_exp}, we have
	$$-\hat{f}_t+\frac{1}{2}\sigma^2(x)\hat{f}_{xx}+\gamma(x)\hat{f}_x=0\,,\;\hat{f}(0,x)=((h/\phi)'/\hat{\phi})(x)\,.$$
	Observe that the function $\hat{f}$  is $C^{1,2}$ by Assumption \ref{assume:h}.
Since every coefficient is continuously differentiable in $x$ in the above PDE, the partial derivatives $\hat{f}_x$ is also in $C^{1,2}$ by \cite[Section 3.5]{friedman2008partial}.  	
	Taking the differentiation in $x,$ we get
	\begin{equation}
	\label{eqn:hat_f_x}
	-\hat{f}_{xt}+\frac{1}{2}\sigma^2(x)\hat{f}_{xxx}+(\gamma+\sigma'\sigma)(x)\hat{f}_{xx}+\gamma'(x)\hat{f}_x=0\,,\;\hat{f}_x(0,x)=((h/\phi)'/\hat{\phi})'(x)\,.
	\end{equation}

	Since $(\gamma+\sigma'\sigma,\sigma,-\gamma',((h/\phi)'/\hat{\phi})')$ satisfies Assumptions \ref{assume:a}--\ref{assume:h}, we can construct the corresponding objects
	$$\tilde{X},\,\tilde{\mathcal{P}},\,(\tilde{\lambda},\tilde{\phi}),\,\tilde{M},\,\tilde{f},\,(\overline{\mathbb{P}}_t)_{t\ge0},\,(\overline{B}_t)_{t\ge0}\,.$$
	Note that the process $\tilde{X}$ satisfies
	$$d\tilde{X}_t=(\gamma+\sigma'\sigma)(\tilde{X}_t)\,dt+\sigma(\tilde{X}_t)\,d\tilde{B}_t\,.$$
	The process $(\hat{f}_x(T-t,\tilde{X}_t)e^{\int_0^t \gamma'(\tilde{X}_s)\,ds})_{0\le t\le T}$ is assumed to be a martingale; thus,
	$$\hat{f}_x(t,x)=\mathbb{E}_x^{\tilde{\mathbb{P}}}[e^{\int_0^t\gamma'(\tilde{X}_s)\,ds}((h/\phi)'/\hat{\phi})'(\tilde{X}_t)]=\tilde{\mathcal{P}}_T((h/\phi)'/\hat{\phi})'(x)\,.$$
	We apply the Hansen--Scheinkman decomposition here.
	Since $(\tilde{\lambda},\tilde{\phi})$ is the recurrent eigenpair  and $$\tilde{f}(t,x)=\mathbb{E}_x^{\tilde{\mathbb{P}}}[\tilde{M}_T(((h/\phi)'/\hat{\phi})'/\tilde{\phi})(\tilde{X}_t)]
	=\mathbb{E}_x^{\overline{\mathbb{P}}}[(((h/\phi)'/\hat{\phi})'/\tilde{\phi})(\tilde{X}_t)]$$ is the remainder function,  
	it follows, by Eq.\eqref{eqn:HS_tranform},  that 
	\begin{equation}
	\label{eqn:hat_f_x_general}
	\begin{aligned}
	\hat{f}_x(t,x)
	&=\tilde{f}(t,x)e^{-\tilde{\lambda} t}\tilde{\phi}(x)\,.
	\end{aligned}
	\end{equation}

	Meanwhile, direct calculation gives
	\begin{equation} \label{eqn:second-order_expansion}
	\begin{aligned}
	\frac{{\partial_{\xi\xi}} p_T}{p_T}-\Big(\frac{\partial_\xi p_T}{p_T}\Big)^2-\frac{\phi''(\xi)}{\phi(\xi)}+\Big(\frac{\phi'(\xi)}{\phi(\xi)}\Big)^2
	=\frac{f_{xx}(T,\xi)}{f(T,\xi)}-\Big(\frac{f_x(T,\xi)}{f(T,\xi)}\Big)^2\,.
	\end{aligned}
	\end{equation}
	We estimate the two terms on the right-hand side.
	Eq.\eqref{eqn:first_order_sensi} states that
	\begin{equation}
	\label{eqn:f_x_squared}\Big(\frac{f_x(T,\xi)}{f(T,\xi)}\Big)^2\leq c_1e^{-2\hat{\lambda} T}
	\end{equation}
	for some positive constant $c_1.$
	To estimate the term $\frac{f_{xx}(T,\xi)}{f(T,\xi)}$ on the right-hand side of Eq.\eqref{eqn:second-order_expansion},
	observe that
	Eq.\eqref{eqn:f_x_general}
	gives
	\begin{equation}\label{eqn:second-order_f_xx}
	\begin{aligned}
	f_{xx}(t,x)
	&= \hat{f}_x(t,x)e^{-\hat{\lambda} t}\hat{\phi}(x)+ \hat{f}(t,x)e^{-\hat{\lambda} t}\hat{\phi}'(x)
	=\hat{f}_x(t,x)e^{-\hat{\lambda} t}\hat{\phi}(x)+\frac{\hat{\phi}'(x)}{\hat{\phi}(x)}f_x(t,x)\,.
	\end{aligned}
	\end{equation} 
	Combined with $\frac{f_x(T,\xi)}{f(T,\xi)}=\frac{\partial_\xi p_T}{p_T}-\frac{\phi'(\xi)}{\phi(\xi)},$
	Eq.\eqref{eqn:second-order_expansion} becomes
	\begin{equation} 
	\begin{aligned}
	&\,\frac{{\partial_{\xi\xi}} p_T}{p_T}-\Big(\frac{\partial_\xi p_T}{p_T}\Big)^2-\frac{\phi''(\xi)}{\phi(\xi)}+\Big(\frac{\phi'(\xi)}{\phi(\xi)}\Big)^2
	-\frac{\hat{\phi}'(\xi)}{\hat{\phi}(\xi)} \frac{\partial_\xi p_T}{p_T} +\frac{\hat{\phi}'(\xi)}{\hat{\phi}(\xi)} \frac{\phi'(\xi)}{\phi(\xi)}\\
	=&\,\frac{\hat{f}_x(T,\xi)\hat{\phi}(\xi)e^{-\hat{\lambda} T}}{f(T,\xi)}-\Big(\frac{f_\xi(T,\xi)}{f(T,\xi)}\Big)^2\\
	=&\,\frac{\tilde{f}(T,\xi)e^{-(\tilde{\lambda}+\hat{\lambda}) T}\tilde{\phi}(\xi)\hat{\phi}(\xi)}{f(T,\xi)}-\Big(\frac{f_\xi(T,\xi)}{f(T,\xi)}\Big)^2\,.
	\end{aligned}
	\end{equation}
	Since $f(T,\xi)$ and $\tilde{f}(T,\xi)$ converge  to nonzero constants as $T\to\infty,$ 
	combined with Eq.\eqref{eqn:f_x_squared}, we conclude that
	\begin{equation} 
	\left|\frac{{\partial_{\xi\xi}} p_T}{p_T}-\Big(\frac{\partial_\xi p_T}{p_T}\Big)^2-\frac{\phi''(\xi)}{\phi(\xi)}+\Big(\frac{\phi'(\xi)}{\phi(\xi)}\Big)^2
	-\frac{\hat{\phi}'(\xi)}{\hat{\phi}(\xi)} \frac{\partial_\xi p_T}{p_T} +\frac{\hat{\phi}'(\xi)}{\hat{\phi}(\xi)} \frac{\phi'(\xi)}{\phi(\xi)}\right|\leq c(e^{-\tilde{\lambda} T}+e^{-\hat{\lambda} T})e^{-\hat{\lambda} T}
	\end{equation}
	for some positive constant $c,$ which is dependent of $\xi$ but independent of $T.$
	In particular, by using Eq.\eqref{eqn:first_order_sensi}, we have
	\begin{equation}
	\left|\frac{{\partial_{\xi\xi}} p_T}{p_T}-\frac{\phi''(\xi)}{\phi(\xi)}
	\right|\leq c'e^{-\hat{\lambda} T}
	\end{equation}
	for some positive constant $c',$ which is dependent of $\xi$ but independent of $T.$
\end{proof}

\begin{remark}
	One can find a sufficient condition  such that the martingale property for Eq.\eqref{eqn:f_xx_exp_repn}, Eq.\eqref{eqn:drift_f_eps^0} and Eq.\eqref{eqn:drift_f_eps^0_vega} holds true by the same way stated in Proposition \ref{prop:1st_initial_sen_linear}.
\end{remark}

\subsection{Sensitivities of the drift and diffusion terms}
\label{sec:sen_drift_diffusion}

We conduct long-term sensitivity analysis with respect to perturbations of the drift and diffusion terms. 
The variable $\epsilon$ below is a perturbation parameter.

\begin{assume}\label{assume:epsilon} Let  $I$ be an open neighborhood of $0$ and let  $b^{(\epsilon)}(x),\sigma^{(\epsilon)}(x),$ $r^{(\epsilon)}(x),$  $h^{(\epsilon)}(x)$ be  functions of two variables $(\epsilon,x)$ on $I\times\mathcal{D}$  such that for each $x$ they are continuously differentiable in $\epsilon$ on $I$ and $b^{(0)}=b,$ $\sigma^{(0)}=\sigma,$ $r^{(0)}=r,h^{(0)}=h.$
\end{assume}

\begin{assume}\label{assume:epsilon_a}
	For each $\epsilon\in I,$ the quadruple $(b^{(\epsilon)},\sigma^{(\epsilon)},r^{(\epsilon)},h^{(\epsilon)})$ satisfies
	Assumptions \ref{assume:a}--\ref{assume:h}.
\end{assume}

\noindent From the above assumptions,   the notations
$$X^{(\epsilon)},\,\mathcal{P}^{(\epsilon)},\,(\lambda^{(\epsilon)},\phi^{(\epsilon)}),\,M^{(\epsilon)},\,f^{(\epsilon)},\,(\hat{\mathbb{P}}^{(\epsilon)})_{t\ge0}\,,\; (\hat{B}_t^{(\epsilon)})_{t\ge0} $$  are self-explanatory.
The process $X^{(\epsilon)}$ satisfies
\begin{equation}
\begin{aligned}
dX_t^{(\epsilon)}
&=b^{(\epsilon)}(X_t^{(\epsilon)})\,dt+\sigma(X_t^{(\epsilon)})\,dB_t \\
&=\kappa^{(\epsilon)}(X_t^{(\epsilon)})\,dt+\sigma(X_t^{(\epsilon)})\,d\hat{B}_t^{(\epsilon)} 
\end{aligned}
\end{equation} 
where
$$\kappa^{(\epsilon)}:=b^{(\epsilon)}+\sigma^{(\epsilon)2}\phi_x^{(\epsilon)}/\phi^{(\epsilon)}\,.$$
For notational simplicity, we define
$$\ell^{(\epsilon)}=\partial_\epsilon\kappa^{(\epsilon)}\,,\;\ell=\ell^{(0)}\,,\;\Sigma^{(\epsilon)}=\partial_\epsilon\sigma^{(\epsilon)}\,,\;\Sigma=\Sigma^{(0)}\,.$$
The perturbed pricing operator is 
$$\mathcal{P}_T^{(\epsilon)} h(x)=\mathbb{E}_x^{\mathbb{P}}[e^{-\int_0^T r^{(\epsilon)}(X_s^{(\epsilon)})\,ds} h(X_T^{(\epsilon)})]$$
and 
the remainder function is
\begin{equation}
\label{eqn:1st_order_f_eps}
f^{(\epsilon)}(t,x)=\mathbb{E}_x^{\hat{\mathbb{P}}^{(\epsilon)}}
[(h^{(\epsilon)}/\phi^{(\epsilon)})(X_t^{(\epsilon)})]\,.
\end{equation}

\begin{assume}\label{assume:f_eps_pert}
	For each $t\in [0,\infty)$ and $x\in\mathcal{D},$ the  functions $\phi^{(\epsilon)}(x),$ $\phi_x^{(\epsilon)}(x)$  and the  functions $f^{(\epsilon)}(t,x),$ $f_x^{(\epsilon)}(t,x),$  $f_{xx}^{(\epsilon)}(t,x),$ $f_t^{(\epsilon)}(t,x)$ are continuously differentiable in $\epsilon$ on $I.$
\end{assume}

For $\epsilon\in I,$ consider the expectation
$$p_T^{(\epsilon)}=\mathbb{E}_\xi^{\mathbb{P}}[e^{-\int_0^T r^{(\epsilon)}(X_s^{(\epsilon)})\,ds} h^{(\epsilon)}(X_T^{(\epsilon)})]=\mathcal{P}_T^{(\epsilon)}h^{(\epsilon)}(\xi)\,.$$
We are interested in the large-time behavior of
$$\frac{\partial}{\partial \epsilon}\Big|_{\epsilon=0}p_T^{(\epsilon)}\,,$$ 
which measures the sensitivity with respect to the perturbation parameter $\epsilon.$
The main idea is the Hansen--Scheinkman decomposition, which allows the expectation $p_T^{(\epsilon)}$ to be expressed as
$$p_T^{(\epsilon)}=\phi^{(\epsilon)}(\xi)\,e^{-\lambda^{(\epsilon)} T}f^{(\epsilon)}(T,\xi)\,.$$
By taking the differentiation in $\epsilon$ on $I,$
after some manipulations,
we have
\begin{equation}
\label{eqn:p_T^eps_deriva}
\frac{1}{T}\frac{\partial}{\partial \epsilon}\Big|_{\epsilon=0} \ln p_T^{(\epsilon)}+\frac{\partial}{\partial \epsilon}\Big|_{\epsilon=0}\lambda^{(\epsilon)} =\frac{1}{T}\left(\frac{\frac{\partial}{\partial \epsilon}\big|_{\epsilon=0}\phi^{(\epsilon)}(\xi)}{\phi(\xi)}+\frac{\frac{\partial}{\partial \epsilon}\big|_{\epsilon=0}f^{(\epsilon)}(T,\xi)}{f(T,\xi)} \right)\,.
\end{equation} 
We will find sufficient conditions for the term
$$\frac{\frac{\partial}{\partial \epsilon}\big|_{\epsilon=0}f^{(\epsilon)}(T,\xi)}{f^{(\epsilon)}(T,\xi)}$$
to be uniformly bounded in $T.$
Then, Eq.\eqref{eqn:p_T^eps_deriva} gives
\begin{equation}
\label{eqn:para_pert_aim}
\left|\frac{1}{T}\frac{\partial}{\partial \epsilon}\Big|_{\epsilon=0} \ln p_T^{(\epsilon)}+\frac{\partial}{\partial \epsilon}\Big|_{\epsilon=0}\lambda^{(\epsilon)}\right|
\leq  \frac{c}{T}
\end{equation} 
for some positive constant $c,$ which is independent of $T.$
Thus
$$\frac{1}{T}\frac{\partial}{\partial \epsilon}\Big|_{\epsilon=0} \ln p_T^{(\epsilon)}\to -\frac{\partial}{\partial \epsilon}\Big|_{\epsilon=0}\lambda^{(\epsilon)}$$
as $T\to\infty$ and its convergence rate is $O(1/T).$
Sufficient conditions with respect to the drift perturbation and the diffusion perturbation are investigated in Section \ref{sec:drift} and \ref{sec:diffusion}, respectively.

\subsubsection{Drift-term sensitivity}
\label{sec:drift}

This section investigates the long-term sensitivity  with respect to a perturbation of the drift term. 
We provide a sufficient condition  for Eq.\eqref{eqn:para_pert_aim} to hold.
Let $(b^{(\epsilon)},\sigma,r^{(\epsilon)},f^{(\epsilon)})$  be a quadruple  of functions   satisfying Assumptions  \ref{assume:a} -- \ref{assume:h}
on the consistent probability space $(\Omega,\mathcal{F},(\mathcal{F}_t)_{t\ge0},(\mathbb{P}_t)_{t\ge0})$ having Brownian motion $(B_t)_{t\ge0}.$ The diffusion function $\sigma$ is not perturbed.
Recall that if
$(\kappa+\sigma'\sigma,\sigma,-\kappa',(h/\phi)')$
satisfies Assumptions \ref{assume:a} -- \ref{assume:h}
on the consistent probability space $(\Omega,\mathcal{F},(\mathcal{F}_t)_{t\ge0},(\hat{\mathbb{P}}_t)_{t\ge0})$ having Brownian motion $(\hat{B}_t)_{t\ge0},$
we can construct the corresponding objects $$\hat{X},\,\hat{\mathcal{P}},\,(\hat{\lambda},\hat{\phi}),\,\hat{M},\,\hat{f},\,(\tilde{\mathbb{P}}_t)_{t\ge0}\,,(\tilde{B}_t)_{t\ge0}\,.$$

\begin{thm}\label{thm:drift_pert}
	Suppose that the following conditions hold.
	\begin{enumerate}[(i)]
		\item The quadruple $(b^{(\epsilon)},\sigma,r^{(\epsilon)},h^{(\epsilon)})$
		satisfies Assumptions \ref{assume:epsilon}-- \ref{assume:f_eps_pert} on the consistent probability space $(\Omega,\mathcal{F},(\mathcal{F}_t)_{t\ge0},(\mathbb{P}_t)_{t\ge0})$  having Brownian motion $(B_t)_{t\ge0}.$ 
		\item The quadruple $(\kappa+\sigma'\sigma,\sigma,-\kappa',(h/\phi)')$ satisfies Assumptions \ref{assume:a} -- \ref{assume:h}
		 on the consistent probability space
		$(\Omega,\mathcal{F},(\mathcal{F}_t)_{t\ge0},(\hat{\mathbb{P}}_t)_{t\ge0})$ having Brownian motion $(\hat{B}_t)_{t\ge0}.$
	\end{enumerate} 
Then, for each $T>0$ the process $$\left(f_{\epsilon }^{(0)}(T-t,X_t)+\int_0^t\ell(X_s)f_x(T-s,X_s)\,ds\right)_{0\le t\le T}$$ is a local martingale  under the probability measure $\hat{\mathbb{P}}_T.$ 
If this process is a martingale for each  $T>0$, then   
 \begin{equation}\label{eqn:drift_f_eps^0}
 \begin{aligned}
 f_{\epsilon}^{(0)}(T,\xi)
 =\mathbb{E}_\xi^{\hat{\mathbb{P}}}\Big[\int_0^T\ell(X_t) f_x(T-t,X_t)\,dt+\partial_\epsilon|_{\epsilon=0}(h^{(\epsilon)}/\phi^{(\epsilon)})(X_T)\Big] \,.
 \end{aligned}
 \end{equation}
Therefore, if two expectations 
 $\mathbb{E}_\xi^{\hat{\mathbb{P}}}[\int_0^T\ell(X_s) f_x(T-s,X_s)\,ds]$ and  
 $\mathbb{E}_\xi^{\hat{\mathbb{P}}}[\partial_\epsilon|_{\epsilon=0}(h^{(\epsilon)}/\phi^{(\epsilon)})(X_T)]$ are uniformly bounded in $T$ on $[0,\infty),$ 
	then 
$$\left|\frac{1}{T}\frac{\partial}{\partial \epsilon}\Big|_{\epsilon=0} \ln p_T^{(\epsilon)}+\frac{\partial}{\partial \epsilon}\Big|_{\epsilon=0}\lambda^{(\epsilon)}\right|
\leq  \frac{c}{T}$$
for some positive constant $c,$ which is dependent of $\xi$ but independent of $T.$ \end{thm}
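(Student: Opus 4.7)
The plan is to derive a non-homogeneous linear PDE for $f_\epsilon^{(0)}:=\partial_\epsilon|_{\epsilon=0}f^{(\epsilon)}$ by differentiating in $\epsilon$ the PDE satisfied by $f^{(\epsilon)}$, then apply It\^o's formula to $f_\epsilon^{(0)}(T-t,X_t)$ under $\hat{\mathbb{P}}$ to extract the local martingale, and finally substitute the resulting representation \eqref{eqn:drift_f_eps^0} into \eqref{eqn:p_T^eps_deriva} to obtain the $O(1/T)$ estimate.

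First, under $\hat{\mathbb{P}}^{(\epsilon)}$ the perturbed process satisfies $dX_t^{(\epsilon)}=\kappa^{(\epsilon)}(X_t^{(\epsilon)})\,dt+\sigma(X_t^{(\epsilon)})\,d\hat{B}_t^{(\epsilon)}$, so the Feynman--Kac formula applied to \eqref{eqn:1st_order_f_eps} produces the PDE $-f_t^{(\epsilon)}+\tfrac{1}{2}\sigma^2 f_{xx}^{(\epsilon)}+\kappa^{(\epsilon)}f_x^{(\epsilon)}=0$ with $f^{(\epsilon)}(0,x)=(h^{(\epsilon)}/\phi^{(\epsilon)})(x)$, in exact analogy with \eqref{eqn:1st_initial_ode}. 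Assumption \ref{assume:f_eps_pert} grants the continuous differentiability in $\epsilon$ of $f^{(\epsilon)}$ together with $f_t^{(\epsilon)}$, $f_x^{(\epsilon)}$, $f_{xx}^{(\epsilon)}$, so I can differentiate the PDE at $\epsilon=0$ to obtain
\begin{equation*}
-f_{\epsilon t}^{(0)}+\tfrac{1}{2}\sigma^2(x)\,f_{\epsilon xx}^{(0)}+\kappa(x)\,f_{\epsilon x}^{(0)}+\ell(x)\,f_x(t,x)=0,\qquad f_\epsilon^{(0)}(0,x)=\partial_\epsilon|_{\epsilon=0}(h^{(\epsilon)}/\phi^{(\epsilon)})(x).
\end{equation*}

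Next, I apply It\^o's formula to $f_\epsilon^{(0)}(T-t,X_t)$ using the dynamics $dX_t=\kappa(X_t)\,dt+\sigma(X_t)\,d\hat{B}_t$ of $X$ under $\hat{\mathbb{P}}$; the finite-variation part collapses via the PDE above to $-\ell(X_t)f_x(T-t,X_t)\,dt$, and therefore
\begin{equation*}
d\Bigl(f_\epsilon^{(0)}(T-t,X_t)+\int_0^t\ell(X_s)\,f_x(T-s,X_s)\,ds\Bigr)=\sigma(X_t)\,f_{\epsilon x}^{(0)}(T-t,X_t)\,d\hat{B}_t,
\end{equation*}
which is manifestly a $\hat{\mathbb{P}}_T$-local martingale, proving the first claim. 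Under the martingale hypothesis, taking the expectation at $t=0$ and $t=T$ and substituting the initial condition immediately yields \eqref{eqn:drift_f_eps^0}.

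For the final bound I combine \eqref{eqn:drift_f_eps^0} with \eqref{eqn:p_T^eps_deriva}, which the authors have already derived from the Hansen--Scheinkman decomposition $p_T^{(\epsilon)}=\phi^{(\epsilon)}(\xi)e^{-\lambda^{(\epsilon)}T}f^{(\epsilon)}(T,\xi)$. The first summand inside the parentheses on the right-hand side of \eqref{eqn:p_T^eps_deriva} is independent of $T$; the second summand $f_\epsilon^{(0)}(T,\xi)/f(T,\xi)$ has a numerator uniformly bounded by the assumed uniform boundedness of the two expectations in \eqref{eqn:drift_f_eps^0}, while the denominator is continuous in $T$ and converges to a nonzero constant, hence is bounded away from zero on $[0,\infty)$. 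Dividing by $T$ delivers the advertised $c/T$ estimate. The principal technical obstacle is justifying the interchange of $\partial_\epsilon$ with the coefficients and derivatives in the PDE, which is precisely what Assumption \ref{assume:f_eps_pert} is tailored to permit; the gap between local and true martingale is closed by taking the martingale property as a hypothesis.
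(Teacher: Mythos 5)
Your proposal reproduces the paper's own proof step by step: Feynman--Kac applied to $f^{(\epsilon)}$, differentiation of the resulting PDE in $\epsilon$ at $\epsilon=0$ under the regularity supplied by Assumption \ref{assume:f_eps_pert}, It\^o's formula applied to $f_\epsilon^{(0)}(T-t,X_t)$ plus the compensating integral $\int_0^t\ell(X_s)f_x(T-s,X_s)\,ds$ to identify the $\hat{\mathbb{P}}_T$-local martingale, the probabilistic representation \eqref{eqn:drift_f_eps^0} under the martingale hypothesis, and the final substitution into \eqref{eqn:p_T^eps_deriva} with the observation that $f(T,\xi)$ is continuous and tends to a nonzero limit. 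The approach and every key ingredient coincide with the paper's argument.
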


\begin{proof}
	Applying the Feynman--Kac formula to Eq.\eqref{eqn:1st_order_f_eps}, we get
	$$-f_t^{(\epsilon)}+\frac{1}{2}\sigma^{2}(x)f_{xx}^{(\epsilon)}+\kappa^{(\epsilon)}(x)f_x^{(\epsilon)}=0\,,\;f^{(\epsilon)}(0,x)=(h^{(\epsilon)}/\phi^{(\epsilon)})(x)$$
	for
	$\kappa^{(\epsilon)}=b^{(\epsilon)}+\sigma^{2}\phi_x^{(\epsilon)}/\phi^{(\epsilon)}.$
Let us differentiate this PDE in $\epsilon$ and evaluate it at $\epsilon=0$ (Assumptions \ref{assume:epsilon} and  \ref{assume:f_eps_pert}). Then
\begin{equation}
\label{eqn:f_eps_PDE}
-f_{\epsilon t}^{(0)}+\frac{1}{2}\sigma^2(x)f_{\epsilon xx}^{(0)}+\kappa(x)f_{\epsilon x}^{(0)}+\ell(x)f_x^{(0)}=0\,,\;f_\epsilon^{(0)}(0,x)=\partial_\epsilon|_{\epsilon=0}(h^{(\epsilon)}/\phi^{(\epsilon)})(x)\,.
\end{equation}
Applying the Ito formula to the process $(f_{\epsilon }^{(0)}(T-t,X_t)+\int_0^t\ell(X_s)f_x(T-s,X_s)\,ds)_{0\le t\le T},$ it follows that
\begin{equation}\label{eqn:drift_sen_Ito}
\begin{aligned}
&d\Big(f_{\epsilon}^{(0)}(T-t,{X}_t)+\int_0^t\ell({X}_s)f_x(T-s,{X_s})\,ds\Big)\\
=&\;\big(-f_{\epsilon t}^{(0)}+\frac{1}{2}\sigma^2(X_t)f_{\epsilon xx}^{(0)}+\kappa({X}_t)f_{\epsilon x}^{(0)} +\ell({X}_t)f_x\big)\,dt+\sigma({X}_t)f_{\epsilon x}^{(0)}  \,d\hat{B}_t=\sigma({X}_t)f_{\epsilon x}^{(0)}  \,d\hat{B}_t\,.
\end{aligned}
\end{equation}
Thus, the process $(f_{\epsilon }^{(0)}(T-t,X_t)+\int_0^t\ell(X_s)f_x(T-s,X_s)\,ds)_{0\le t\le T}$ is a local martingale under the probability measure $\hat{\mathbb{P}}_T.$

Now, assume that
\begin{equation} 
\begin{aligned}
f_{\epsilon}^{(0)}(T,\xi)
=\mathbb{E}_\xi^{\hat{\mathbb{P}}}\Big[\int_0^T\ell(X_t) f_x(T-t,X_t)\,dt+\partial_\epsilon|_{\epsilon=0}(h^{(\epsilon)}/\phi^{(\epsilon)})(X_T)\Big]\,. 
\end{aligned}
\end{equation}
This equality also holds if the process
$(f_{\epsilon }^{(0)}(T-t,X_t)+\int_0^t\ell(X_s)f_x(T-s,X_s)\,ds)_{0\le t\le T}$ is a  martingale because
\begin{equation}
\begin{aligned}
f_{\epsilon}^{(0)}(T,\xi)
&=\mathbb{E}_\xi^{\hat{\mathbb{P}}}\Big[\int_0^T\ell(X_t) f_x(T-t,X_t)\,dt+f_{\epsilon}^{(0)}(0,X_T)\Big]\\
&=\mathbb{E}_\xi^{\hat{\mathbb{P}}}\Big[\int_0^T\ell(X_t) f_x(T-t,X_t)\,dt+\partial_\epsilon|_{\epsilon=0}(h^{(\epsilon)}/\phi^{(\epsilon)})(X_T)\Big]\,.
\end{aligned}
\end{equation}
Since two expectations $\mathbb{E}_\xi^{\hat{\mathbb{P}}}[\int_0^T\ell(X_t) f_x(T-t,X_t)\,dt]$ and  
$\mathbb{E}_\xi^{\hat{\mathbb{P}}}[\partial_\epsilon|_{\epsilon=0}(h^{(\epsilon)}/\phi^{(\epsilon)})(X_T)]$ are uniformly bounded in $T$ on $[0,\infty)$,  the function
$f_\epsilon^{(0)}(T,\xi)$ is also uniformly bounded in $T$ on $[0,\infty).$ 
Using Eq.\eqref{eqn:p_T^eps_deriva},
we obtain the desired result. 
\end{proof}

\begin{remark}	Under the same hypothesis of Theorem \ref{thm:drift_pert},
	we obtain   the probabilistic representation 
	$$f_{\epsilon}^{(0)}(T,\xi)
	=\mathbb{E}_\xi^{\hat{\mathbb{P}}}\Big[\int_0^T\ell(X_t) f_x(T-t,X_t)\,dt+\partial_\epsilon|_{\epsilon=0}(h^{(\epsilon)}/\phi^{(\epsilon)})(X_T)\Big] $$
	presented in Eq.\eqref{eqn:drift_f_eps^0} if
	$$\mathbb{E}_\xi^{\hat{\mathbb{P}}}\Big[\int_0^T(\sigma({X}_t)f_{\epsilon x}^{(0)}(T-t,X_t))^2  \,dt\Big]<\infty\,.$$ 
	This is evident from Eq.\eqref{eqn:drift_sen_Ito} because the process $(f_{\epsilon }^{(0)}(T-t,X_t)+\int_0^t\ell(X_s)f_x(T-s,X_s)\,ds)_{0\le t\le T}$ is a martingale under the probability measure $\hat{\mathbb{P}}_T.$
\end{remark}

\subsubsection{Diffusion-term sensitivity}
\label{sec:diffusion}

This section investigates the long-term sensitivity  with respect to a perturbation of the diffusion term. 
We provide a sufficient condition  for Eq.\eqref{eqn:para_pert_aim} to hold. Let $(b^{(\epsilon)},\sigma^{(\epsilon)},r^{(\epsilon)},f^{(\epsilon)})$  be a quadruple  of functions   satisfying Assumptions  \ref{assume:epsilon} -- \ref{assume:f_eps_pert}
on the consistent probability space $(\Omega,\mathcal{F},(\mathcal{F}_t)_{t\ge0},(\mathbb{P}_t)_{t\ge0})$ having Brownian motion $(B_t)_{t\ge0}.$
The three  quadruples  listed
in Table \ref{Tab:Tcr} are used in this section.
We  recall that
$\Sigma^{(\epsilon)}=\partial_\epsilon\sigma^{(\epsilon)}$ and $\Sigma=\Sigma^{(0)}.$

\begin{thm}
Suppose that the following conditions hold.
	\begin{enumerate}[(i)]
	\item The quadruple $(b^{(\epsilon)},\sigma^{(\epsilon)},r^{(\epsilon)},f^{(\epsilon)})$ satisfies Assumptions \ref{assume:epsilon}-- \ref{assume:f_eps_pert} on the consistent probability space $(\Omega,\mathcal{F},(\mathcal{F}_t)_{t\ge0},(\mathbb{P}_t)_{t\ge0})$ having Brownian motion $(B_t)_{t\ge0}.$  
	\item The quadruple $(\kappa+\sigma'\sigma,\sigma,-\kappa',(h/\phi)')$ satisfies Assumptions \ref{assume:a} -- \ref{assume:h}
	on the consistent probability space
	$(\Omega,\mathcal{F},(\mathcal{F}_t)_{t\ge0},(\hat{\mathbb{P}}_t)_{t\ge0})$ having Brownian motion $(\hat{B}_t)_{t\ge0}.$  
	\item The quadruple   $(\gamma+\sigma'\sigma,\sigma,-\gamma',((h/\phi)'/\hat{\phi})')$ satisfies Assumptions \ref{assume:a} -- \ref{assume:h} 	on the consistent probability space $(\Omega,\mathcal{F},(\mathcal{F}_t)_{t\ge0},(\tilde{\mathbb{P}}_t)_{t\ge0})$ having Brownian motion $(\tilde{B}_t)_{t\ge0}.$ 
\end{enumerate}  
Then, for each $T>0$
 the process $$\Big(f_{\epsilon }^{(0)}(T-t,X_t)+\int_0^t((\sigma\Sigma)(X_s)f_{xx}(T-s,X_s)+\ell(X_s)f_x(T-s,X_s))\,ds\Big)_{0\le t\le T}$$ is a
 local martingale 
 under the probability measure $\hat{\mathbb{P}}_T.$
If this process is a martingale for each $T>0$,
 \begin{equation}\label{eqn:drift_f_eps^0_vega}
 \begin{aligned}
 f_{\epsilon}^{(0)}(T,\xi)
 &=\mathbb{E}_\xi^{\hat{\mathbb{P}}}\Big[\int_0^T((\sigma\Sigma)(X_t) f_{xx}(T-t,X_t)+\ell(X_t) f_x(T-t,X_t))\,dt\Big]+\mathbb{E}_\xi^{\hat{\mathbb{P}}}[\partial_\epsilon|_{\epsilon=0}(h^{(\epsilon)}/\phi^{(\epsilon)})(X_T)] \,.
 \end{aligned}
 \end{equation}
Therefore, if
three expectations $\mathbb{E}_\xi^{\hat{\mathbb{P}}}[\int_0^T(\sigma\Sigma)(X_s) f_{xx}(t-s,X_s)\,ds],$
$\mathbb{E}_\xi^{\hat{\mathbb{P}}}[\int_0^T\ell(X_s) f_x(t-s,X_s)\,ds]$ and  
$\mathbb{E}_\xi^{\hat{\mathbb{P}}}[\partial_\epsilon|_{\epsilon=0}(h^{(\epsilon)}/\phi^{(\epsilon)})(X_T)]$ are uniformly bounded in $T$ on $[0,\infty),$
then
	$$\left|\frac{1}{T}\frac{\partial}{\partial \epsilon}\Big|_{\epsilon=0} \ln p_T^{(\epsilon)}+\frac{\partial}{\partial \epsilon}\Big|_{\epsilon=0}\lambda^{(\epsilon)}\right|
	\leq  \frac{c}{T}$$
	for some positive constant $c,$ 
which is dependent of $\xi$ but independent of $T.$
 \end{thm}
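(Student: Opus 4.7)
My plan is to mirror the proof of Theorem \ref{thm:drift_pert}, with the novelty that now $\sigma^{(\epsilon)}$ also depends on the perturbation parameter, producing an extra second-derivative term in the differentiated PDE. First, I would apply the Feynman--Kac formula to Eq.\eqref{eqn:1st_order_f_eps}: under $\hat{\mathbb{P}}^{(\epsilon)}$ the process $X^{(\epsilon)}$ has drift $\kappa^{(\epsilon)}$ and diffusion $\sigma^{(\epsilon)}$, so $f^{(\epsilon)}$ satisfies
$$-f_t^{(\epsilon)}+\frac{1}{2}(\sigma^{(\epsilon)})^2(x)f_{xx}^{(\epsilon)}+\kappa^{(\epsilon)}(x)f_x^{(\epsilon)}=0,\quad f^{(\epsilon)}(0,x)=(h^{(\epsilon)}/\phi^{(\epsilon)})(x).$$
Differentiating in $\epsilon$ at $\epsilon=0$, using $\partial_\epsilon|_{\epsilon=0}(\sigma^{(\epsilon)})^2=2\sigma\Sigma$ and $\partial_\epsilon|_{\epsilon=0}\kappa^{(\epsilon)}=\ell$ (justified termwise by Assumption \ref{assume:f_eps_pert}), yields
$$-f_{\epsilon t}^{(0)}+\frac{1}{2}\sigma^2(x)f_{\epsilon xx}^{(0)}+\kappa(x)f_{\epsilon x}^{(0)}+(\sigma\Sigma)(x)f_{xx}+\ell(x)f_x=0,$$
with $f_\epsilon^{(0)}(0,x)=\partial_\epsilon|_{\epsilon=0}(h^{(\epsilon)}/\phi^{(\epsilon)})(x)$. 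The only structural difference from the drift-only case in Eq.\eqref{eqn:f_eps_PDE} is the additional $(\sigma\Sigma)f_{xx}$ source term.

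Second, I would apply Itô's formula to $f_\epsilon^{(0)}(T-t,X_t)$ under $\hat{\mathbb{P}}_T$, where $X$ satisfies $dX_t=\kappa(X_t)\,dt+\sigma(X_t)\,d\hat{B}_t$. The finite-variation part $(-f_{\epsilon t}^{(0)}+\kappa f_{\epsilon x}^{(0)}+\tfrac{1}{2}\sigma^2 f_{\epsilon xx}^{(0)})\,dt$ equals $-((\sigma\Sigma)f_{xx}+\ell f_x)\,dt$ by the differentiated PDE, so adding the compensator $\int_0^t((\sigma\Sigma)(X_s)f_{xx}(T-s,X_s)+\ell(X_s)f_x(T-s,X_s))\,ds$ cancels it and leaves only the Itô integral $\int_0^t\sigma(X_s)f_{\epsilon x}^{(0)}(T-s,X_s)\,d\hat{B}_s$, establishing the local martingale claim. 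If this process is actually a martingale under $\hat{\mathbb{P}}_T$, then taking expectation conditional on $X_0=\xi$ and evaluating the compensated process at $t=T$ using $f_\epsilon^{(0)}(0,X_T)=\partial_\epsilon|_{\epsilon=0}(h^{(\epsilon)}/\phi^{(\epsilon)})(X_T)$ directly delivers Eq.\eqref{eqn:drift_f_eps^0_vega}. The uniform boundedness of the three expectations on its right-hand side then gives $|f_\epsilon^{(0)}(T,\xi)|\leq c$ uniformly in $T$; substituting into Eq.\eqref{eqn:p_T^eps_deriva}, together with $f(T,\xi)$ converging to a nonzero constant (Assumption \ref{assume:h}), produces the required $O(1/T)$ bound.

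The main obstacle I anticipate is technical: the new $(\sigma\Sigma)f_{xx}$ term demands that $f_{xx}$ be well-defined and suitably integrable along the paths of $X$ under $\hat{\mathbb{P}}_T$. The role of the third quadruple $(\gamma+\sigma'\sigma,\sigma,-\gamma',((h/\phi)'/\hat{\phi})')$ is precisely to supply this regularity via the identity $f_{xx}=\hat{f}_x\,e^{-\hat{\lambda}t}\hat{\phi}+(\hat{\phi}'/\hat{\phi})f_x$ from Eq.\eqref{eqn:second-order_f_xx}, together with the $C^{1,2}$-smoothness of $\hat{f}$ guaranteed by Assumption \ref{assume:h} applied to that quadruple. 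Promoting the local martingale to a genuine martingale would proceed via a linear-growth Feynman--Kac criterion analogous to Proposition \ref{prop:1st_initial_sen_linear}, now applied to the enlarged process above rather than to $f_x$ alone.
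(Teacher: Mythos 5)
Your proposal follows the paper's proof essentially verbatim: Feynman--Kac for $f^{(\epsilon)}$, differentiation of the PDE in $\epsilon$ at $\epsilon=0$ (picking up the extra $(\sigma\Sigma)f_{xx}$ source term), Itô's formula to establish the compensated process as a local martingale, passing to the expectation representation under the martingale hypothesis, and closing with uniform boundedness and Eq.\eqref{eqn:p_T^eps_deriva}. Your closing remarks on the role of the third quadruple in guaranteeing regularity of $f_{xx}$ via Eq.\eqref{eqn:second-order_f_xx} are a correct and useful reading of why those hypotheses appear, though the paper leaves this implicit.
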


\begin{proof}
	  
Applying the Feynman--Kac formula to Eq.\eqref{eqn:1st_order_f_eps}, we get
$$-f_t^{(\epsilon)}+\frac{1}{2}\sigma^{(\epsilon)2}(x)f_{xx}^{(\epsilon)}+\kappa^{(\epsilon)}(x)f_x^{(\epsilon)}=0\,,\;f^{(\epsilon)}(0,x)=(h^{(\epsilon)}/\phi^{(\epsilon)})(x)$$
for
$\kappa^{(\epsilon)}=b^{(\epsilon)}+\sigma^{(\epsilon)2}\phi_x^{(\epsilon)}/\phi^{(\epsilon)}.$
Let us differentiate this PDE in $\epsilon$ and evaluate it at $\epsilon=0,$ then 
$$-f_{\epsilon t}^{(0)}+\frac{1}{2}\sigma^{2}(x)f_{\epsilon xx}^{(0)}+\kappa(x)f_{\epsilon x}^{(0)}+(\sigma\Sigma)(x)f_{xx}^{(0)} +\ell(x)f_x^{(0)}=0\,,\;f_\epsilon^{(0)}(0,x)=\partial_\epsilon|_{\epsilon=0}(h^{(\epsilon)}/\phi^{(\epsilon)})(x)\,.$$
From the Ito formula, one can show that 
the process $$\Big(f_{\epsilon }^{(0)}(T-t,X_t)+\int_0^t((\sigma\Sigma)(X_s)f_{xx}(T-s,X_s)+\ell(X_s)f_x(T-s,X_s))\,ds\Big)_{0\le t\le T}$$ is a
local martingale under the probability measure $\hat{\mathbb{P}}_T$ by checking that the $dt$ term vanishes. 

Suppose that  $f_\epsilon$ satisfies
\begin{equation}
\begin{aligned}
f_{\epsilon}^{(0)}(T,\xi)
&=\mathbb{E}_\xi^{\hat{\mathbb{P}}}\Big[\int_0^T((\sigma\Sigma)(X_t) f_{xx}(T-t,X_t)+\ell(X_t) f_x(T-t,X_t))\,dt\Big]+\mathbb{E}_\xi^{\hat{\mathbb{P}}}[\partial_\epsilon|_{\epsilon=0}(h^{(\epsilon)}/\phi^{(\epsilon)})(X_T)] \,.
\end{aligned}
\end{equation}
This equality  holds if the local martingale above is a martingale.
Since three expectations
$\mathbb{E}_\xi^{\hat{\mathbb{P}}}[\int_0^T(\sigma\Sigma)(X_t) f_{xx}(T-t,X_t)\,dt],$ $\mathbb{E}_\xi^{\hat{\mathbb{P}}}[\int_0^T\ell(X_t) f_x(T-t,X_t)\,dt]$ and  
$\mathbb{E}_\xi^{\hat{\mathbb{P}}}[\partial_\epsilon|_{\epsilon=0}(h^{(\epsilon)}/\phi^{(\epsilon)})(X_T)]$ are uniformly bounded in $T$ on $[0,\infty)$,  the function
$f_\epsilon^{(0)}(T,\xi)$ is also uniformly bounded in $T$ on $[0,\infty).$ 
Using Eq.\eqref{eqn:p_T^eps_deriva},
we obtain the desired result. 
\end{proof}

\section{Applications}
\label{sec:appli}

We present applications of the previous results  to three practical problems: utility maximization, entropic risk measures and bond prices.

\subsection{Utility maximization}

This section discusses the classical utility maximization problem in complete markets as an application.
We first describe the general Ito process models to formulate the utility maximization problem.
Then, more specific models, namely factor models and  local volatility models, are discussed in Sections \ref{sec:factor_model} and \ref{sec:lvm}, respectively.

Let 
$(\Omega,\mathcal{F},(\mathcal{F}_t)_{t\ge0},({\bf P}_t)_{t\ge0})$
be a consistent probability space   having a $d$-dimensional Brownian motion $Z=(Z_t^{(1)},\cdots,Z_t^{(d)})_{t\ge0}^\top.$ 
The filtration $(\mathcal{F}_t)_{t\ge0}$ satisfies the usual condition.  The probability measures $({\bf P}_t)_{t\ge0}$ are referred to as the physical measures of the market.
An Ito process model is describes as follows.  
\begin{enumerate}[(i)]
	\item The bank account, denoted as $(G_t)_{t\ge0},$ is a process given by 
	\begin{equation}
	\label{eqn:Ito_G}
	G_t=e^{\int_0^tr_u\,du}\,,\;t\ge0
	\end{equation} 
	where  $(r_t)_{t\ge0}$ is a progressively measurable process taking values in $[0,\infty)$ such that $\int_0^Tr_u\,du<\infty$ ${\bf P}_T$-almost surely for each $T\ge0.$
	\item There are $d$ stocks $(S_t)_{t\ge0}=(S_t^{(1)},\cdots,S_t^{(d)})_{t\ge0}^\top$  described as
	\begin{equation}
	\label{eqn:Ito stocks}
	S_t^{(i)}=S_0^{(i)}e^{\int_0^t(\mu_u^{(i)}-\frac{1}{2}|\sigma_u^{(i)}|^2)\,du+\int_0^t\sigma_u^{(i)} dZ_u}\,,\;S_0^{(i)}>0
	\end{equation} 
	for $i=1,2,\cdots,d,$
	where $(\mu_t^{(i)})_{t\ge0}$ is a progressively measurable process with $\int_0^T|\mu_u^{(i)}|\,du<\infty$ ${\bf P}_T$-almost surely for each $T\ge0,$ and $(\sigma_t^{(i)})_{t\ge0}$ is a $d$-dimensional progressively measurable row process  with $\int_0^T|\sigma_u^{(i)}|^2\,du<\infty$ ${\bf P}_T$-almost surely for each $T\ge0.$
\end{enumerate}
For simpler expressions, we define a $d$-dimensional column process
\begin{equation}\label{eqn:Ito_model_vol_drift}
(\mu_t)_{t\ge0}:=(\mu_t^{(1)},\cdots,\mu_t^{(d)})_{t\ge0}^\top
\end{equation} 
and a $d\times d$-matrix process
\begin{equation}\label{eqn:Ito_model_vol_matrix}
\begin{aligned}
(\sigma_t)_{t\ge0}=
\begin{pmatrix}
   \sigma_t^{(1)}\\
 \sigma_t^{(2)}\\
   \vdots\\
   \sigma_t^{(d)}
\end{pmatrix}\,.
\end{aligned}
\end{equation}
In the SDE form, we can write
$$dG_t=r_tG_t\,dt$$
and
\begin{equation}
\label{eqn:Ito stocks_SDE}
dS_t=D(S_t)\mu_t\,dt+D(S_t)\sigma_t\,dZ_t
\end{equation} 
where $D(x)$ for $x=(x_1,\cdots,x_d)^\top$ is the diagonal matrix whose $i$-th diagonal entry is $x_i$ for $i=1,2\cdots,d$ and off-diagonal entries are zero.

A self-financing portfolio is a pair $(x,\pi)$ of a real number $x$ and  a $d$-dimensional  progressively measurable  process $(\pi_t)_{t\ge0}$ that is $(S_t/G_t)_{0\le t\le T}$-integrable  for each $T\ge0.$
The value process $\Pi=(\Pi_t)_{t\ge0}=(\Pi_t^{(x,\pi)})_{t\ge0}$ 
of the portfolio  $(x,\pi)$ is given by 
\begin{equation}
\label{eqn:port_value}
\Pi_t^{(x,\pi)}=xG_t+G_t\int_0^t\pi_u\,d(S/G)_u\,.
\end{equation} 
For a positive number $x,$ let  $\mathcal{X}_T(x)$ denote
the family of nonnegative value processes with  initial value $x$, i.e.,
$$\mathcal{X}_T(x)=\big\{(\Pi_t^{(x,\pi)})_{0\le t\le T}: \exists \textnormal{ a self-financing portfolio  } (x,\pi) \textnormal{ such that } \Pi_t^{(x,\pi)} \ge 0 \textnormal{ for all } 0\le t\le T  \big\}\,.$$ 
Define 
\begin{equation}
\label{eqn:wealth}\mathcal{X}_T=\mathcal{X}_T(1)\,.
\end{equation}

We now construct
a consistent family of risk-neutral measures $({\bf Q}_t)_{t\ge0}.$
\begin{assume}\label{assume:theta} Consider the following conditions.
\begin{enumerate}[(i)]
	\item The $d\times d$-matrix process
	$\sigma:[0,T]\times \Omega\to\mathbb{R}^{d\times d}$  
	is invertible $\textnormal{Leb}\otimes \mathbb{P}_T$-almost surely
	for each $T\ge0.$ 
	\item The process 
	\begin{equation}
	\label{eqn:theta_utility_max}
	\theta_t:=\sigma_t^{-1}(\mu_t-r_t{\bf 1})\,,\;0\le t\le T
	\end{equation}
 is square-integrable, i.e., $\int_0^T|\theta_u|^2\,du<\infty,$ ${\bf P}_T$-almost surely for each $T\ge0.$ This process $\theta$ is referred to as the market price of risk.
	\item A local martingale 
	$$(e^{-\int_0^t\theta_u\,dZ_u-\frac{1}{2}\int_0^t|\theta_u|^2\,du})_{0\le t\le T}$$
	is a martingale  under the probability measure ${\bf P}_T$ for each $T\ge0.$
\end{enumerate}
\end{assume}

\begin{remark}\label{rmk:NFLVR} 
	The fundamental theorem of asset pricing states that Assumption \ref{assume:theta} is equivalent to the NFLVR (no free lunch with vanishing risk) condition on time interval $[0,T]$ for each $T>0$. Refer to \cite{delbaen1994general}.
\end{remark}

\noindent Define a probability measure ${\bf Q}_T$ on $\mathcal{F}_T$ by
\begin{equation}\label{eqn:LN}
L_T:=\frac{d{\bf Q}_T}{d{\bf P}_{T}}=e^{-\int_0^T\theta_u\,dZ_u-\frac{1}{2}\int_0^T|\theta_u|^2\,du}\,.
\end{equation}
Then, the probability measure ${\bf Q}_T$ is equivalent to ${\bf P}_T$ and the processes $S^{(1)}/G,\cdots,S^{(d)}/G$    are ${\bf Q}_T$-local martingales. 
As is well known, this measure ${\bf Q}_T$ is called the {\em risk-neutral measure}. 
It is evident that
the family of risk-neutral measures 
$({\bf Q}_t)_{t\ge0}$ is consistent.
Define a process $W$ as
$$W_t=\int_0^t\theta_u\,du+Z_t\,,\;t\ge0\,.$$
Then, $W$ is a Brownian motion 
on the consistent probability space $(\Omega,\mathcal{F},(\mathcal{F}_t)_{t\ge0},({\bf Q}_t)_{t\ge0})$
by the Girsanov theorem.

In this market, an agent    has a utility function
$U:[0,\infty)\to \mathbb{R}$ for wealth. Assume that the utility function is a power function of the form
\begin{equation}
\label{eqn:utility_ftn}
u(x)=x^\nu/\nu
\end{equation} 
for $\nu<0.$
This utility function is increasing, strictly concave, continuously differentiable and satisfies the Inada conditions. For given unit initial endowment,  the agent wants to maximize the  
expected utility 
\begin{equation}
\label{eqn:utility_max}
\max_{\Pi\in \mathcal{X}_T}\mathbb{E}^{\bf{P}}[U(\Pi_T)]\,,
\end{equation} 
of which the long-term sensitivity with respect to small changes of the underlying process is of interest to us.

\cite{kramkov1999asymptotic} states that the optimal portfolio value is
$$\Pi_T^{\textnormal{(opt)}}=c_T(U')^{-1}(L_T/G_T)=c_T(L_T/G_T)^{1/{(\nu-1)}}$$
where 
$c_T$ is a constant satisfying the budget constraint
$$1=\mathbb{E}^{{\bf Q}}[\Pi_T^{\textnormal{(opt)}}/G_T]
=c_T\mathbb{E}^{\bf Q}[(L_T/G_T^\nu)^{1/(\nu-1)}]\,.$$
Thus, the optimal expected utility is
\begin{equation}\label{eqn:opt_exp}
\begin{aligned}
\mathbb{E}^{{\bf P}}[U(\Pi_T^{\textnormal{(opt)}})]
&=\mathbb{E}^{\bf P}[U(c_T(L_T/G_T)^{1/{(\nu-1)}})]=\frac{1}{\nu}c_T^\nu\mathbb{E}^{\bf P}[(L_T/G_T)^{\nu/(\nu-1)}]\\
&=\frac{1}{\nu}\frac{\mathbb{E}^{\bf P}[(L_T/G_T)^{\nu/(\nu-1)}]}{(\mathbb{E}^{{\bf Q}}[(L_T/G_T^\nu)^{1/(\nu-1)}])^\nu}
=\frac{1}{\nu}\frac{\mathbb{E}^{{\bf Q}}[(L_T/G_T^\nu)^{1/(\nu-1)}]}{(\mathbb{E}^{{\bf Q}}[(L_T/G_T^\nu)^{1/(\nu-1)}])^\nu}
=\frac{1}{\nu}(\mathbb{E}^{{\bf Q}}[(L_T/G_T^\nu)^{1/(\nu-1)}])^{1-\nu}\;.
\end{aligned}
\end{equation}
Defining
$$u_T=\mathbb{E}^{{\bf Q}}[(L_T/G_T^\nu)^{1/(\nu-1)}]\,,$$
the  optimal expected utility satisfies 
$$\max_{\Pi\in \mathcal{X}}\mathbb{E}^{\bf P}[U(\Pi_T)]=\mathbb{E}^{\bf P}[U(\Pi_T^{\textnormal{(opt)}})]= {u_T^{1-\nu}}/{\nu}\,.$$
Thus, the problem of studying the long-term sensitivity  of the  optimal expected utility boils down to analyzing the expectation $u_T.$  
 
We can analyze  the large-time behavior of $u_T$  for small changes of the underlying process as follows.
From Eq.\eqref{eqn:LN},  the Radon--Nikodym derivative $L_T$ is
$$L_T=e^{-\int_0^T\theta_u\,dZ_u-\frac{1}{2}\int_0^T|\theta_u|^2\,du}=e^{-\int_0^T\theta_u\,dW_u+\frac{1}{2}\int_0^T|\theta_u|^2\,du}\,,$$
thus
\begin{equation}
\begin{aligned}
u_T=\mathbb{E}^{{\bf Q}}[(L_T/G_T^\nu)^{1/(\nu-1)}]
&=\mathbb{E}^{\bf Q}[e^{-\frac{1}{\nu-1}\int_0^T\theta_u\, dW_u+\frac{1}{2(\nu-1)}\int_0^T|\theta_u|^2\,du-\frac{\nu}{\nu-1}\int_0^Tr_u\,du}]\\
&=\mathbb{E}^\mathbb{P}[e^{\frac{\nu}{2(\nu-1)^2}\int_0^T|\theta_u|^2\,du-\frac{\nu}{\nu-1}\int_0^Tr_u\,du}]
\end{aligned}
\end{equation}
where $(\mathbb{P}_t)_{t\ge0}$ is a consistent family of probability measures defined  from $({\bf Q}_t)_{t\ge0}$ by the Girsanov kernel $\frac{1}{\nu-1}\theta,$ i.e., $\mathbb{P}_t$ is a probability measure on $\mathcal{F}_t$ given as
\begin{equation}
\label{eqn:applil_P_T}
\frac{d{\mathbb{P}_t}}{d{\bf Q}_t}=e^{-\frac{1}{\nu-1}\int_0^t\theta_u  dW_u-\frac{1}{2(\nu-1)^2}\int_0^t|\theta_u|^2\,du}\,.
\end{equation}
Here, the following condition was assumed for this change of measures.
\begin{assume}\label{assume:application_martingale}
	 A local martingale
	 $$(e^{-\frac{1}{\nu-1}\int_0^t\theta_u  dW_u-\frac{1}{2(\nu-1)^2}\int_0^t|\theta_u|^2\,du})_{0\le t\le T}$$
	  is a martingale under the probability measure ${\bf Q}_T$ for each $T\ge0.$
\end{assume}
\noindent Then, a process $B$ defined by
\begin{equation}
\label{eqn:B}
B_t:=\frac{1}{\nu-1}\int_0^t\theta_u\,du+W_t,\,t\ge0
\end{equation}
is a Brownian motion
on the consistent probability space 
$(\Omega,\mathcal{F},(\mathcal{F}_t)_{t\ge0},(\mathbb{P}_t)_{t\ge0}).$

We can summarize the arguments of this section as follows.
\begin{prop}\label{prop:final_utility_max}
	Consider the Ito process model with  the bank account and the $d$ stocks stated in Eq.\eqref{eqn:Ito_G} and Eq.\eqref{eqn:Ito stocks}, respectively. 
	Under Assumptions \ref{assume:theta} and \ref{assume:application_martingale}, we define a consistent
	family of probability  measures  $(\mathbb{P}_t)_{t\ge0}$ 	
	by Eq.\eqref{eqn:applil_P_T}.
	Then, for the power utility function $u(x)=x^\nu/\nu,$  
  $\nu<0$ and the family of nonnegative wealth processes $\mathcal{X}$ in Eq.\eqref{eqn:wealth},	the  optimal expected utility  
 is	$$\max_{\Pi\in \mathcal{X}}\mathbb{E}^{\bf P}[U(\Pi_T)]= {u_T^{1-\nu}}/{\nu}$$
		where
		\begin{equation}
		\label{eqn:final_u_T}
		u_T=\mathbb{E}^\mathbb{P}[e^{\frac{\nu}{2(\nu-1)^2}\int_0^T|\theta_u|^2\,du-\frac{\nu}{\nu-1}\int_0^Tr_u\,du}]\,.
		\end{equation} 
\end{prop}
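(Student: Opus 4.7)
The plan is to follow the chain of computations already outlined in the surrounding text and turn it into a self-contained proof. The first step is to invoke the Kramkov--Schachermayer duality theorem (\cite{kramkov1999asymptotic}) to identify the terminal value of the optimal wealth process. Since the market is complete (the invertibility of $\sigma_t$ in Assumption \ref{assume:theta} yields a unique equivalent martingale measure ${\bf Q}_T$ on each $\mathcal{F}_T$), the optimizer is given pathwise by $\Pi_T^{\textnormal{(opt)}}=c_T(U')^{-1}(L_T/G_T)$, and for $U(x)=x^\nu/\nu$ this reduces to $\Pi_T^{\textnormal{(opt)}}=c_T(L_T/G_T)^{1/(\nu-1)}$. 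The Lagrange multiplier $c_T$ is pinned down by the budget constraint $\mathbb{E}^{\bf Q}[\Pi_T^{\textnormal{(opt)}}/G_T]=1$, which yields $c_T=1/\mathbb{E}^{\bf Q}[(L_T/G_T^\nu)^{1/(\nu-1)}]$.

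Next I would substitute $\Pi_T^{\textnormal{(opt)}}$ back into $\mathbb{E}^{\bf P}[U(\Pi_T^{\textnormal{(opt)}})]$ and simplify using the identity $\mathbb{E}^{\bf P}[(L_T/G_T)^{\nu/(\nu-1)}]=\mathbb{E}^{\bf Q}[L_T^{1/(\nu-1)}G_T^{-\nu/(\nu-1)}]$ obtained from the definition $dL_T=d{\bf Q}_T/d{\bf P}_T$. The algebra collapses exactly as in Eq.\eqref{eqn:opt_exp} to give $\max_{\Pi\in\mathcal{X}}\mathbb{E}^{\bf P}[U(\Pi_T)]=u_T^{1-\nu}/\nu$ with $u_T=\mathbb{E}^{\bf Q}[(L_T/G_T^\nu)^{1/(\nu-1)}]$.

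The remaining task is to rewrite this $\bf Q$-expectation as the $\mathbb{P}$-expectation stated in Eq.\eqref{eqn:final_u_T}. Using Eq.\eqref{eqn:LN} and the Girsanov relation $dW_t=dZ_t+\theta_t\,dt$, the density $L_T$ can be rewritten in terms of $W$ as $L_T=\exp(-\int_0^T\theta_u\,dW_u+\tfrac12\int_0^T|\theta_u|^2\,du)$. Substituting this into $u_T$ and collecting the exponents produces
\begin{equation}
u_T=\mathbb{E}^{\bf Q}\!\left[\exp\!\left(-\tfrac{1}{\nu-1}\!\int_0^T\!\theta_u\,dW_u+\tfrac{1}{2(\nu-1)}\!\int_0^T\!|\theta_u|^2\,du-\tfrac{\nu}{\nu-1}\!\int_0^T\!r_u\,du\right)\right].
\end{equation}
Applying Girsanov once more using the density $d\mathbb{P}_T/d{\bf Q}_T$ from Eq.\eqref{eqn:applil_P_T} (which is a true martingale by Assumption \ref{assume:application_martingale}) cancels the stochastic integral $-\frac{1}{\nu-1}\int_0^T\theta_u\,dW_u$ against the quadratic-variation correction and replaces the $\frac{1}{2(\nu-1)}$ coefficient by $\frac{\nu}{2(\nu-1)^2}$, yielding exactly Eq.\eqref{eqn:final_u_T}.

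The only real obstacle is the bookkeeping in this last Girsanov step: one must verify that the stochastic exponential in Eq.\eqref{eqn:applil_P_T} is a true $\bf Q$-martingale, which is precisely what Assumption \ref{assume:application_martingale} supplies, and then carefully combine the coefficients $\tfrac{1}{2(\nu-1)}-\tfrac{1}{2(\nu-1)^2}=\tfrac{\nu}{2(\nu-1)^2}$ to confirm the exponent. The rest of the argument is a routine string of substitutions that I would present as a single display equation.
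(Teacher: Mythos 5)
Your argument follows the same route as the paper (which essentially lays it out in the text preceding the Proposition): Kramkov--Schachermayer duality, budget constraint, the algebraic collapse of Eq.~\eqref{eqn:opt_exp}, rewriting $L_T$ in terms of $W$, and a final Girsanov change of measure from ${\bf Q}$ to $\mathbb{P}$ whose validity rests on Assumption~\ref{assume:application_martingale}.

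There is, however, an arithmetic slip in your final bookkeeping. You state the identity $\tfrac{1}{2(\nu-1)}-\tfrac{1}{2(\nu-1)^2}=\tfrac{\nu}{2(\nu-1)^2}$, but the left-hand side equals $\tfrac{\nu-2}{2(\nu-1)^2}$, not $\tfrac{\nu}{2(\nu-1)^2}$. The correct combination uses a plus sign: since $d\mathbb{P}_T/d{\bf Q}_T=\exp\bigl(-\tfrac{1}{\nu-1}\int_0^T\theta_u\,dW_u-\tfrac{1}{2(\nu-1)^2}\int_0^T|\theta_u|^2\,du\bigr)$ already carries $-\tfrac{1}{2(\nu-1)^2}\int|\theta|^2\,du$ in its exponent, factoring this density out of the ${\bf Q}$-expectation leaves behind the compensating term $+\tfrac{1}{2(\nu-1)^2}\int|\theta|^2\,du$, so the residual coefficient is
\begin{equation}
\frac{1}{2(\nu-1)}+\frac{1}{2(\nu-1)^2}=\frac{(\nu-1)+1}{2(\nu-1)^2}=\frac{\nu}{2(\nu-1)^2}\,,
\end{equation}
which is what Eq.~\eqref{eqn:final_u_T} asserts. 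You arrive at the right answer, but the identity you invoke to justify it is false; with the sign corrected, the proof is complete and matches the paper's.
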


\subsubsection{Factor models}
\label{sec:factor_model}

In this section, we investigate the long-term sensitivity of the optimal expected utility under one-factor models driven by a multi-dimensional Brownian motion.
Recall that 
$Z=(Z_t^{(1)},\cdots,Z_t^{(d)})_{t\ge0}^\top$
is a   $d$-dimensional Brownian motion   on 
a consistent probability space  
$(\Omega,\mathcal{F},(\mathcal{F}_t)_{t\ge0},({\bf P}_t)_{t\ge0}).$   The probability measures $({\bf P}_t)_{t\ge0}$ are referred to as physical measures of the market.
\begin{enumerate}[(i)] 
	\item Let $\mathcal{D}$ be an open interval in $\mathbb{R}.$ A factor process is a one-dimensional Markov diffusion process $X$ with state space $\mathcal{D}$ satisfying 
\begin{equation}\label{eqn:factor_model} 
dX_t=k(X_t)\,dt+v(X_t)\,dZ_t\,,\;X_0=\xi
\end{equation}
where
the function 
$k:\mathcal{D}\to\mathbb{R}$
and the row vector function $v:\mathcal{D}\to\mathbb{R}^d$  
are continuously differentiable.
\item The process $(r_t)_{t\ge0}$ in Eq.\eqref{eqn:Ito_G} is given as
\begin{equation}\label{eqn:utiliy_factor_r}
r_t=r(X_t)
\end{equation} 
for continuous function $r(\cdot):\mathcal{D}\to\mathbb{R}.$ 
\item The processes $(\mu_t)_{t\ge0}$ and $(\sigma_t)_{t\ge0}$ in 
Eq.\eqref{eqn:Ito stocks} are given as
\begin{equation}\label{eqn:factor_mu_sigma}
\mu_t=\mu(X_t)\,,\;\sigma_t=\sigma(X_t)
\end{equation}
for continuously differentiable functions $\mu(\cdot):\mathcal{D}\to\mathbb{R}^d$ and $\sigma(\cdot):\mathcal{D}\to\mathbb{R}^{d\times d}.$ 
\end{enumerate} 
This gives the full description of the one-factor model.

The long-term sensitivity of the optimal expected utility in a factor model
can be manipulated   
to fit the underlying framework of this paper.
As discussed in Proposition \ref{prop:final_utility_max},
it suffices to analyze the expectation $u_T$ in Eq.\eqref{eqn:final_u_T}.
Under Assumptions \ref{assume:theta} and \ref{assume:application_martingale}, 
we have
\begin{equation} 
u_T=\mathbb{E}^\mathbb{P}[e^{\frac{\nu}{2(\nu-1)^2}\int_0^T|\theta(X_u)|^2\,du-\frac{\nu}{\nu-1}\int_0^Tr(X_u)\,du}]
\end{equation} 
where $\theta:\mathcal{D}\to\mathbb{R}$ is a function defined as  
$\theta(\cdot)=\sigma^{-1}(\cdot)(\mu(\cdot)-r(\cdot){\bf 1}).$
Note that the process $(\theta(X_t))_{t\ge0}$ is the market price of risk presented in Eq.\eqref{eqn:theta_utility_max}.
Observe that the process $(B_t)_{t\ge0}$ in Eq.\eqref{eqn:B} is a Brownian motion  under the consistent family of probability measures $(\mathbb{P}_t)_{t\ge0}$ and that $X$ satisfies
\begin{equation} 
\begin{aligned}
dX_t
&=k(X_t)\,dt+v(X_t)\,dZ_t\\
&=(k-v\theta)(X_t)\,dt+v(X_t)\,dW_t\\
&=\big(k-\frac{\nu}{\nu-1}v\theta\big)(X_t)\,dt+v(X_t)\,dB_t\,.
\end{aligned}
\end{equation}
By defining
$${\bf r}(\cdot)=-\frac{\nu}{2(\nu-1)^2}|\theta(\cdot)|^2+\frac{\nu}{\nu-1}r(\cdot)\,,$$
the expectation $u_T$ is expressed as
$u_T =\mathbb{E}^\mathbb{P}[e^{-\int_0^T{\bf r}(X_u)\,du}].$
Thus, the quadruple of functions 
\begin{equation}
\label{eqn:quadruple_utility_max}
\Big(k(\cdot)-\frac{\nu}{\nu-1}v(\cdot)\theta(\cdot),|v(\cdot)|, {\bf r}(\cdot),1\Big)
\end{equation}
and the above expectation $u_T$   fit  the underlying framework of this paper
  on the consistent probability space  
$(\Omega,\mathcal{F},(\mathcal{F}_t)_{t\ge0},(\mathbb{P}_t)_{t\ge0})$ having Brownian motion 
$$(B_t)_{t\ge0}:=\Big(\frac{1}{\nu-1}\int_0^t\theta_u\,du+W_t\Big)_{t\ge0}\,.$$

We now investigate several specific examples.
As one can see, the utility maximization problem is 
specified by the factor process $X,$ the short rate function $r$ and the function $\theta$ representing the market price of risk. Thus, it is more convenient to specify these rather than the functions $\mu$ and $\sigma$ in Eq.\eqref{eqn:factor_mu_sigma}.

\begin{ex}
\textnormal{(The Heston model)} Let $\mathcal{D}=(0,\infty).$ Assume that the short rate is zero and the stock price follows
\begin{equation}
\begin{aligned} 
&dS_t=\mu X_t S_t\,dt+\sqrt{X_t}S_t\,dZ_t^{(1)}\,,\;&&S_0>0\\
&dX_t= k(m-X_t)\,dt+ v\rho\sqrt{X_t}\,dZ_t^{(1)}+ v\sqrt{1-\rho^2}\sqrt{X_t}\,dZ_t^{(2)} \,,\;&&X_0=\xi 
\end{aligned}
\end{equation}
and that the market price of risk is $\theta(X_t)=(\mu\sqrt{X_t},0)^\top.$
Here, the parameters satisfy
$\mu,v\in\mathbb{R},$ $k,m,\xi>0$ and $-1\le \rho \le 1.$ 
Then, the quadruple of functions in Eq.\eqref{eqn:quadruple_utility_max} is $$\Big(km-(k+\frac{\nu}{\nu-1}v\rho\mu) x,v\sqrt{x},-\frac{\nu\mu^2}{2(\nu-1)^2}x+\frac{\nu r}{\nu-1},1\Big)\,,\;x\in\mathcal{D}$$ and the  optimal expected utility   $u_T$ is 
$$u_T =\mathbb{E}^\mathbb{P}[e^{\frac{\nu\mu^2}{2(\nu-1)^2}\int_0^TX_u\,du}]e^{-\frac{\nu r}{\nu-1}T}\,.$$
We can simplify this problem as follows. Define 
$$a=k+\frac{\nu}{\nu-1}v\rho\mu\,,\;b=km\,,\;\sigma=v\,,\;q=-\frac{\nu\mu^2}{2(\nu-1)^2}$$
and $p_T:=\mathbb{E}^\mathbb{P}[e^{-q\int_0^TX_u\,du}]$ (thus, $u_T=p_Te^{-\frac{\nu r}{\nu-1}T}$).
Then, the quadruple of functions
 $$(b-ax,\sigma\sqrt{x},qx,1)\,,\;x\in\mathcal{D}$$
and the expectation $p_T$ fit the underlying framework of this paper. 
The details of the long-term sensitivities are discussed in Section \ref{sec:CIR}.
\end{ex}

\begin{ex}
\textnormal{(The $3/2$ model)}
Let $\mathcal{D}=(0,\infty).$ Assume that the short rate is zero and the stock price follows
\begin{equation}
\begin{aligned} 
&dS_t=\mu X_t S_t\,dt+\sqrt{X_t}S_t\,dZ_t^{(1)}\,,\;&&S_0>0\\
&dX_t= k(m-X_t)X_t\,dt+ v\rho X_t^{3/2}\,dZ_t^{(1)}+ v\sqrt{1-\rho^2} X_t^{3/2}\,dZ_t^{(2)} \,,\;&&X_0=\xi>0
\end{aligned}
\end{equation}
and that the market price of risk is $\theta(X_t)=(\mu\sqrt{X_t},0)^\top.$
Then, the quadruple of functions in Eq.\eqref{eqn:quadruple_utility_max} is
$$\Big(km-(k+\frac{\nu}{\nu-1}v\rho\mu) x^2,vx^{3/2},-\frac{\nu\mu^2}{2(\nu-1)^2}x+\frac{\nu r}{\nu-1},1\Big)\,,\;x\in\mathcal{D}$$ and the   optimal expected utility   $u_T$ is 
$$u_T =\mathbb{E}^\mathbb{P}[e^{\frac{\nu\mu^2}{2(\nu-1)^2}\int_0^TX_u\,du}]e^{-\frac{\nu r}{\nu-1}T}\,.$$
We can simplify this problem as follows. Define 
$$a=k+\frac{\nu}{\nu-1}v\rho\mu\,,\;b=km\,,\;\sigma=v\,,\;q=-\frac{\nu\mu^2}{2(\nu-1)^2}$$
and $p_T:=\mathbb{E}^\mathbb{P}[e^{-q\int_0^TX_u\,du}]$ (thus, $u_T=p_Te^{-\frac{\nu r}{\nu-1}T}$).
Then, the quadruple of functions
$$(b-ax^2,\sigma x^{3/2},qx,1)\,,\;x\in\mathcal{D}$$
and the expectation $p_T$ fit the underlying framework of this paper.
The details of the long-term sensitivities are discussed in Section \ref{sec:ex_3/2}.
\end{ex}

\subsubsection{Local volatility models}
\label{sec:lvm}

We investigate the utility maximization problem in local volatility models.
Consider a consistent probability space  
$(\Omega,\mathcal{F},(\mathcal{F}_t)_{t\ge0},({\bf P}_t)_{t\ge0})$  having   one-dimensional Brownian motion $Z=(Z_t)_{t\ge0}.$ The filtration $(\mathcal{F}_t)_{t\ge0}$ satisfies the usual condition.  The probability measures $({\bf P}_t)_{t\ge0}$ are referred to as  physical measures of the market. A local volatility model is described as follows. 
\begin{enumerate}[(i)]
	\item Assume that the short interest rate in Eq.\eqref{eqn:Ito_G} is a constant $r\ge0.$
	\item Let $\mathcal{D}=(0,\infty).$  A stock price is a Markov diffusion process  $(S_t)_{t\ge0}$ with state space $\mathcal{D}.$ Assume that $S$ satisfies 
	\begin{equation} 
	\begin{aligned}
	\frac{dS_t}{S_t}= \mu(S_t)\,dt+\sigma(S_t)\,dZ_t 
	\end{aligned}
	\end{equation} 	for continuously differentiable functions $\mu(\cdot):\mathcal{D}\to\mathbb{R}$ and $\sigma(\cdot):\mathcal{D}\to\mathbb{R}.$
	In other words, the processes $(\mu_t)_{t\ge0}$ and $(\sigma_t)_{t\ge0}$ in 
	Eq.\eqref{eqn:Ito stocks} are given as
	\begin{equation} \label{eqn:local_vol_mu_sigma}
	\mu_t=\mu(S_t)\,,\;\sigma_t=\sigma(S_t)\,.
	\end{equation} 
\end{enumerate} 
This gives the full description of the local volatility model used in this paper.

The long-term sensitivity of the optimal expected utility
can be analyzed by the same argument in Section \ref{sec:factor_model}.
Under Assumptions \ref{assume:theta} and \ref{assume:application_martingale}, 
the  optimal expected utility   $u_T$ in Proposition \ref{prop:final_utility_max} is
\begin{equation} 
u_T=\mathbb{E}^\mathbb{P}[e^{\frac{\nu}{2(\nu-1)^2}\int_0^T\theta^2(S_u)\,du}]e^{-\frac{\nu r}{\nu-1}T}
\end{equation} 
where
$\theta(\cdot):=\sigma^{-1}(\cdot)(\mu(\cdot)-r).$
For convenience, we define  
$$p_T:=\mathbb{E}^\mathbb{P}[e^{\frac{\nu}{2(\nu-1)^2}\int_0^T \theta^2(S_u) \,du}]$$
so that $u_T=p_Te^{-\frac{\nu r}{\nu-1}T}.$
The process
$$(B_t)_{t\ge0}:=\Big(\frac{\nu}{\nu-1}\int_0^t\theta_u\,du+Z_t\Big)_{t\ge0}$$  
is a Brownian motion on the consistent probability space  
$(\Omega,\mathcal{F},(\mathcal{F}_t)_{t\ge0},(\mathbb{P}_t)_{t\ge0})$, and the stock price $S$ follows
\begin{equation} 
\begin{aligned}
\frac{dS_t}{S_t} 
&=\big(-\frac{1}{\nu-1}\mu(S_t)+\frac{\nu r}{\nu-1}\big)\,dt+\sigma(S_t)\,dB_t\,.
\end{aligned}
\end{equation}
Thus, the quadruple of functions 
\begin{equation} \label{eqn:quadruple_utility_max_lv}
\Big(-\frac{1}{\nu-1}\mu(\cdot)\cdot+\frac{\nu r}{\nu-1},\sigma(\cdot)\cdot, -\frac{\nu}{2(\nu-1)^2} \theta^2(\cdot) ,1\Big)
\end{equation}
and the above expectation $p_T$  fit the underlying framework of this paper
on the consistent probability space  
$(\Omega,\mathcal{F},(\mathcal{F}_t)_{t\ge0},(\mathbb{P}_t)_{t\ge0})$ having Brownian motion 
$(B_t)_{t\ge0}.$

\begin{ex} 
\textnormal{(The CEV model)}
Let $\mathcal{D}=(0,\infty).$  Assume that the stock price follows the CEV model, which is given as a solution of
$$\frac{dS_t}{S_t}=k\,dt+\sigma {S_t}^{\beta}\,dB_t\,,\;X_0=\xi$$
for $\beta,k,\sigma,\xi>0.$ Then, the market price of risk is $\theta(S_t)=\frac{k-r}{\sigma}S_t^{-\beta}.$
The quadruple of functions in Eq.\eqref{eqn:quadruple_utility_max_lv} is
$$\Big(\frac{\nu r-k}{\nu-1}x,\sigma x^{\beta+1}, -\frac{(k-r)^2\nu}{2\sigma^2(\nu-1)^2} x^{-2\beta} ,1\Big)\,,\;x\in\mathcal{D}$$ and the expectation $p_T$ is 
$$p_T=\mathbb{E}^\mathbb{P}[e^{\frac{(k-r)^2\nu}{2\sigma^2(\nu-1)^2}\int_0^T S_u^{-2\beta} \,du}]$$
(thus, the  optimal expected utility  is $u_T =p_Te^{-\frac{\nu r}{\nu-1}T}$).
We can simplify this problem as follows. Define 
$$\mu=\frac{\nu r-k}{\nu-1}\,,\;q=-\frac{(k-r)^2\nu}{2\sigma^2(\nu-1)^2} \,.$$
Then, the quadruple of functions
$$(\mu x ,\sigma x^{\beta+1},qx^{-2\beta},1)\,,\;x\in\mathcal{D}$$
and the expectation $p_T=\mathbb{E}^\mathbb{P}[e^{-q\int_0^TS_u^{-2\beta}\,du}]$ fit the underlying framework of this paper.
The details of the long-term sensitivities are discussed in Section \ref{sec:CEV_n}.
\end{ex}

\subsection{Entropic risk measures}

In this section, we investigate the long-term sensitivity of  entropic risk measures.
The entropic risk measure of a portfolio value $\Pi_T$ 
is defined as
$$\rho(\Pi_T)=\frac{1}{\nu}\ln\mathbb{E}(e^{-\nu \Pi_T})$$
for the risk aversion parameter $\nu>0.$
The main purpose of this section is to measure
the extent to which the entropic risk measure
is affected by small perturbations of the underlying model.
First, we discuss how to formulate the entropic risk measure
under general Ito process models. Then, we consider specific models in Sections \ref{sec:ent_factor}
and \ref{sec:ent_local}.

The entropic risk measure can be expressed in a manageable manner as follows.
Recall the Ito process model described in Eq.\eqref{eqn:Ito_G}
and Eq.\eqref{eqn:Ito stocks}.
In this section, the short rate is assumed to be zero so that the bank account in Eq.\eqref{eqn:Ito_G} is identically equal to one.
Without loss of generality, we consider only portfolios with zero initial capital.
For given  self-financing portfolio $\pi,$  
the value process $(\Pi_t)_{t\ge0}=(\Pi_t^{(\pi)})_{t\ge0}$ 
  is  
\begin{equation} \label{eqn:value_process}
\Pi_t^{(\pi)}=\int_0^t\pi_u\,dS_u
\end{equation} 
as presented in Eq.\eqref{eqn:port_value}.
Then,
\begin{equation}
\begin{aligned}
\rho(\Pi_T)
&=\frac{1}{\nu}\ln \mathbb{E}^{\bf P}(e^{-\nu \Pi_T})\\
&=\frac{1}{\nu}\ln \mathbb{E}^{\bf P}(e^{-\nu \int_0^T\pi_u D(S_u)\mu_u\,du-\nu\int_0^T \pi _u D(S_u)\sigma_u\,dZ_u})\\
&=\frac{1}{\nu}\ln \mathbb{E}^{\mathbb{P}}(e^{-\nu \int_0^T(\pi_u D(S_u)\mu_u-\frac{1}{2}\nu |\pi _u D(S_u)\sigma_u|^2)\,du})
\end{aligned}
\end{equation}
where $(\mathbb{P}_t)_{t\ge0}$ is a consistent family of probability measures defined by
\begin{equation}
\label{eqn:appli_entropic}
\frac{d\mathbb{P}_t}{d{\bf P}_t}=e^{-\nu\int_0^t \pi _u D(S_u)\sigma_u\,dZ_u-\frac{1}{2}\nu^2\int_0^t | \pi _u D(S_u)\sigma_u|^2\,du}\,.
\end{equation} 
Here, the following condition was assumed for this change of measures.
\begin{assume}\label{assume:entropic}
 A local martingale
$$(e^{-\nu\int_0^t \pi _u D(S_u)\sigma_u\,dZ_u-\frac{1}{2}\nu^2\int_0^t | \pi _u D(S_u)\sigma_u|^2\,du})_{0\le t\le T}$$
is a martingale under the  physical measure ${\bf P}_T$ for each $T\ge0.$ 
\end{assume}

We can summarize the above-mentioned arguments as follows.

\begin{prop}\label{prop:entropic}
	Consider the Ito process model with zero short rate and   the $d$ stocks in   Eq.\eqref{eqn:Ito stocks},   and let $\Pi_T^{(\pi)}$ be the value at time $T$ of a portfolio $\pi$  presented in Eq.\eqref{eqn:value_process}.
	Under Assumption  \ref{assume:entropic},   we define a 
	consistent family of probability  measures  $(\mathbb{P}_t)_{t\ge0}$ 	
	by Eq.\eqref{eqn:appli_entropic}. 
Then, the entropic risk measure of  $\Pi_T^{(\pi)}$ 
	is 
	\begin{equation}
	\label{eqn:final_entropic}
	\rho(\Pi_T^{(\pi)})=\frac{1}{\nu}\ln \mathbb{E}^{\mathbb{P}}(e^{-\nu \int_0^T(\pi_u D(S_u)\mu_u-\frac{1}{2}\nu |\pi _u D(S_u)\sigma_u|^2)\,du})\,.
	\end{equation} 
    The process
	\begin{equation}
	B_t:= Z_t+ \nu\int_0^t\sigma_u^\top D(S_u)\pi _u^\top\,du\,,\;t\ge0
	\end{equation}
	is a Brownian motion
	on the consistent probability space  
	$(\Omega,\mathcal{F},(\mathcal{F}_t)_{t\ge0},(\mathbb{P}_t)_{t\ge0}).$ 
\end{prop}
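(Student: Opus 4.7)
The plan is to obtain both conclusions as a straightforward Girsanov-type change of measure, treating $L_T := d\mathbb{P}_T/d\mathbf{P}_T$ as an auxiliary density and engineering the integrand so that the stochastic integral in $\Pi_T$ is absorbed into $L_T$.

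First I would substitute the explicit dynamics of $\Pi_T^{(\pi)}$. From Eq.\eqref{eqn:value_process} together with Eq.\eqref{eqn:Ito stocks_SDE}, the portfolio value decomposes as
\begin{equation}
\Pi_T^{(\pi)} = \int_0^T \pi_u D(S_u)\mu_u \, du + \int_0^T \pi_u D(S_u)\sigma_u \, dZ_u,
\end{equation}
so the exponent $-\nu\Pi_T^{(\pi)}$ contains a stochastic integral with integrand $-\nu\pi_u D(S_u)\sigma_u$. The key algebraic step is to complete this integrand into a stochastic exponential: add and subtract $\tfrac{1}{2}\nu^2|\pi_u D(S_u)\sigma_u|^2$ in the drift. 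This rewrites
\begin{equation}
e^{-\nu \Pi_T^{(\pi)}} = L_T \cdot \exp\!\Big(-\nu \int_0^T \bigl(\pi_u D(S_u)\mu_u - \tfrac{1}{2}\nu |\pi_u D(S_u)\sigma_u|^2 \bigr)du\Big),
\end{equation}
with $L_T$ exactly the density in Eq.\eqref{eqn:appli_entropic}.

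Second, I would invoke Assumption \ref{assume:entropic}, which guarantees $L_T$ is a true $\mathbf{P}_T$-martingale, hence a bona fide Radon--Nikod\'ym density. Applying the abstract change-of-measure identity $\mathbb{E}^{\mathbf{P}}[L_T X] = \mathbb{E}^{\mathbb{P}}[X]$ to the pathwise-bounded-below (or at worst exponentially integrable) integrand in the display above, we obtain
\begin{equation}
\mathbb{E}^{\mathbf{P}}[e^{-\nu\Pi_T^{(\pi)}}] = \mathbb{E}^{\mathbb{P}}\!\Big[\exp\!\Big(-\nu \int_0^T \bigl(\pi_u D(S_u)\mu_u - \tfrac{1}{2}\nu |\pi_u D(S_u)\sigma_u|^2 \bigr)du\Big)\Big].
\end{equation}
Taking $\tfrac{1}{\nu}\ln$ of both sides yields Eq.\eqref{eqn:final_entropic}. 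The only subtle point here is ensuring that the expectation manipulation is legitimate; since the integrand on the right-hand side is $\mathcal{F}_T$-measurable and the identity is being applied directly to the product $L_T \cdot (\cdot)$, no further integrability assumption is needed beyond the martingale property already guaranteed.

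Third, for the Brownian motion assertion, I would apply the Girsanov theorem on each finite horizon $[0,T]$. The Girsanov kernel associated with $L_T = \exp(-\nu\int_0^T \pi_u D(S_u)\sigma_u \, dZ_u - \tfrac{1}{2}\nu^2\int_0^T|\pi_u D(S_u)\sigma_u|^2 du)$ is the row vector $-\nu\, \pi_u D(S_u)\sigma_u$, so under $\mathbb{P}_T$ the process
\begin{equation}
B_t = Z_t - \int_0^t \bigl(-\nu\, \pi_u D(S_u)\sigma_u\bigr)^{\!\top} du = Z_t + \nu \int_0^t \sigma_u^{\top} D(S_u)\pi_u^{\top}\, du
\end{equation}
is a $d$-dimensional Brownian motion on $(\Omega,\mathcal{F}_T,(\mathcal{F}_t)_{0\le t\le T},\mathbb{P}_T)$; consistency of $(\mathbb{P}_t)_{t\ge 0}$ then promotes this to a Brownian motion on the consistent probability space, by Definition \ref{defi:consistent_prob}(i). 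The main (and only) real obstacle throughout is the martingale property of $L$, which is precisely what Assumption \ref{assume:entropic} supplies; everything else is bookkeeping.
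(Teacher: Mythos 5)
Your proof is correct and is essentially the paper's own argument: decompose $\Pi_T^{(\pi)}$ via the SDE for $S$, complete the square in the stochastic integral to factor out the Doléans-Dade exponential $L_T$, invoke Assumption \ref{assume:entropic} for the martingale property, change measures (legitimate since the remaining factor is nonnegative), and apply Girsanov for the Brownian-motion claim, with consistency lifting it to the whole family $(\mathbb{P}_t)_{t\ge 0}$. The only small imprecision is the phrase ``pathwise-bounded-below (or at worst exponentially integrable)'' --- the relevant justification for $\mathbb{E}^{\mathbf{P}}[L_T X]=\mathbb{E}^{\mathbb{P}}[X]$ is simply that $X\ge 0$, so both sides are well-defined in $[0,\infty]$ --- but this does not affect the validity of the argument.
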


\subsubsection{Factor models}
\label{sec:ent_factor}

We consider the factor model as a specific case to 
investigate the entropic risk measure of portfolios.
The factor process $X$ is given by Eq.\eqref{eqn:factor_model},
\begin{equation} 
dX_t=k(X_t)\,dt+v(X_t)\,dZ_t\,,\;X_0=\xi,
\end{equation}
and the drift and volatility functions, $\mu(\cdot)$ and $\sigma(\cdot)$, respectively, are given by Eq.\eqref{eqn:factor_mu_sigma}. The short rate  in Eq.\eqref{eqn:utiliy_factor_r} is assumed to be zero.
We consider a portfolio $\pi$ such that $\pi_t D(S_t)$ is determined by the factor $X_t$; more precisely, there is a
continuously differentiable function   $\eta:\mathbb{R}\to\mathbb{R}^d$ such that 
$$\pi_tD(S_t)=\eta(X_t)\,,\;t\ge0$$ 
and $\eta(X) D^{-1}(S)$ is $S$-integrable. 
By Eq.\eqref{eqn:final_entropic}, the entropic risk measure is
	\begin{equation}
	\begin{aligned}
\label{eqn:factor_entropic}
\rho(\Pi_T^{(\pi)})
&=\frac{1}{\nu}\ln \mathbb{E}^{\mathbb{P}}(e^{-\nu \int_0^T(\eta(X_u)\mu(X_u)-\frac{1}{2}\nu |\eta(X_u)\sigma(X_u)|^2)\,du})\\ 
&=\frac{1}{\nu}\ln \mathbb{E}^{\mathbb{P}}(e^{-\nu \int_0^T{\bf r}(X_u)\,du})  
\end{aligned}
\end{equation} 
where
$${\bf r}(\cdot)=\eta(\cdot)\mu(\cdot)-\frac{1}{2}\nu |\eta(\cdot)\sigma(\cdot)|^2$$
and the factor process $X$ satisfies
\begin{equation}
dX_t=(k(X_t)- \nu v(X_t)\sigma^\top(X_t)\eta^\top(X_t)) \,dt+v(X_t)\,dB_t 
\end{equation}
for $(\mathbb{P}_t)_{t\ge0}$-Brownian motion $(B_t)_{t\ge0}.$
In conclusion, the quadruple of functions 
\begin{equation}\label{eqn:ent_factor}
\Big(k(\cdot)-\nu v(\cdot)\sigma^\top(\cdot)\eta^\top(\cdot),|v(\cdot)|, \nu {\bf r}(\cdot),1\Big)
\end{equation}
and the  expectation 
$u_T:=\mathbb{E}^{\mathbb{P}}[e^{-\nu \int_0^T{\bf r}(X_u)\,du}]$ (thus, the entropic risk measure is $\rho(\Pi_T^{(\pi)})=\frac{1}{\nu}\ln u_T$)   fit  the underlying framework of this paper.

\begin{ex}
	 \textnormal{(Constant proportion portfolios in an affine model)} We consider the entropic risk measure of a constant proportion portfolio $\eta(\cdot)=\overline{\eta}=(\overline{\eta}_1,\cdots,\overline{\eta}_d)\in\mathbb{R}^d$ in an affine model.
	 Assume that the factor process $X$ satisfies
	 $$dX_t= k(m-X_t)\,dt+ \sqrt{X}_t v \,dZ_t  \,,\;X_0=\xi $$
	  for  $k,m,\xi>0$ and $v=(v_1,\cdots,v_d)
	  \in\mathbb{R}^d$ and that the drift and the volatility of the $d$ stocks described in    Eq.\eqref{eqn:factor_mu_sigma} are given as
	 $$\mu(X_t):=(\mu_i+\gamma_iX_t)_{1\le i\le d}\,,\;\sigma(X_t):=(\sqrt{\delta_{ij}+\varsigma_{ij}X_t})_{1\leq i,j\leq d}$$
	 for constants $\mu_i,\gamma_i\in\mathbb{R},\delta_{ij},\varsigma_{ij}\ge0,1\le i,j\leq d.$ Then, the quadruple in Eq.\eqref{eqn:ent_factor} is
	$$\Big(k(m-x)-\nu v\sqrt{x}\sigma^\top(x)\overline{\eta}^\top,|v|\sqrt{x}, \nu (\overline{\eta}\mu(x)-\frac{1}{2}\nu |\overline{\eta}\sigma(x)|^2),1\Big)\,,\;x>0$$ 	 
	 and the  entropic risk measure is $\rho(\Pi_T^{(\pi)})=\frac{1}{\nu}\ln u_T$
	 where
	 $$u_T:=\mathbb{E}^{\mathbb{P}}(e^{-\nu \int_0^T(\overline{\eta}\mu(X_u)-\frac{1}{2}\nu |\overline{\eta}\sigma(X_u)|^2)\,du})\,.  $$
	 As a specific example, let $\delta_{ij}=0$ for  $1\le i,j\leq d.$ Then, this problem can be simplified as follows. Define 
		 $$b=mk,\,a=k+\nu\sum_{i,j=1}^dv_i\sqrt{\varsigma_{ij}}\,\overline{\eta}_j,\,\sigma=|v|,\,q=\nu \big(\sum_{i=1}^d\overline{\eta}_i\gamma_i-\frac{1}{2}\nu\sum_{j}\big(\sum_{i}\overline{\eta}_i\sqrt{\varsigma_{ij}}\big)^2\big)\,,$$
	and we consider the case $q>0.$ 
	 The quadruple of functions
	 $$(b-ax,\sigma\sqrt{x},qx,1)\,,\;x>0$$
	 and the expectation $p_T:=\mathbb{E}^\mathbb{P}[e^{-q\int_0^TX_u\,du}]$ (thus, $u_T=p_Te^{-T\nu \sum_{i=1}^d\overline{\eta}_i\mu_i}$) fit the underlying framework of this paper. The details of the long-term sensitivities are discussed in Section \ref{sec:CIR}. 
\end{ex}

\subsubsection{Local volatility models}
\label{sec:ent_local}

This section investigates the entropic risk measure of  portfolios  in  the local volatility model presented in Eq.\eqref{eqn:local_vol_mu_sigma}.
We consider  portfolios  determined by the stock 
price, i.e., $\pi_t=\pi(S_t)$ for continuously differentiable function $\pi(\cdot).$  
By Eq.\eqref{eqn:final_entropic}, the entropic risk measure is
\begin{equation}
\begin{aligned}
\label{eqn:local_vol_entropic}
\rho(\Pi_T^{(\pi)})&=\frac{1}{\nu}\ln \mathbb{E}^{\mathbb{P}}(e^{-\nu \int_0^T(\pi(S_u) \mu(S_u)S_u-\frac{1}{2}\nu \pi^2(S_u)\sigma^2(S_u)S_u^2)\,du})\\
&=\frac{1}{\nu}\ln \mathbb{E}^{\mathbb{P}}(e^{-\nu \int_0^T{\bf r}(S_u)\,du})  
\end{aligned}
\end{equation} 
where
$${\bf r}(\cdot)=\pi(\cdot) \mu(\cdot)\cdot-\frac{1}{2}\nu \pi^2(\cdot)\sigma^2(\cdot)\,\cdot^2$$
and the stock price process $S$ satisfies
\begin{equation}\label{eqn:ent_risk_local_model}
dS_t=(\mu(S_t)- \nu \pi(S_t) \sigma^2(S_t)S_t)S_t \,dt+\sigma(S_t)S_t\,dB_t 
\end{equation}
for $(\mathbb{P}_t)_{t\ge0}$-Brownian motion $(B_t)_{t\ge0}.$
In conclusion, the   quadruple of functions 
\begin{equation}\label{eqn:ent_risk_local_vol}
\Big((\mu(\cdot)- \nu   \pi(\cdot) \sigma^2(\cdot)\,\cdot\,)\,\cdot,\sigma(\cdot)\,\cdot\,, {\bf r}(\cdot),1\Big)
\end{equation}
and the  expectation 
$u_T:=\mathbb{E}^{\mathbb{P}}[e^{-\nu \int_0^T{\bf r}(X_u)\,du}]$ fit the underlying framework of this paper.

\begin{ex}
	\textnormal{(Constant proportion portfolios I)} 
	We investigate the entropic risk measure of a portfolio
	in the $3/2$ model.
	Assume that the stock price follows
	\begin{equation}
	\label{eqn:ent_risk_ex}
	dS_t= k(m-S_t)S_t\,dt+ v  S_t^{3/2}\,dZ_t  \,,\;S_0>0
	\end{equation} 
	for $m,k,\xi>0$ and $v\neq0.$ In this example, we consider a constant proportion portfolio  $\pi(\cdot)\,\cdot^2= \overline{\eta}\in\mathbb{R}$ satisfying $0<\overline{\eta}<\frac{km}{\nu v^2}.$
	Then, the quadruple in Eq.\eqref{eqn:ent_risk_local_vol} is
	$$\Big((km- \nu v^2 \overline{\eta}-kx)x,vx^{3/2}, -k\overline{\eta}+\overline{\eta}(km -\frac{1}{2}\nu v^2 \overline{\eta})x^{-1},1\Big)\,,\;x>0$$ 
	and the  entropic risk measure is $\rho(\Pi_T^{(\pi)})=\frac{1}{\nu}\ln u_T$
	where
	$$u_T:=\mathbb{E}^{\mathbb{P}}(e^{-\nu\overline{\eta}(km -\frac{1}{2}\nu v^2 \overline{\eta})\int_0^T  S_u^{-1}\,du}) e^{\nu k\overline{\eta}T}\,.  $$	
    It is noteworthy that the process $S$ satisfies
	$$dS_t=(km- \nu v^2 \overline{\eta}-kS_t)S_t\,dt+ v  S_t^{3/2}\,dB_t$$
	as presented in Eq.\eqref{eqn:ent_risk_local_model} and  the process $X:=1/S$ satisfies 
	$$dX_t=(k+v^2-(km- \nu v^2 \overline{\eta})X_t)\,dt-v\sqrt{X_t}\,dB_t\,.$$
This problem can be simplified on the basis of the process $X:=1/S$ being a CIR model. Define 
$$b=k+v^2,\,a=km- \nu v^2 \overline{\eta},\,\sigma=-v,\,q=\nu\overline{\eta}(km -\frac{1}{2}\nu v^2 \overline{\eta})\,,$$
then the quadruple of functions
$$(b-ax,\sigma\sqrt{x},qx,1)\,,\;x>0$$
and the expectation $p_T:=\mathbb{E}^\mathbb{P}[e^{-q\int_0^TX_u\,du}]$ (thus, $u_T=p_Te^{\nu k\overline{\eta}T}$) fit the underlying framework of this paper. The details of the long-term sensitivities are discussed in Section \ref{sec:CIR}. 
\end{ex}

\begin{ex}
	\textnormal{(Constant proportion portfolios II)} 
	We investigate the entropic risk measure of another  portfolio  
	in the $3/2$ model.
	Assume that the stock price follows Eq.\eqref{eqn:ent_risk_ex}
		for $m,k,\xi>0$ and $v\neq0.$ 
	In this example, we consider a constant proportion portfolio  $\pi(\cdot)\,\cdot= \overline{\eta}\in\mathbb{R}$ satisfying  
	$-\frac{k}{\nu v^2}<\overline{\eta}<0.$
	Then, the quadruple in Eq.\eqref{eqn:ent_risk_local_vol} is
	$$\Big((km-(k+ \nu v^2 \overline{\eta}) x)x,vx^{3/2}, mk\overline{\eta}-\overline{\eta}(k+\frac{1}{2}\nu v^2 \overline{\eta})x,1\Big)\,,\;x>0$$ 	 
	and the  entropic risk measure is $\rho(\Pi_T^{(\pi)})=\frac{1}{\nu}\ln u_T$
	where
	$$u_T:=\mathbb{E}^{\mathbb{P}}(e^{ \nu\overline{\eta}(k+\frac{1}{2}\nu v^2 \overline{\eta})\int_0^T  S_u\,du}) e^{-\nu mk\overline{\eta}T}\,.  $$	
We can simplify this problem as follows. Define 
	$$a=k+ \nu v^2 \overline{\eta}\,,\;b=km\,,\;\sigma=v\,,\;q=-\nu\overline{\eta}(k+\frac{1}{2}\nu v^2 \overline{\eta})\,.$$
	Then, the quadruple of functions
	$$(b-ax^2,\sigma x^{3/2},qx,1)\,,\;x\in\mathcal{D}$$
	and the expectation   $p_T:=\mathbb{E}^\mathbb{P}[e^{-q\int_0^TS_u\,du}]$ (thus, $u_T=p_Te^{-\nu mk\overline{\eta}T}$) fit the underlying framework of this paper. The details of the  long-term sensitivities are discussed in Section \ref{sec:ex_3/2}.
\end{ex}

\subsection{Bond prices}

In this section, we study the long-term sensitivity of bond prices whose underlying short rate is modeled by a Markov diffusion.
Let $(\mathbb{P}_t)_{t\ge0}$ be a consistent family of risk-neutral measures and let $X$ be a short interest rate process given by 
$$dX_t=k(X_t)\,dt+v(X_t)\,dB_t\,,\;X_0=\xi $$
for a $(\mathbb{P}_t)_{t\ge0}$-Brownian motion $(B_t)_{t\ge0}.$
Then, the bond price with maturity $T$ is given by
$$p_T:=\mathbb{E}^\mathbb{P}[e^{-\int_0^TX_s\,ds}]\,.$$
The quadruple of functions
$(k(\cdot),v(\cdot),\,\cdot\,,1)$
and $p_T$ satisfy the underlying framework of this paper.

\begin{ex}
	\textnormal{(The CIR model)} Assume that the short rate follows the CIR model
	$$dX_t=(b-aX_t)\,dt+\sigma \sqrt{X_t}\,dB_t\,,\;X_0=\xi$$
	for $a,\xi>0,\sigma\neq0,2b>\sigma^2.$ Then, the bond price is
	$p_T=\mathbb{E}^\mathbb{P}[e^{-\int_0^TX_s\,ds}].$
	The long-term sensitivity of $p_T$ is analyzed in Section \ref{sec:CIR}.
\end{ex}

\begin{ex}
	\textnormal{(The $3/2$ model)} Assume that the short rate follows the $3/2$ model
$$dX_t=(b-aX_t)X_t\,dt+\sigma {X_t}^{3/2}\,dB_t\,,\;X_0=\xi$$
for $b,\sigma,\xi>0$ and $a>-\sigma^2/2.$  Then, the bond price is
$p_T=\mathbb{E}^\mathbb{P}[e^{-\int_0^TX_s\,ds}].$
The long-term sensitivity of $p_T$ is analyzed in Section \ref{sec:ex_3/2}.
\end{ex}

\section{Examples}
\label{sec:ex}

Concrete examples are studied in this section.
We describe three specific models: the CIR model, the 3/2 model and the CEV model.

\subsection{CIR model}
\label{sec:CIR}

We consider the CIR model  
$$dX_t=(b-aX_t)\,dt+\sigma \sqrt{X_t}\,dB_t\,,\;X_0=\xi$$
for $a,\xi>0,\sigma\neq0,2b>\sigma^2.$   As is well known, the process $X$ stays positive, thus we put the domain  $\mathcal{D}=(0,\infty).$  
The expectation $p_T$ of our interest is
$$p_T=\mathbb{E}_\xi^\mathbb{P}[e^{-q\int_0^TX_s\,ds}h(X_T)] $$
where $q>0$
and $h:\mathcal{D}\to\mathbb{R}$ is a nonzero, nonnegative, twice differentiable function with polynomial growth $h,$ $h',$ $h''.$
The process $X$ and the expectation $p_T$ are 
specified by
the quadruple of functions $(b-ax,\sigma\sqrt{x},qx,h(x)),x\in\mathcal{D}.$

We are interested in the large-time behavior of $p_T$ for small perturbations of the parameters $\xi,$ $b,$ $a$ and $\sigma.$ 
 It can be shown that the corresponding recurrent eigenpairs are
	$$(\lambda,\phi(x))=(b\eta, e^{-\eta x})\,,\;(\hat{\lambda},\hat{\phi}(x))=(\tilde{\lambda},\tilde{\phi}(x))=(\alpha,1)$$  
	where  
	$$\alpha:=\sqrt{a^2+2q\sigma^2}\,,\;
	\eta:=\frac{\alpha-a}{\sigma^2}\,.$$ 
For the long-term first-order and second-order sensitivities with respect to the initial-value, we have
	\begin{equation} 
	\begin{aligned}
	\left|\frac{\partial_\xi p_T}{p_T}+\eta\right|
\leq c e^{-\alpha T}\,,\;\left|\frac{{\partial_{\xi\xi}} p_T}{p_T}-\eta^2\right|
\leq ce^{-\alpha T} 
	\end{aligned}
	\end{equation}
 for some positive constant $c.$ 
We can also provide  a higher-order convergence rate of the second-order sensitivity 
as  
$$\left|\frac{{\partial_{\xi\xi}} p_T}{p_T}-\Big(\frac{\partial_\xi p_T}{p_T}\Big)^2\right|
\le ce^{-2\alpha T}  $$
for some positive constant $c.$
The long-term sensitivities with respect to the parameters $b,$ $a$ and $\sigma$ are described as
\begin{equation}
\begin{aligned}
&\left|\frac{1}{T}\frac{\partial_b p_T}{p_T}+\eta \right|\leq \frac{c}{T}\\
&\left|\frac{1}{T}\frac{\partial_a p_T}{p_T}+\frac{b}{\sigma^2}\Big(\frac{a}{\alpha}-1\Big)\right| \leq \frac{c}{T}\\
&\left|\frac{1}{T}\frac{\partial_\sigma p_T}{p_T}+2b\Big(\frac{q}{\alpha\sigma}-\frac{\alpha-a}{\sigma^3}\Big)\right| \leq \frac{c}{T}
\end{aligned}
\end{equation}
for some positive constant $c.$
For the proofs of these asymptotic behaviors, see Appendix \ref{app:CIR_model}.


\subsection{3/2 model}
\label{sec:ex_3/2}

In this section, we 
consider the $3/2$ model  
$$dX_t=(b-aX_t)X_t\,dt+\sigma {X_t}^{3/2}\,dB_t\,,\;X_0=\xi$$
for $b,\sigma,\xi>0$ and $a>-\sigma^2/2.$ 
As is well known, the process $X$ stays positive, thus we put the domain  $\mathcal{D}=(0,\infty).$  
The expectation $p_T$ of our interest is
$$p_T=\mathbb{E}_\xi^\mathbb{P}[e^{-q\int_0^TX_s\,ds}h(X_T)] $$
where $q>0$
and 
$h:\mathcal{D}\to\mathbb{R}$ is a nonzero, nonnegative, twice differentiable function with polynomial growth $h,$ $h',$ $h''.$
The process $X$ and the expectation $p_T$ are 
specified by
the quadruple of functions $((b-ax)x,\sigma x^{3/2},qx,h(x)),x\in\mathcal{D}.$

We are interested in the large-time behavior of $p_T$ for small perturbations of the parameters $\xi,$ $b,$ $a$ and $\sigma.$
 It can be shown that the corresponding recurrent eigenpairs are
$$(\lambda,\phi(x)):=(b\eta, x^{-\eta})\,,\;(\hat{\lambda},\hat{\phi}(x))=(\tilde{\lambda},\tilde{\phi}(x))=(b,x^{-2})$$  
where  
$$\eta:=\frac{\sqrt{(a+\sigma^2/2)^2+2q\sigma^2}-(a+\sigma^2/2)}{\sigma^2}\,.$$
For the long-term first-order and second-order  sensitivities with respect to the initial-value, we have
\begin{equation} \label{eqn:ex_3/2_initial}
\begin{aligned}
\left|\frac{\partial_\xi p_T}{p_T}+\frac{\,\eta\,}{\xi}\right|
\leq c  e^{-bT}\,,\;\left|\frac{{\partial_{\xi\xi}} p_T}{p_T}
-\frac{\eta(\eta+1)}{\xi^2}\right|
\leq ce^{-bT}
\end{aligned}
\end{equation}
for some positive constant $c.$ 
We can also provide  a higher-order convergence  rate of the second-order sensitivity 
as  
\begin{equation}\label{eqn:ex_3/2_initial_second}
\begin{aligned}
\left|\frac{\partial_{\xi\xi}p_T}{p_T}-\Big(\frac{\partial_\xi p_T}{p_T}\Big)^2+\frac{2}{\xi}\frac{\partial_\xi p_T}{p_T}+\frac{\eta}{\xi^2}\right|
\leq ce^{-2bT} 
\end{aligned}
\end{equation}
for some positive constant $c.$
The long-term sensitivities with respect to the parameters $b,$ $a$ and $\sigma$ are described as
\begin{equation}\label{eqn:ex_3/2_para_sen}
\begin{aligned}
&\left|\frac{1}{T}\frac{\partial_b p_T}{p_T}+\eta \right|\leq \frac{c}{T}\\
&\left|\frac{1}{T}\frac{\partial_a p_T}{p_T}-\frac{b}{\sigma^2}\frac{\sqrt{(a+\sigma^2/2)^2+2q\sigma^2}-(a+\sigma^2/2)}{\sqrt{(a+\sigma^2/2)^2+2q\sigma^2}}
\right|\leq \frac{c}{T}\\
&\left|\frac{1}{T}\frac{\partial_\sigma p_T}{p_T}+\frac{b(a+{\sigma^2}/{2}+2 q-\sqrt{(a+{\sigma^2}/{2})^2+2 q\sigma^2})}{\sigma \sqrt{(a+{\sigma^2}/{2})^2+2 q\sigma^2}} -\frac{2b}{\sigma^3}(\sqrt{(a+{\sigma^2}/{2})^2+2 q \sigma^2}-a- {\sigma^2}/{2})a \right|\leq \frac{c}{T}
\end{aligned}
\end{equation}
for some positive constant $c.$
For the proofs of these asymptotic behaviors, see Appendix \ref{sec:3/2}.

\subsection{CEV model I}
\label{sec:CEV_n}

In this section, we 
consider the CEV model
$$\frac{dX_t}{X_t}=(\mu-\theta X_t^{2\beta})\,dt+\sigma {X_t}^{\beta}\,dB_t\,,\;X_0=\xi$$ 
for $\beta,\mu,\xi>0,$ $\sigma\neq0,$ $\theta\ge0.$
As is well known, the process $X$ stays positive, thus we put the domain  $\mathcal{D}=(0,\infty).$  
When $\theta=0,$ the process $X$ is the standard CEV model.
The expectation $p_T$ of our interest is
$$p_T:=\mathbb{E}^\mathbb{P}[e^{-q\int_0^TX_t^{-2\beta}\,dt}]$$
for $q>0.$
The process $X$ and the expectation $p_T$ are 
specified by
the quadruple of functions  $((\mu-\theta x^{2\beta})x ,\sigma x^{\beta+1},qx^{-2\beta},1),x\in\mathcal{D}.$

The CEV model can be transformed into the CIR model by defining $Y_t=X_t^{-2\beta}.$ By the Ito formula,
$$dY_t=(b-aY_t)\,dt+\Sigma \sqrt{Y_t}\,dB_t\,,\;Y_0=\xi^{-2\beta}$$
where
$$b=2\beta\theta+\beta(2\beta+1)\sigma^2\,,\;a=2\beta\mu\,,\;\Sigma=-2\beta\sigma\,.$$
Note that $b>\Sigma^2/2$ so that the Feller condition is satisfied.
The expectation $p_T$ is   $$p_T=\mathbb{E}^\mathbb{P}[e^{-q\int_0^TY_t\,dt}]\,.$$
The sensitivities of the expectation $p_T$ 
for the CIR model $Y$ have already been analyzed in Section \ref{sec:CIR},  thus we use the previous results. 
The corresponding recurrent eigenpairs are
	$$(\lambda,\phi(x))=\big(\theta\eta+\big(\beta+\frac{1}{2}\big)\sigma^2\eta, e^{-\frac{\eta}{2\beta}x^{-2\beta}}\big)\,,\;(\hat{\lambda},\hat{\phi}(x))=(\tilde{\lambda},\tilde{\phi}(x))=(2\beta\sqrt{\mu^2+2q\sigma^2},1)$$
where    
$\eta:=\frac{\sqrt{\mu^2+2q\sigma^2}-\mu}{\sigma^2}.$

For the long-term first-order and second-order sensitivities with respect to the  initial-value, we have
\begin{equation}  
\begin{aligned}
	\left|\frac{\partial_\xi p_T}{p_T}-2\beta\eta\xi^{-2\beta-1}\right|
	\leq c  e^{-2\beta\sqrt{\mu^2+2q\sigma^2} T}\,,\;\left|\frac{{\partial_{\xi\xi}} p_T}{p_T}
	-2\beta\eta(2\beta\eta\xi^{-2\beta}-2\beta-1)\xi^{-2\beta-2}\right|
	\leq ce^{-2\beta\sqrt{\mu^2+2q\sigma^2}T}
\end{aligned}
\end{equation}
for some positive constant $c.$
We can also provide  a higher-order convergence rate of the second-order sensitivity 
as  
\begin{equation} 
\begin{aligned}
\left|\frac{{\partial_{\xi\xi}} p_T}{p_T}-\Big(\frac{\partial_\xi p_T}{p_T}\Big)^2 
+2\beta\eta(2\beta+1)\xi^{-2\beta-2}\right|
\leq ce^{-4\beta\sqrt{\mu^2+2q\sigma^2}T} \,.
\end{aligned}
\end{equation}
The long-term sensitivities with respect to the parameters $\mu,$ $\theta,$ $\sigma$ and $\beta$ are described as 
\begin{equation} 
\begin{aligned}
&\left|\frac{1}{T}\frac{\partial_\mu p_T}{p_T}-\Big(\frac{\theta}{\sigma^2}+\beta+\frac{1}{2}\Big)\frac{\sqrt{\mu^2+2q\sigma^2}-\mu}{\sqrt{\mu^2+2q\sigma^2}} \right|\leq \frac{c}{T}\\
&\left|\frac{1}{T}\frac{\partial_\theta p_T}{p_T}+\frac{\sqrt{\mu^2+2q\sigma^2}-\mu}{\sigma^2}
\right|\leq \frac{c}{T}\\
&\left|\frac{1}{T}\frac{\partial_\sigma p_T}{p_T}+\frac{q\sigma^4(2\beta+1)-2\theta(\mu^2+q\sigma^2)}{\sigma^3\sqrt{\mu^2+2q\sigma^2}}+\frac{2\theta\mu}{\sigma^3} \right|\leq \frac{c}{T}\\
&\left|\frac{1}{T}\frac{\partial_\beta p_T}{p_T}+\sqrt{\mu^2+2q\sigma^2}-\mu\right|\leq \frac{c}{T}
\end{aligned}
\end{equation}
for some positive constant $c.$

\subsection{CEV model II}
\label{sec:CEV_p}

We consider the CEV model
$$\frac{dX_t}{X_t}=(\mu-\theta X_t^{2\beta})\,dt+\sigma {X_t}^{\beta}\,dB_t\,,\;X_0=\xi$$ 
for $\beta,\mu,\xi>0,$ $\sigma\neq0,$ $\theta\ge0.$
Since the process $X$ stays positive, we put the domain  $\mathcal{D}=(0,\infty).$  
The expectation $p_T$ of our interest is
$$p_T:=\mathbb{E}^\mathbb{P}[e^{-q\int_0^TX_t^{2\beta}\,dt}]$$
for $q>0.$
The process $X$ and 
the expectation $p_T$ 
are specified by
the quadruple of functions $((\mu-\theta x^{2\beta})x ,\sigma x^{\beta+1},qx^{2\beta},1),x\in\mathcal{D}.$

The CEV model can be transformed into the $3/2$ model by defining $Y_t=X_t^{2\beta}.$ By the Ito formula, 
$$dY_t=(b-aY_t)Y_t\,dt+\Sigma Y_t^{3/2}\,dB_t\,,\;Y_0=\xi^{2\beta}$$
where
$$b=2\beta\mu\,,\;a=2\beta\theta-\beta(2\beta-1)\sigma^2\,,\;\Sigma=2\beta\sigma\,.$$
Note that $a>-\Sigma^2/2.$
The expectation $p_T$ is $$p_T=\mathbb{E}^\mathbb{P}[e^{-q\int_0^TY_t\,dt}]\,.$$
The sensitivities of the expectation $p_T$ 
for the $3/2$ model $Y$ have already been analyzed in Section \ref{sec:ex_3/2},  thus we use the previous results. 
The corresponding recurrent eigenpairs are
$$(\lambda,\phi(x))=(\mu\eta, x^{-2\beta\eta})\,,\;(\hat{\lambda},\hat{\phi}(x))=(\tilde{\lambda},\tilde{\phi}(x))=(2\beta\mu,x^{-4\beta})$$   
where  
$$\eta:=\frac{\sqrt{(\theta+\sigma^2/2)^2+2q\sigma^2}-(\theta+\sigma^2/2)}{\sigma^2}\,.$$

For the long-term first-order and second-order   sensitivities with respect to the initial-value, we have
\begin{equation}  
\begin{aligned}
\left|\frac{\partial_\xi p_T}{p_T}+\frac{2\beta\eta}{\xi}\right|
\leq c  e^{-2\beta\mu T}\,,\;\left|\frac{{\partial_{\xi\xi}} p_T}{p_T}
-\frac{2\beta\eta(2\beta\eta+1)}{\xi^2}\right|
\leq ce^{-2\beta\mu T}
\end{aligned}
\end{equation}
for some positive constant $c.$
We can also provide  a higher-order convergence rate of the second-order sensitivity 
as  
\begin{equation} 
\begin{aligned}
	\left|\frac{{\partial_{\xi\xi}} p_T}{p_T}-\Big(\frac{\partial_\xi p_T}{p_T}\Big)^2 
+\frac{4\beta}{\xi} \frac{\partial_\xi p_T}{p_T} + \frac{2\beta\eta(4\beta-1)}{\xi^2}\right|
\leq ce^{-4\beta\mu T} \,.
\end{aligned}
\end{equation}
The long-term sensitivities with respect to the parameters $\mu,$ $\theta,$ $\sigma$ and $\beta$ are described as 
\begin{equation} 
\begin{aligned}
&\left|\frac{1}{T}\frac{\partial_\mu p_T}{p_T}+\eta \right|\leq \frac{c}{T}\\
&\left|\frac{1}{T}\frac{\partial_\theta p_T}{p_T}-\frac{\mu}{\sigma^2}\frac{\sqrt{(\theta+\sigma^2/2)^2+2q\sigma^2}-(\theta+\sigma^2/2)}{\sqrt{(\theta+\sigma^2/2)^2+2q\sigma^2}}
\right|\leq \frac{c}{T}\\
&\left|\frac{1}{T}\frac{\partial_\sigma p_T}{p_T}+\frac{\mu(\theta+{\sigma^2}/{2}+2 q-\sqrt{(\theta+{\sigma^2}/{2})^2+2 q\sigma^2})}{\sigma \sqrt{(\theta+{\sigma^2}/{2})^2+2 q\sigma^2}} -\frac{2\mu}{\sigma^3}(\sqrt{(\theta+{\sigma^2}/{2})^2+2 q \sigma^2}-\theta- {\sigma^2}/{2})\theta \right|\leq \frac{c}{T}\\
&\left|\frac{\partial_\beta p_T}{p_T} \right|\leq c
\end{aligned}
\end{equation}
for some positive constant $c.$

\section{Conclusion}
\label{sec:con}

This paper investigated the large-time asymptotic behavior of the sensitivities of cash flows.
The price of cash flows 
is given 
in expectation form as
\begin{equation}
\label{eqn:conclusion_p_T}
p_T=\mathbb{E}_\xi^{\mathbb{P}}[e^{-\int_0^T r(X_s)\,ds} h(X_T)]\,.\end{equation}
We studied the extent to which this expectation is affected by small changes of the underlying Markov diffusion $X.$
The main idea is a PDE representation of the expectation 
by 
incorporating  the Hansen--Scheinkman decomposition  method.
The sensitivities of long-term cash flows and their large-time convergence rates can be represented via simple expressions
in terms of   eigenvalues and eigenfunctions of the pricing operator $h\mapsto \mathbb{E}_\xi^{\mathbb{P}}[e^{-\int_0^T r(X_s)\,ds} h(X_T)] $.

Essentially, we demonstrated two types of long-term sensitivities.
First, the first-order and second-order sensitivities with respect to the initial value $\xi=X_0$ were investigated.
Using the Hansen--Scheinkman decomposition, 
we can express the expectation $p_T$ as 
\begin{equation*} 
\begin{aligned}
p_T=\phi(\xi)\,e^{-\lambda T}f(T,\xi) 
\end{aligned}
\end{equation*} 
with recurrent eigenpair $(\lambda,\phi)$ and remainder function
$f(T,\xi).$
Applying 
the Hansen--Scheinkman decomposition repeatedly, the derivative  $f_x(T,\xi)$ has the decomposition 
\begin{equation}
\begin{aligned}
f_x(T,\xi)
&=\hat{\phi}(\xi)e^{-\hat{\lambda} T}\hat{f}(T,\xi)
\end{aligned}
\end{equation}
with recurrent eigenpair $(\hat{\lambda},\hat{\phi})$ and  remainder function
$\hat{f}(T,\xi).$
Under appropriate conditions,  
the first-order sensitivity 
and its convergence rate with respect to the initial value are given by
\begin{equation} 
\left|\frac{\partial_\xi p_T}{p_T}
-\frac{\phi'(\xi)}{\phi(\xi)}\right|\leq  ce^{-\hat{\lambda} T}\,,\;T\ge0
\end{equation} 
for some positive constant $c,$ which is independent of $T.$
For the second-order sensitivity respect to the initial value, a similar expression  is obtained. We have
\begin{equation}
\left|\frac{{\partial_{\xi\xi}} p_T}{p_T}-\Big(\frac{\partial_\xi p_T}{p_T}\Big)^2-\frac{\phi''(\xi)}{\phi(\xi)}+\Big(\frac{\phi'(\xi)}{\phi(\xi)}\Big)^2
-\frac{\hat{\phi}'(\xi)}{\hat{\phi}(\xi)} \frac{\partial_\xi p_T}{p_T} +\frac{\hat{\phi}'(\xi)}{\hat{\phi}(\xi)} \frac{\phi'(\xi)}{\phi(\xi)}\right|\leq c(e^{-\tilde{\lambda} T}+e^{-\hat{\lambda} T})e^{-\hat{\lambda} T}
\end{equation}
for some positive constant $c,$ which is independent of $T,$ where $\tilde{\lambda}$ is a recurrent eigenvalue.

Second, the sensitivities
with respect to the drift and diffusion terms were demonstrated.
From the Hansen--Scheinkman decomposition,
the perturbed expectation $p_T^\epsilon=\mathbb{E}_\xi^{\mathbb{P}}[e^{-\int_0^T r(X_s^\epsilon)\,ds} f(X_T^\epsilon)]$ induced by the perturbed process $X^\epsilon$ is 
expressed as 

$$p_T^{(\epsilon)}=\phi^{(\epsilon)}(\xi)\,e^{-\lambda^{(\epsilon)} T}f^{(\epsilon)}(T,\xi)$$
with recurrent eigenpair $(\lambda^{(\epsilon)},\phi^{(\epsilon)})$ and remainder function  $f^{(\epsilon)}(T,\xi).$ 
The long-term sensitivity of $p_T^{(\epsilon)}$ with respect to the perturbation parameter $\epsilon$  can be expressed in a simple form as 
$$\left|\frac{1}{T}\frac{\partial}{\partial \epsilon}\Big|_{\epsilon=0} \ln p_T^{(\epsilon)}+\frac{\partial}{\partial \epsilon}\Big|_{\epsilon=0}\lambda^{(\epsilon)}\right|
\leq  \frac{c}{T}$$
for some positive constant $c,$ 
which is independent of $T.$

We presented applications of these results to three practical problems: utility maximization, entropic risk measures and bond prices.
Under factor models and local volatility models, these problems can be transformed into the expectation form in Eq.\eqref{eqn:conclusion_p_T}. 
As specific examples, explicit formulas for several market models, namely the CIR model, the 3/2 model and the CEV model, were investigated. \newline

\noindent\textbf{Acknowledgement.}\\ 
The author sincerely appreciates the valuable suggestions received from
the	Editor, the Assistant Editor and the anonymous referee for their helpful comments and insights
that have greatly improved the quality of the paper.
This research was supported by the National Research Foundation of Korea (NRF) grants funded by the Ministry of Science and ICT  (No. 2018R1C1B5085491 and No. 2017R1A5A1015626) 
and the Ministry of Education   (No. 2019R1A6A1A10073437) through Basic Science Research Program.

\appendices

\section{CIR model}
\label{app:CIR_model}

Let  $(\Omega,\mathcal{F},(\mathcal{F}_t)_{t\ge0},(\mathbb{P}_t)_{t\ge0})$ be a consistent probability space that has  
a one-dimensional  Brownian motion $B=(B_t)_{t\ge0}.$ 
The filtration $(\mathcal{F}_t)_{t\ge0}$ is the completed filtration generated by $B.$
The CIR model is a process given as a solution of
$$dX_t=(b-aX_t)\,dt+\sigma \sqrt{X_t}\,dB_t\,,\;X_0=\xi$$
for $a,\sigma,\xi>0$ and $2b>\sigma^2.$
For $q>0$ and a nonzero nonnegative function $h$ with polynomial growth, we define
$$p_T=\mathbb{E}^\mathbb{P}[e^{-q\int_0^TX_s\,ds} h(X_T)]\,.$$
It can be shown that the quadruple of functions
$$(b-ax,\sigma\sqrt{x},qx,h(x))\,,\;x>0$$
satisfies Assumptions \ref{assume:a} -- \ref{assume:h}.
The recurrent eigenpair is
$$(\lambda,\phi(x)):=(b\eta, e^{-\eta x})$$
where  
$$\alpha:=\sqrt{a^2+2q\sigma^2}\,,\;
\eta:=\frac{\alpha-a}{\sigma^2}\,.$$ 
Under the consistent family of recurrent eigen-measures $(\hat{\mathbb{P}}_t)_{t\ge0},$ 
the process
$$\hat{B}_t=\sigma\eta\int_0^t\sqrt{X_s}\,ds+B_t\,,\;t\ge0$$
is a Brownian motion, and $X$ follows
$$dX_t= (b-\alpha X_t)\,dt+\sigma\sqrt{X_t}\,d\hat{B}_t\,.$$

Using the Hansen--Scheinkman decomposition, we have
\begin{equation}\label{eqn:CIR_decompo}
\begin{aligned}
p_T&=\mathbb{E}_\xi^\mathbb{P}[e^{-q\int_0^TX_s\,ds} h(X_T)]
=\mathbb{E}_\xi^{\hat{\mathbb{P}}}[h(X_T)e^{\eta X_T}] \,e^{-\eta\xi}\,e^{-\lambda T}\,.
\end{aligned}
\end{equation}
For $t\in [0,\infty)$ and $x>0,$ the remainder function is 
\begin{equation}\label{eqn:CIR_reminder}
f(t,x)=\mathbb{E}_x^{\hat{\mathbb{P}}}[h(X_t)e^{\eta X_t}] 
\end{equation} 
so that $p_T=f(T,\xi)e^{-\eta\xi}e^{-\lambda T}.$
For nonzero and nonnegative $h$ with polynomial growth, it is easy to show that
$f(T,\xi)$ converges to a positive constant as $T\to\infty$
by using Lemma \ref{lem:CIR_MMG}.
It is also easy to check that $f$ is $C^{1,2}$ 
by considering the density function of $X_t.$  
We will investigate the behavior of the function $f(T,\xi)$ by expressing this function as a 
solution of a second-order differential equation.
Using the Feynman--Kac formula, the function $f$ satisfies
\begin{equation}\label{eqn:CIR_FK}
-f_t+\frac{1}{2}\sigma^2xf_{xx}+(b-\alpha x)f_x=0\,,\;f(0,x)=h(x)e^{\eta x}\,.
\end{equation}

\begin{lemma}\label{lem:CIR_MMG} Let $\hat{B}$ be a Brownian motion on the consistent probability space
	  $(\Omega,\mathcal{F},(\mathcal{F}_t)_{t\ge0},(\hat{\mathbb{P}}_t)_{t\ge0}).$
Suppose that $X$ is a solution of
$$dX_t= (b-\alpha X_t)\,dt+\sigma\sqrt{X_t}\,d\hat{B}_t\,,\;X_0=\xi$$
where $\alpha,\sigma,\xi>0$ and $2b>\sigma^2.$ Then,
for $\beta<2\alpha/\sigma^2$, we have
$$\mathbb{E}^{\hat{\mathbb{P}}}[e^{\beta X_T}]=\Big(\frac{1}{1-\beta c(T)}\Big)^{2b/\sigma^2}e^{\frac{\beta }{1-\beta c(T)}e^{-\alpha T}\xi}$$
where $c(T):=\sigma^2 (1-e^{-\alpha T})/2\alpha.$ 
Thus, in this case,
$$\mathbb{E}^{\hat{\mathbb{P}}}[e^{\beta X_T}]\leq \Big(\frac{1}{1-\beta \sigma^2/2\alpha}\Big)^{2b/\sigma^2}e^{\frac{\beta }{1-\beta\sigma^2/2\alpha}e^{-\alpha T}\xi}\,,$$
and
$$\lim_{T\to\infty}\mathbb{E}^{\hat{\mathbb{P}}}[e^{\beta X_T}]=\Big(\frac{1}{1-\beta\sigma^2/2\alpha}\Big)^{2b/\sigma^2}\,.$$
\end{lemma}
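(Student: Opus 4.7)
The plan is to follow the classical affine ansatz for CIR. Fix $\beta<2\alpha/\sigma^2$ and look for $v(\tau,x)=\exp(A(\tau)+B(\tau)x)$ on $\tau\in[0,T]$ with $A(0)=0$, $B(0)=\beta$ such that the process $(v(T-t,X_t))_{0\le t\le T}$ is a $\hat{\mathbb{P}}_T$-martingale. Applying the Ito formula to $v(T-t,X_t)$ along $dX_t=(b-\alpha X_t)\,dt+\sigma\sqrt{X_t}\,d\hat{B}_t$ and setting the drift to zero, one sees that $A$ and $B$ must solve
\begin{equation}
B'(\tau)=-\alpha B(\tau)+\tfrac{1}{2}\sigma^{2}B(\tau)^{2},\qquad A'(\tau)=bB(\tau),
\end{equation}
with initial conditions $B(0)=\beta$, $A(0)=0$. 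Once these ODEs are solved and the resulting candidate $v$ is shown to be a true martingale, evaluating at $t=0$ gives $v(T,\xi)=\mathbb{E}_\xi^{\hat{\mathbb{P}}}[v(0,X_T)]=\mathbb{E}_\xi^{\hat{\mathbb{P}}}[e^{\beta X_T}]$, which is the claimed formula.

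For the Riccati equation, I would substitute $y=1/B$ to linearise it into $y'=\alpha y-\sigma^{2}/2$. With $y(0)=1/\beta$, the unique solution is $y(\tau)=\sigma^{2}/(2\alpha)+(1/\beta-\sigma^{2}/(2\alpha))e^{\alpha\tau}$, and rearranging yields
\begin{equation}
B(\tau)=\frac{\beta e^{-\alpha\tau}}{1-\beta c(\tau)},\qquad c(\tau)=\frac{\sigma^{2}(1-e^{-\alpha\tau})}{2\alpha}.
\end{equation}
The constraint $\beta<2\alpha/\sigma^{2}$ combined with $c(\tau)<\sigma^{2}/(2\alpha)$ guarantees $1-\beta c(\tau)>0$ for all $\tau\ge0$, so $B$ has no blow-up. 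Integrating $A'=bB$, using the substitution $u=1-\beta c(\tau)$ (so that $du=-\beta\sigma^{2}e^{-\alpha\tau}/(2)\,d\tau$ and $B\,d\tau=-(2/\sigma^{2})\,du/u$), gives the closed form $A(\tau)=-(2b/\sigma^{2})\ln(1-\beta c(\tau))$. Substituting back produces exactly the stated expression.

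The remaining step, which I expect to be the only real obstacle, is to upgrade the local martingale property of $(v(T-t,X_t))_{0\le t\le T}$ to a true martingale, because the Feynman--Kac formula of Remark~\ref{remark:FK} does not apply directly (exponential, rather than polynomial, growth in $x$). The cleanest route is to observe that $v(T-t,X_t)$ is a nonnegative local martingale, so by Fatou it is a supermartingale. The explicit form gives $v(T-t,X_t)\le \bigl(1-\beta\sigma^{2}/(2\alpha)\bigr)^{-2b/\sigma^{2}}\exp\bigl(\beta_{+}X_{t}/(1-\beta_{+}\sigma^{2}/(2\alpha))\bigr)$, where $\beta_{+}=\max(\beta,0)<2\alpha/\sigma^{2}$. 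Since the CIR process has uniformly (in $t\le T$) integrable exponentials of $X_t$ for any exponent below $2\alpha/\sigma^{2}$ (a standard CIR moment fact, provable by Gronwall applied to $\mathbb{E}[\exp(\gamma X_t)]$ via the same affine ansatz on a shorter horizon, or by the known noncentral $\chi^{2}$ density), the family $\{v(T-t,X_t)\}_{0\le t\le T}$ is uniformly integrable and hence a martingale. Taking expectations and sending $t\to 0$ (or equivalently evaluating at $t=0$) yields $v(T,\xi)=\mathbb{E}_\xi^{\hat{\mathbb{P}}}[e^{\beta X_T}]$.

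The two subsequent inequalities follow immediately: the bound is obtained by replacing $c(T)$ by its supremum $\sigma^{2}/(2\alpha)$ in the two factors (both factors are monotone in $c(T)$ for $\beta$ with the stated sign bookkeeping, which I would verify separately for $\beta\ge 0$ and $\beta<0$), and the limit follows by letting $T\to\infty$ in the explicit formula, using $c(T)\to\sigma^{2}/(2\alpha)$ and $e^{-\alpha T}\xi\to 0$.
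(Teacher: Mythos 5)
The paper's own proof of this lemma is a one-line citation to Jeanblanc--Yor--Chesney (Corollary~6.3.4.4), so your self-contained derivation via the affine ansatz is a genuinely different route. The Riccati computation is correct: the linearisation $y=1/B$, the solution for $B(\tau)$, the substitution that gives $A(\tau)=-(2b/\sigma^2)\ln(1-\beta c(\tau))$, and the final evaluation at $(\tau,x)=(T,\xi)$ all check out and reproduce the stated formula. The trade-off is that the citation hides the martingale verification inside the reference, while your route must handle it explicitly --- and this is exactly where your sketch has a gap.

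The dominating bound you write, with exponent $\beta_{+}/(1-\beta_{+}\sigma^{2}/(2\alpha))$, is not below $2\alpha/\sigma^{2}$ in general: it equals $2\alpha/\sigma^{2}$ at $\beta_{+}=\alpha/\sigma^{2}$ and blows up as $\beta_{+}\uparrow 2\alpha/\sigma^{2}$. So for $\beta\in[\alpha/\sigma^{2},2\alpha/\sigma^{2})$ the ``standard CIR exponential moment fact'' you invoke does not apply to your dominating function. The bound is also unnecessarily weak: the Riccati phase portrait $B'=B(\tfrac{\sigma^2}{2}B-\alpha)$ shows that $B(\tau)$ flows monotonically towards $0$ from $B(0)=\beta$, so $B(\tau)\le\beta_{+}$ for all $\tau\ge0$; this gives the sharper domination $v(T-t,X_t)\le C\,e^{\beta_{+}X_t}$ with $\beta_{+}<2\alpha/\sigma^{2}$ strictly, which is what you actually need. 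Even with that fix, ``UI $\Rightarrow$ martingale'' for a nonnegative supermartingale requires a little more than you state: either show $L^{p}$-boundedness of the \emph{stopped} processes along a localizing sequence (which follows from the $e^{\beta_{+}X}$ domination plus $p\beta_{+}<2\alpha/\sigma^{2}$), or prove the identity first for $\beta$ small enough that a polynomial-growth Feynman--Kac argument applies after conjugating by $e^{-\beta x}$ and then extend by analyticity in $\beta$. Finally, note that the inequality $\mathbb{E}[e^{\beta X_T}]\le(1-\beta\sigma^2/2\alpha)^{-2b/\sigma^2}e^{\beta e^{-\alpha T}\xi/(1-\beta\sigma^2/2\alpha)}$ as stated actually fails for $\beta<0$ (the first factor is \emph{anti}-monotone in $c(T)$ there); the ``sign bookkeeping'' you defer would reveal this. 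The paper only ever invokes the lemma with $\beta>0$, so nothing downstream breaks, but the restriction should be made explicit.
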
 
\noindent Refer to  \cite[Corollary 6.3.4.4]{jeanblanc2009mathematical} for the proof.


\subsection{First-order sensitivity of $\xi$}
\label{app:delta}

We estimate the large-time asymptotic behavior of the first-order sensitivity of $p_T$ with respect to the initial value $\xi.$
In this section, assume that $h$ is continuously differentiable and that $h$ and $h'$ have polynomial growth. 
From Eq.\eqref{eqn:CIR_decompo}, it follows that
\begin{equation} \label{eqn:CIR_delta}
\begin{aligned}
\frac{\partial_\xi  p_T}{p_T} 
=\frac{  f_x(T,\xi)}{f(T,\xi)} -\eta \,.
\end{aligned}
\end{equation}
The function $f_x(t,x)$ satisfies
 \begin{equation}
 \label{eqn:CIR_delta_PDE}
 -f_{xt}+\frac{1}{2}\sigma^2x f_{xxx}+\Big(b+\frac{1}{2}\sigma^2-\alpha x\Big)f_{xx}
 -\alpha f_x=0\,,\;f_x(0,x)=(h'(x)+\eta h(x))e^{\eta x}\,,
 \end{equation} 
which is  obtained from Eq.\eqref{eqn:CIR_FK}  by taking the differentiation in $x.$ 
Note that since $f$ is $C^{1,2}$   and every coefficient is continuously differentiable in $x$ in  Eq.\eqref{eqn:CIR_FK}, the function $f$ is thrice continuously differentiable in $x.$ It is easy to show that
the quadruple of functions $$(b+\sigma^2/2-\alpha x,\sigma\sqrt{x},\alpha, (h'+\eta h)e^{\eta x})\,,\;x>0$$ satisfies Assumptions \ref{assume:a} -- \ref{assume:h}.
The corresponding process 
  $\hat{X}$ is the solution of 
$$d\hat{X}_t=(b+\sigma^2/2-\alpha \hat{X}_t)\,dt+\sigma {\hat{X}_t}^{1/2}\,d\hat{B}_t\,. $$

\begin{lemma} \label{lemma:CIR_x_remainder}
	The remainder function $f$ satisfies
	\begin{equation}
	\label{eqn:CIR_delta_decom}
	f_x(t,x)=\mathbb{E}^{\hat{\mathbb{P}}}[(h'(\hat{X}_t)+\eta h(\hat{X}_t))e^{\eta \hat{X}_t}|\hat{X}_0=x]e^{-\alpha t}
	\end{equation}
	for $x>0$ and $t\ge0.$
\end{lemma}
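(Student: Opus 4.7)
The plan is to verify the hypotheses of Proposition \ref{prop:1st_initial_sen_linear} in the CIR setting; once done, its conclusion immediately delivers Eq.\eqref{eqn:CIR_delta_decom}, because here $\kappa'(x)=-\alpha$ is constant (so $\int_0^t\kappa'(\hat{X}_s)\,ds=-\alpha t$) and $(h/\phi)'(x)=(h'(x)+\eta h(x))e^{\eta x}$. Both relevant quadruples, namely $(b-ax,\sigma\sqrt{x},qx,h)$ on $(\mathbb{P}_t)_{t\ge0}$ and $(b+\sigma^2/2-\alpha x,\sigma\sqrt{x},\alpha,(h'+\eta h)e^{\eta x})$ on $(\hat{\mathbb{P}}_t)_{t\ge0}$, have already been checked to satisfy Assumptions \ref{assume:a}--\ref{assume:h}, so Theorem \ref{thm:initial_main_thm} ensures that $(f_x(t-s,\hat{X}_s)e^{-\alpha s})_{0\le s\le t}$ is a $\hat{\mathbb{P}}_t$-local martingale. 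The remaining task is to upgrade this to a true martingale, and then evaluate at $s=t$ using the time-homogeneous Markov property of $\hat{X}$.

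For the structural conditions (i)--(iii) and (v) of Proposition \ref{prop:1st_initial_sen_linear}, a direct computation yields $\kappa+\sigma'\sigma-\sigma^2\phi'/\phi = b+\sigma^2/2-ax$, of linear growth, and $\sigma(x)=\sigma\sqrt{x}$ also has linear growth on $(0,\infty)$. Using the eigenvalue identity $\tfrac{1}{2}\sigma^2\eta^2+a\eta=q$ (read off from $\mathcal{L}\phi=(qx-b\eta)\phi$ by matching the coefficient of $x$), one simplifies
\[
(\hat{\mathcal{L}}\phi)/\phi-\kappa' \;=\; qx+\alpha-\eta(b+\sigma^2/2),
\]
which is bounded below on $(0,\infty)$ since $q>0$. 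The function $h'-h\phi'/\phi=h'+\eta h$ inherits polynomial growth from $h,h'$. For condition (v), Remark \ref{rem:mart} reduces the martingale property of the auxiliary local martingale in Eq.\eqref{eqn:f} to the divergence of two explicit scale-type integrals at the endpoints of $(0,\infty)$; for the CIR drift $\kappa(x)=b-\alpha x$ and diffusion $\sigma^2(x)=\sigma^2 x$ these diverge because the Feller condition $2b>\sigma^2$ prevents attainment of $0$ and the mean-reverting drift handles $\infty$.

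The principal obstacle is condition (iv), namely polynomial growth of $\max_{0\le t\le T}|f_x(t,x)|\phi(x)=\max_{0\le t\le T}|f_x(t,x)|e^{-\eta x}$. The heuristic is that $f(t,x)=\mathbb{E}_x^{\hat{\mathbb{P}}}[h(X_t)e^{\eta X_t}]$ itself grows roughly like $e^{\eta x}$: Lemma \ref{lem:CIR_MMG} applied to the CIR process under $\hat{\mathbb{P}}$ yields a factor $\exp\!\bigl(\eta e^{-\alpha t}x/(1-\eta c(t))\bigr)$, whose exponent is uniformly bounded in $t\ge 0$ by a constant multiple of $\eta x$ since $\eta=(\alpha-a)/\sigma^2<2\alpha/\sigma^2$. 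To turn this into a rigorous bound on $f_x$, I would differentiate the (non-central chi-squared) transition density of the CIR process with respect to the starting point $x$ and bound the resulting integrand using polynomial growth of $h,h'$ together with the uniform-in-$t\in[0,T]$ exponential-moment bound of Lemma \ref{lem:CIR_MMG}; multiplication by $e^{-\eta x}$ then cancels the exponential factor and leaves polynomial growth in $x$. With (iv) verified, Proposition \ref{prop:1st_initial_sen_linear} applies and delivers exactly Eq.\eqref{eqn:CIR_delta_decom}.
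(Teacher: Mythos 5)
Your proposal is correct and follows essentially the same route as the paper: both verify the hypotheses of Proposition \ref{prop:1st_initial_sen_linear} for the CIR quadruples and then invoke its conclusion, with condition (iv) handled via Lemma \ref{lem:CIR_MMG} and the explicit CIR transition density and condition (v) via the Pinsky non-explosion/recurrence criterion. The paper merely unfolds the proof of Proposition \ref{prop:1st_initial_sen_linear} concretely (defining $\mathbb{Q}$, applying Feynman--Kac, unwinding the measure change) rather than citing it, but the substance is identical to your verification of conditions (i)--(v).
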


\begin{proof} 
Define
$g(t,x):=f_x(t,x)\phi(x).$
Eq.\eqref{eqn:CIR_delta_PDE} gives \begin{equation}
\begin{aligned}
&-g_t+\frac{1}{2}\sigma^2xg_{xx}+(b+\sigma^2/2-a x)g_x+\Big(-\frac{1}{2}(\alpha+a)\eta x+(b+\sigma^2/2)\eta-\alpha\Big)g=0\\
&g(0,x)=h'(x)+\eta h(x)\,.
\end{aligned}
\end{equation}
Consider a consistent family $(\mathbb{Q}_t)_{t\ge0}$ of probability measures  where each 
$\mathbb{Q}_t$ is a probability measure on $\mathcal{F}_t$ defined as 
\begin{equation} 
\frac{d\mathbb{Q}_t}{d\hat{\mathbb{P}}_t}=\frac{\phi(\hat{X}_0)}{\phi(\hat{X}_t)}e^{\int_0^t\frac{\hat{\mathcal{L}}\phi(\hat{X}_s)}{\phi(\hat{X}_s)}\,ds}=e^{\eta(\hat{X}_t-x)+\int_0^t\frac{1}{2}(\alpha+a)\eta \hat{X}_s
	-(b+\sigma^2/2)\eta\,ds}=e^{-\frac{1}{2}\sigma^2\eta^2\int_0^t\hat{X}_s\,ds+\sigma\eta \int_0^t \hat{X}_s^{1/2}\,d\hat{B}_s}\,.
\end{equation} 
It is easy to check that  $\hat{X}$ satisfies
\begin{equation}
\begin{aligned}
d\hat{X}_t
&=(b+\sigma^2/2-\alpha \hat{X}_t)\,dt+\sigma {\hat{X}_t}^{1/2}\,d\hat{B}_t\\
&=(b+\sigma^2/2-a\hat{X}_t)\,dt+\sigma {\hat{X}_t}^{1/2}\,d{B}_t^{\mathbb{Q}} 
\end{aligned}
\end{equation}
for a $(\mathbb{Q}_t)_{t\ge0}$-Brownian motion $(B_t^{\mathbb{Q}})_{t\ge0}.$
By \cite[Theorem 5.1.8]{pinsky1995positive}, since this process does not reach the boundaries under the consistent family of probability measures $(\mathbb{Q}_t)_{t\ge0},$ the $\mathbb{Q}_T$-local martingale 
$(e^{-\frac{1}{2}\sigma^2\eta^2\int_0^t\hat{X}_s\,ds+\sigma\eta \int_0^t \hat{X}_s^{1/2}\,d\hat{B}_s})_{0\le t\le T}$
is a  $\mathbb{Q}_T$-martingale.
Observe that the operator $\hat{\mathcal{L}}$ given as
$$\hat{\mathcal{L}}f=\frac{1}{2}\sigma^2xf''(x)
+(b+\sigma^2/2-ax)f'(x)$$
is the infinitesimal generator of $\hat{X}$ under the consistent family of probability measures $(\mathbb{Q}_t)_{T\ge0},$
and for $\phi(x)=e^{-\eta x},$ we get $$\hat{\mathcal{L}}\phi(x)=\big(\frac{1}{2}(\alpha+a)\eta x
-(b+\sigma^2/2)\eta\big)e^{-\eta x}\,.$$
The Feynman--Kac formula (Remark \ref{remark:FK} or Proposition \ref{prop:1st_initial_sen_linear})
gives that
\begin{equation}
\begin{aligned}
f_x(t,x)\phi(x)=g(t,x)&=\mathbb{E}^{{\mathbb{Q}}}\big[
e^{\int_0^t-\frac{1}{2}(\alpha+a)\eta \hat{X}_s+(b+\sigma^2/2)\eta-\alpha\,ds}(h'(\hat{X}_t)+\eta h(\hat{X}_t))\big|\hat{X}_0=x\big]\\
&=\mathbb{E}^{{\mathbb{Q}}}\Big[
e^{-\int_0^t\frac{\hat{\mathcal{L}}\phi}{\phi}(\hat{X}_s)\,ds}\,\frac{\phi(\hat{X}_{t})}{\phi(\hat{X}_{0})}
e^{\int_0^t-\alpha\,ds}(h'(\hat{X}_t)+\eta h(\hat{X}_t))e^{\eta \hat{X}_t}\Big|\hat{X}_0=x\Big]\phi(x)\\
&=\mathbb{E}^{{\mathbb{Q}}}\Big[
\frac{d\hat{\mathbb{P}}_t}{d\mathbb{Q}_t}
e^{-\alpha t}(h'(\hat{X}_t)+\eta h(\hat{X}_t))e^{\eta \hat{X}_t}\Big|\hat{X}_0=x\Big]\phi(x)\\
&=\mathbb{E}^{\hat{\mathbb{P}}}\big[
(h'(\hat{X}_t)+\eta h(\hat{X}_t))e^{\eta \hat{X}_t}\big|\hat{X}_0=x\big]\phi(x)e^{-\alpha t}\,,
\end{aligned}
\end{equation}
which implies that
$$f_x(t,x)=\mathbb{E}^{\hat{\mathbb{P}}}\big[(h'(\hat{X}_t)+\eta h(\hat{X}_t))e^{\eta \hat{X}_t}\big|\hat{X}_0=x\big]e^{-\alpha t}\,.$$  
Condition (iv) of Proposition \ref{prop:1st_initial_sen_linear} can be confirmed from Lemma \ref{lem:CIR_MMG} and the density function of $X_t,$ and the other conditions are trivial.
\end{proof}

Using Eq.\eqref{eqn:CIR_delta_decom}, we can obtain the large-time behavior of $\partial_\xi p_T.$
 Since $h$ and $h'$ have polynomial growth,
for $\eta<\beta<2\alpha/\sigma^2,$ there is a positive constant $c_0=c_0(\beta)$ such that $|h'(x)+\eta h(x)|e^{\eta x}\leq c_0 e^{\beta x}$ for $x>0.$ From Lemma \ref{lem:CIR_MMG}, we have
\begin{equation}\label{eqn:CIR_delta_reminder}
\begin{aligned}
|f_x(t,x)|e^{\alpha t}\leq\mathbb{E}^{\hat{\mathbb{P}}}\Big[|h'(\hat{X}_t)+\eta h(\hat{X}_t)|e^{\eta \hat{X}_t}\Big|\hat{X}_0=x\Big]
&\le c_0\mathbb{E}^{\hat{\mathbb{P}}}[e^{\beta \hat{X}_t}|\hat{X}_0=x]\\
&\leq
c_0\Big(\frac{1}{1-\beta \sigma^2/2\alpha}\Big)^{1+2b/\sigma^2}e^{\frac{\beta }{1-\beta\sigma^2/2\alpha}e^{-\alpha t}x} \leq c_1
\end{aligned}
\end{equation}
for some positive constant $c_1$ which depends on $x$ but does not depend on $t.$
It follows that
\begin{equation}
\label{eqn:CIR_deri_x}
|f_x(t,x)|\leq c_1e^{-\alpha t}\,.
\end{equation} 
Therefore,
\begin{equation} 
\begin{aligned}
\left|\frac{\partial_\xi p_T}{p_T}+\eta\right|
=\left|\frac{  f_x(T,\xi)}{f(T,\xi)}  \right|\leq c_2 e^{-\alpha T}
\end{aligned}
\end{equation}
for some positive constant $c_2.$ This gives the desired result.

\subsection{Second-order sensitivity of $\xi$}

We analyze the second-order sensitivity  with respect to the initial value $\xi.$
In this section, assume that $h$ is twice continuously differentiable and that $h,$ $h',$ $h''$  have polynomial growth.
From Eq.\eqref{eqn:CIR_delta}, we know that
\begin{equation} \label{eqn:CIR_gamma}
\begin{aligned}
\frac{{\partial_{\xi\xi}} p_T}{p_T}
=\frac{  f_{xx}(T,\xi)}{f(T,\xi)} -2\eta \frac{f_x(T,\xi)}{f(T,\xi)} +\eta^2 \,.
\end{aligned}
\end{equation}
Since we already estimated the large-time asymptotic behavior of 
$f_x(T,\xi),$ it suffices to investigate the second-order derivative $f_{xx}(T,\xi).$
Define $\tilde{\mathbb{P}}:=\hat{\mathbb{P}}$ (to be consistent with the notations in Table \ref{Tab:Tcr}) and 
$$\hat{f}(t,x):=\mathbb{E}^{\tilde{\mathbb{P}}}[(h'(\hat{X}_t)+\eta h(\hat{X}_t))e^{\eta \hat{X}_t}|\hat{X}_0=x]$$
so that $f_x(t,x)=\hat{f}(t,x)e^{-\alpha t},$ which gives
\begin{equation}
\label{eqn:f_xx_hat_f}
f_{xx}(t,x)=\hat{f}_x(t,x)e^{-\alpha t}\,.
\end{equation} 
By the Feynman--Kac formula, we have 
$$-\hat{f}_t+\frac{1}{2}\sigma^2x\hat{f}_{xx}+(b+\sigma^2/2-\alpha x)\hat{f}_x=0\,,\;\hat{f}(0,x)=(h'(x)+\eta h(x))e^{\eta x}\,.$$
Since $\hat{f}$ is $C^{1,2}$   and every coefficient is continuously differentiable in $x$, the function $\hat{f}$ is thrice continuously differentiable in $x.$
Differentiate this PDE in $x,$ 
then
\begin{equation}\label{eqn:cir_2nd_initial_sens_hat_f_PDE}
\begin{aligned}
&-\hat{f}_{xt}+\frac{1}{2}\sigma^2x\hat{f}_{xxx}+(b+\sigma^2-\alpha x)\hat{f}_{xx}-\alpha \hat{f}_x=0\,,\;\hat{f}_x(0,x)=(h''(x)+2\eta h'(x)+\eta^2h(x))e^{\eta x}\,.
\end{aligned}
\end{equation}
It is easy to show that
the quadruple of functions $$(b+\sigma^2-\alpha x,\sigma\sqrt{x},\alpha, (h''(x)+2\eta h'(x)+\eta^2h(x))e^{\eta x})\,,\;x>0$$ satisfies Assumptions \ref{assume:a} -- \ref{assume:h}.
The corresponding process 
$\tilde{X}$ is a solution of 
$$d\tilde{X}_t=(b+\sigma^2-\alpha \tilde{X}_t)\,dt+\sigma {\tilde{X}_t}^{1/2}\,d\tilde{B}_t\,. $$
We observe that $\hat{f}_x(t,x)$ satisfies
\begin{equation}
\label{eqn:CIR_hat_f_x}
\hat{f}_x(t,x)=\mathbb{E}^{\tilde{\mathbb{P}}}[(h''(\tilde{X}_t)+2\eta h'(\tilde{X}_t)+\eta^2 h(\tilde{X}_t))e^{\eta \tilde{X}_t}|\tilde{X}_0=x]e^{-\alpha t}\,.
\end{equation} 
This is directly obtained from 
Proposition \ref{prop:1st_initial_sen_linear} by the same argument used in the derivation of Eq.\eqref{eqn:CIR_delta_decom}.

To analyze $\hat{f}_x(t,x),$
we apply the same argument used  in Eq.\eqref{eqn:CIR_delta_reminder} and Eq.\eqref{eqn:CIR_deri_x}. 
For $\eta<\beta<2\alpha/\sigma^2,$ there is a positive constant $c_0=c_0(\beta)$ such that $|(h''(x)+2\eta h'(x)+\eta^2 h(x))|e^{\eta x}\leq c_0 e^{\beta x}$ for $x>0.$ Using Lemma \ref{lem:CIR_MMG}, we have
\begin{equation} \label{eqn:CIR_gamma_h}
\begin{aligned}
|\hat{f}_x(t,x)|e^{\alpha t}
&\leq \mathbb{E}^{\tilde{\mathbb{P}}}\Big[|h''(\tilde{X}_t)+2\eta h'(\tilde{X}_t)+\eta^2 h(\tilde{X}_t)|e^{\eta \tilde{X}_t}\Big|\tilde{X}_0=x\Big]\\
&\le c_0\mathbb{E}^{\tilde{\mathbb{P}}}[e^{\beta \tilde{X}_t}|\tilde{X}_0=x]\\
&\leq
c_0\Big(\frac{1}{1-\beta \sigma^2/2\alpha}\Big)^{1+2b/\sigma^2}e^{\frac{\beta }{1-\beta\sigma^2/2\alpha}e^{-\alpha t}x} \leq c_1
\end{aligned}
\end{equation}
for some positive constant $c_1$ which depends on $x$ but does not depend on $t.$
It follows that $$|\hat{f}_x(t,x)|\leq c_1e^{-\alpha t}\,,$$
which gives
\begin{equation}
\label{eqn:CIR_f_xx}
|f_{xx}(t,x)|=|\hat{f}_x(t,x)|e^{-\alpha t}\leq c_1e^{-2\alpha  t}\,.
\end{equation} 
Therefore,
\begin{equation} 
\begin{aligned}
\left|\frac{{\partial_{\xi\xi}} p_T}{p_T}-\eta^2\right|
\le \left| \frac{  f_{\xi \xi}(T,\xi)}{f(T,\xi)}\right| +2\eta\left| \frac{f_\xi(T,\xi)}{f(T,\xi)}\right| \leq c_2e^{-\alpha T} 
\end{aligned}
\end{equation}
for some positive constant $c_2.$
We can also provide  a higher-order convergence rate as follows.
From Eq.\eqref{eqn:CIR_delta} and Eq.\eqref{eqn:CIR_gamma}, 
$$\left|\frac{{\partial_{\xi\xi}} p_T}{p_T}-\Big(\frac{\partial_\xi p_T}{p_T}\Big)^2\right|
\le \left| \frac{  f_{\xi \xi}(T,\xi)}{f(T,\xi)}\right| +\left( \frac{f_\xi(T,\xi)}{f(T,\xi)}\right)^2\le c_3e^{-2\alpha T}  $$
for some positive constant $c_3.$

\subsection{Sensitivity of $b$}

We investigate  the large-time asymptotic behavior of the  sensitivity  with respect to the parameter $b.$
In this section, assume that $h$ is continuously differentiable and that $h$ and $h'$ have polynomial growth. 
Using Eq.\eqref{eqn:CIR_decompo} and Eq.\eqref{eqn:CIR_reminder},
since $\eta$ is independent of $b$ and
$\lambda=b\eta,$ we have 
$$\frac{\partial_b p_T}{p_T}=\frac{f_b(T,\xi)}{f(T,\xi)}-\eta T\,.$$
It can be easily shown that $f$ is continuously differentiable in  $b$ by considering the density function of $X_t$ or  by using \cite[Theorem 4.8]{park2018sensitivity}.
We focus on the large-time behavior of $f_b(T,\xi).$  
Differentiate Eq.\eqref{eqn:CIR_FK} in $b,$ then
\begin{equation} \label{eqn:f_b_PDE_FK}
-f_{bt}+\frac{1}{2}\sigma^2xf_{bxx}+(b-\alpha x)f_{bx}+f_x=0\,,\;f_b(0,x)=0\,.
\end{equation}
From the Feynman--Kac formula, one can show that
\begin{equation}
\label{eqn:CIR_f_b_prob_expression}
f_b(t,x)
=\mathbb{E}_x^{\hat{\mathbb{P}}}\Big[\int_0^tf_x(t-s,X_s)\,ds\Big]
\end{equation} 
by the same method used in the proof of Lemma \ref{lemma:CIR_x_remainder}.

We can estimate the expectation on the right-hand side by using the same method in Section \ref{app:delta}.
For $\eta<\beta<\alpha/\sigma^2,$ Eq.\eqref{eqn:CIR_delta_decom} and
Eq.\eqref{eqn:CIR_delta_reminder} implies that
\begin{equation}\label{eqn:CIR_sens_b}  
\begin{aligned}
|f_x(t-s,x)|
&\leq
c_0\Big(\frac{1}{1-\beta \sigma^2/2\alpha}\Big)^{1+2b/\sigma^2}e^{\frac{\beta }{1-\beta\sigma^2/2\alpha}e^{-\alpha (t-s)}x}  e^{-\alpha (t-s)} \leq c_1e^{\gamma x}  e^{-\alpha (t-s)}
\end{aligned}
\end{equation}
where
$$c_1:=c_0\Big(\frac{1}{1-\beta \sigma^2/2\alpha}\Big)^{1+2b/\sigma^2},\;\gamma:=\frac{\beta }{1-\beta\sigma^2/2\alpha}\,.$$
Then,
$$|f_b(t,x)|
\le\mathbb{E}_x^{\hat{\mathbb{P}}}\Big[\int_0^t|f_x(t-s,X_s)|\,ds\Big]
= \int_0^t\mathbb{E}_x^{\hat{\mathbb{P}}}|f_x(t-s,X_s)|\,ds
\leq c_1\int_0^te^{-\alpha (t-s)} \mathbb{E}_x^{\hat{\mathbb{P}}}[e^{\gamma X_s}]\,ds\,.$$
Since $\gamma<2\alpha/\sigma^2,$ by Lemma \ref{lem:CIR_MMG}, the expectation  $\mathbb{E}_x^{\hat{\mathbb{P}}}[e^{\gamma X_s}]$
is bounded in $s$ on $[0,\infty).$
Thus, there is a positive constant $c_2$ such that $\mathbb{E}_x^{\hat{\mathbb{P}}}[e^{\gamma X_s}]
\leq c_2 \alpha/c_1,$ which gives 
$$|f_b(t,x)|
\leq c_2 \alpha \int_0^te^{-\alpha (t-s)}\,ds=c_2(1-e^{-\alpha t})\,.
$$
Since $f(T,\xi)$ converges to a positive constant as $T\to\infty,$ we conclude that  
$$\left|\frac{1}{T}\frac{\partial_b p_T}{p_T}+\eta \right|=\frac{1}{T}\left|\frac{f_b(T,\xi)}{f(T,\xi)}\right|\leq \frac{c_3}{T}$$
 for some positive constant $c_3.$

\subsection{Sensitivity of $a$}

We investigate the large-time asymptotic behavior of the  sensitivity of $p_T$ with respect to the parameter $a.$
In this section, assume that $h$ is continuously differentiable and that $h$ and $h'$ have polynomial growth. 
From Eq.\eqref{eqn:CIR_decompo} and Eq.\eqref{eqn:CIR_reminder},
it follows that
$$\frac{\partial_a p_T}{p_T}=\frac{f_a(T,\xi)}{f(T,\xi)}-\xi\partial_a\eta-T\partial_a\lambda \,.$$
It can be easily shown that $f$ is continuously differentiable in $a$ by considering the density function of $X_t$ or by using \cite[Theorem 4.8]{park2018sensitivity}.
We focus on the  large-time behavior of $f_a(T,\xi).$ 
Differentiate Eq.\eqref{eqn:CIR_FK} in $a,$ then
\begin{equation}\label{eqn:f_a_PDE_FK}
-f_{at}+\frac{1}{2}\sigma^2xf_{axx}+(b-\alpha x)f_{ax}-\frac{a}{\alpha} xf_x=0\,,\;f_a(0,x)=-\frac{\eta}{\alpha}h(x)xe^{\eta x}\,.
\end{equation}
Here, we used   $\alpha=\sqrt{a^2+2q\sigma^2}$ and $\eta=\frac{\alpha-a}{\sigma^2}.$
By the Feynman--Kac formula in  Remark \ref{remark:FK}, it follows that
\begin{equation}\label{eqn:sen_f_a}
\begin{aligned}
f_a(t,x)&=\mathbb{E}_x^{\hat{\mathbb{P}}}\Big[-\frac{a}{\alpha}\int_0^tX_sf_x(t-s,X_s)\,ds-\frac{\eta}{\alpha}h(X_t)X_te^{\eta X_t}\Big]\\
&=-\frac{a}{\alpha}\int_0^t\mathbb{E}_x^{\hat{\mathbb{P}}}[X_sf_x(t-s,X_s)]\,ds-\frac{\eta}{\alpha}\mathbb{E}_x^{\hat{\mathbb{P}}}[h(X_t)X_te^{\eta X_t}]\,.
\end{aligned}
\end{equation}
Note that Remark \ref{remark:FK} cannot be applied directly because the two terms $-\frac{a}{\alpha} xf_x$ and $f_a(0,x)=-\frac{\eta}{\alpha}h(x)xe^{\eta x}$  neither have polynomial growth nor are nonnegative. 
To overcome this problem, define $g(t,x)=-f_a(t,x),$ then
\begin{equation} \label{eqn:app_CIR_a}
-f_{t}+\frac{1}{2}\sigma^2xg_{xx}+(b-\alpha x)g_{x}+\frac{a}{\alpha} xf_x=0\,,\;g_a(0,x)=\frac{\eta}{\alpha}h(x)xe^{\eta x}\,.
\end{equation} 
Since this equation satisfies the hypothesis of Remark \ref{remark:FK}, we obtain Eq.\eqref{eqn:sen_f_a} by the same method  used in the proof of Lemma \ref{lemma:CIR_x_remainder}.

Now, we use Eq.\eqref{eqn:sen_f_a} to estimate $f_a(t,x).$
From Eq.\eqref{eqn:CIR_sens_b}, we know that
\begin{equation}
\label{eqn:CIR_a_xf_x}
|xf_x(t-s,x)|
\leq  c_1e^{\delta x}  e^{-\alpha (t-s)}
\end{equation} 
for some positive constants $c_1$ and $\gamma<\delta<2\alpha/\sigma^2.$
By Lemma \ref{lem:CIR_MMG}, the expectation  $\mathbb{E}_x^{\hat{\mathbb{P}}}[e^{\delta X_s}]$
is bounded in $s$ on $[0,\infty).$
Thus, there is a positive constant $c_2$ such that $\mathbb{E}_x^{\hat{\mathbb{P}}}[e^{\delta X_s}]
\leq c_2 \alpha/c_1,$ which gives 
\begin{equation}
\label{eqn:CIR_a_Xf}
\int_0^t\mathbb{E}_x^{\hat{\mathbb{P}}}|X_sf_x(t-s,X_s)|\,ds
\leq c_1\int_0^t e^{-\alpha (t-s)}\mathbb{E}_x^{\hat{\mathbb{P}}}[e^{\delta X_s}]\,ds\leq
c_2 \alpha\int_0^t e^{-\alpha (t-s)}\,ds=c_2(1-e^{-\alpha t})\leq c_2\,. 
\end{equation} 
The expectation 
$\mathbb{E}_x^{\hat{\mathbb{P}}}[h(X_t)X_te^{\eta X_t}]$ is also bounded in $t$ on $[0,\infty).$ We conclude
that 
$$\left|\frac{1}{T}\frac{\partial_a p_T}{p_T}+\partial_a\lambda\right|=\frac{1}{T}\left|\frac{f_a(T,\xi)}{f(T,\xi)}-\xi\partial_a\eta \right|\leq \frac{c_3}{T}$$
for some positive constant $c_3.$
Direct calculation gives 
$\partial_a\lambda=\frac{b}{\sigma^2}(\frac{a}{\alpha}-1).$

\subsection{Sensitivity of $\sigma$}

We study the large-time asymptotic behavior of the  sensitivity of $p_T$ with respect to the parameter $\sigma.$
In this section, assume that $h$ is twice continuously differentiable and that $h,$ $h',$ $h''$  have polynomial growth.
From Eq.\eqref{eqn:CIR_decompo} and Eq.\eqref{eqn:CIR_reminder},
it follows that
$$\frac{\partial_\sigma p_T}{p_T}=\frac{f_\sigma(T,\xi)}{f(T,\xi)}-\xi\partial_\sigma\eta-T\partial_\sigma\lambda \,.$$
It can be easily shown that $f$ is continuously differentiable in
$\sigma$ by considering the density function of $X_t$ or by using \cite[Theorem 4.13]{park2018sensitivity}.
We focus on the large-time behavior of $f_\sigma(T,\xi).$ 
Differentiate Eq.\eqref{eqn:CIR_FK} in $\sigma,$ then
\begin{equation}\label{eqn:f_sigma_CIR_PDE_FK}
-f_{\sigma t}+\frac{1}{2}\sigma^2xf_{\sigma xx}+(b-\alpha x)f_{\sigma x}+\sigma x f_{xx}-\frac{2q\sigma}{\alpha}xf_x=0\,,\;f_\sigma(0,x)=h(x)xe^{\eta x}\partial_\sigma\eta\,.
\end{equation}
By the same method used in the proof of Lemma \ref{lemma:CIR_x_remainder}, we have 
\begin{equation}\label{eqn:f_sigma_CIR_PDE}
\begin{aligned}
f_\sigma(t,x)&=\mathbb{E}_x^{\hat{\mathbb{P}}}\Big[\sigma\int_0^t
X_sf_{xx}(t-s,X_s)\,ds-\frac{2q\sigma}{\alpha}\int_0^tX_sf_x(t-s,X_s)\,ds+h(X_t)X_te^{\eta X_t}\partial_\sigma\eta\Big] \\
&=\sigma\int_0^t\mathbb{E}_x^{\hat{\mathbb{P}}}[
X_sf_{xx}(t-s,X_s)]\,ds-\frac{2q\sigma}{\alpha}\int_0^t
\mathbb{E}_x^{\hat{\mathbb{P}}}[X_sf_x(t-s,X_s)]\,ds+\partial_\sigma\eta
\mathbb{E}_x^{\hat{\mathbb{P}}}[h(X_t)X_te^{\eta X_t}] \,.
\end{aligned}
\end{equation}

We claim that $|f_\sigma(t,x)|$ is bounded in $t$ on $[0,\infty)$ by estimating the three terms on the right-hand side.
From Eq.\eqref{eqn:CIR_gamma_h} and Eq.\eqref{eqn:CIR_f_xx},
for $\eta<\beta<\alpha/\sigma^2,$ 
$$|f_{xx}(t-s,x)|\leq c_0\Big(\frac{1}{1-\beta \sigma^2/2\alpha}\Big)^{1+2b/\sigma^2}e^{\frac{\beta }{1-\beta\sigma^2/2\alpha}x}e^{-2\alpha (t-s)}\,.$$
For $\delta$ with 
$\frac{\beta }{1-\beta\sigma^2/2\alpha}<\delta<2\alpha/\sigma^2,$
there is a positive constant $c_1$ such that
$$|xf_{xx}(t-s,x)|\leq c_1e^{\delta x}e^{-2\alpha (t-s)}\,.$$
By the same analysis used in Eq.\eqref{eqn:CIR_a_xf_x} and Eq.\eqref{eqn:CIR_a_Xf}, it follows that
$$\int_0^t\mathbb{E}_x^{\hat{\mathbb{P}}}[
X_sf_{xx}(t-s,X_s)]\,ds$$
is bounded in $t$ on $[0,\infty).$
By Eq.\eqref{eqn:CIR_a_Xf}, the integral 
$$\int_0^t
\mathbb{E}_x^{\hat{\mathbb{P}}}[X_sf_x(t-s,X_s)]\,ds$$
is bounded in $t$ on $[0,\infty).$
The expectation 
$\mathbb{E}_x^{\hat{\mathbb{P}}}[h(X_t)X_te^{\eta X_t}]$ is also bounded in $t$ on $[0,\infty).$
Therefore, we have 
$|f_\sigma(t,x)|\leq c_2$ for some positive constant $c_2.$ 
In conclusion,
$$\left|\frac{1}{T}\frac{\partial_\sigma p_T}{p_T}+\partial_\sigma\lambda\right|=\frac{1}{T}\left|\frac{f_\sigma(T,\xi)}{f(T,\xi)}-\xi\partial_\sigma\eta\right|\leq \frac{c_3}{T} $$
for some positive constant $c_3.$ Direct calculation gives $\partial_\sigma\lambda=2b(\frac{q}{\alpha\sigma}-\frac{\alpha-a}{\sigma^3}).$

\section{3/2 model}
\label{sec:3/2}

The $3/2$ model is a process given as a solution of
$$dX_t=(b-aX_t)X_t\,dt+\sigma {X_t}^{3/2}\,dB_t\,,\;X_0=\xi$$
for $b,\sigma,\xi>0$ and $a>-\sigma^2/2.$ 
For $q>0$ and a nonzero, nonnegative  Borel function $h$ with linear growth at most, we define
$$p_T=\mathbb{E}^\mathbb{P}[e^{-q\int_0^TX_s\,ds} h(X_T)]\,.$$
It can be shown that the quadruple of functions
$$((b-ax)x,\sigma x^{3/2},qx,h)\,,\;x>0$$
satisfies Assumptions \ref{assume:a} -- \ref{assume:h}.
The recurrent eigenpair is 
$$(\lambda,\phi(x)):=(b\eta, x^{-\eta})$$
where  
$$\eta:=\frac{\sqrt{(a+\sigma^2/2)^2+2q\sigma^2}-(a+\sigma^2/2)}{\sigma^2}\,.$$ 
Under the consistent family of recurrent eigen-measures $(\hat{\mathbb{P}}_t)_{t\ge0},$ 
the process
$$\hat{B}_t=\sigma\eta\int_0^t\sqrt{X_s}\,ds+B_t\,,\;t\ge0$$
is a Brownian motion, and $X$ follows
\begin{equation}
\begin{aligned}
dX_t=(b-\alpha X_t)X_t\,dt+\sigma {X_t}^{3/2}\,d\hat{B}_t
\end{aligned}
\end{equation}
where
$\alpha:=a+\sigma^2\eta.$

Using this consistent family of recurrent eigen-measures, we have the Hansen--Scheinkman decomposition
\begin{equation} \label{eqn:3/2_HS_decompo}
\begin{aligned}
p_T&=\mathbb{E}_\xi^\mathbb{P}[e^{-q\int_0^TX_s\,ds} h(X_T)]
=\mathbb{E}_\xi^{\hat{\mathbb{P}}}[h(X_T)X_T^{\eta}] \,\xi^{-\eta}\,e^{-\lambda T}\,.
\end{aligned}
\end{equation}
For $t\in [0,\infty)$ and $x>0,$ we define
\begin{equation}\label{eqn:3/2_reminder}
f(t,x)=\mathbb{E}_x^{\hat{\mathbb{P}}}[h(X_t)X_t^{\eta}] 
\end{equation} 
so that \begin{equation}
\label{eqn:3/2_decom}
p_T=f(T,\xi)\phi(\xi)e^{-\lambda T}=f(T,\xi)\xi^{-\eta}e^{-\lambda T}.
\end{equation} 
For nonzero, nonnegative Borel function $h$ with linear growth at most, it is easy to show that $f(T,\xi) $ converges to a positive constant 
as $T\to\infty$
by Lemma \ref{lem:3/2_MMG}.
We will investigate the large-time behavior of the function $f(T,\xi)$ by expressing this function as a 
solution of a second-order differential equation.
By  the Feynman--Kac formula,   $f$ satisfies
\begin{equation}\label{eqn:3/2_FK}
-f_t+\frac{1}{2}\sigma^2x^3f_{xx}+(b-\alpha x)xf_x=0\,,\;f(0,x)=h(x)x^{\eta}\,.
\end{equation}

\begin{lemma}\label{lem:3/2_MMG} Let $\hat{B}=(\hat{B}_t)_{t\ge0}$ be a Brownian motion on the consistent probability space $(\Omega,\mathcal{F},(\mathcal{F}_t)_{t\ge0},(\hat{\mathbb{P}}_t)_{t\ge0}).$
	Suppose that $X$ is a solution of
	$$dX_t=(b-\alpha X_t)X_t\,dt+\sigma {X_t}^{3/2}\,d\hat{B}_t\,,\;X_0=\xi$$
	where $\alpha,b,\sigma,\xi>0.$
Then, for $A<\frac{2\alpha}{\sigma^2}+2,$ we have
\begin{equation}\label{eqn:3/2_moment}
\begin{aligned}
\mathbb{E}_\xi(X_T^A)
&=\frac{\Gamma(\frac{2\alpha}{\sigma^2}+2-A)}{\Gamma(\frac{2\alpha}{\sigma^2}+2)}\Big(\frac{2b}{\sigma^2}\frac{1}{1-e^{-bT}}\Big)^{A}
F\Big(A,\frac{2\alpha}{\sigma^2}+2,-\frac{2b}{\sigma^2}\frac{1}{(e^{bT}-1)\xi}\Big)\,,
\end{aligned}
\end{equation} 
and the expectation
$\mathbb{E}_\xi(X_T^A)$
converges to $$\frac{\Gamma(\frac{2\alpha}{\sigma^2}+2-A)}{\Gamma(\frac{2\alpha}{\sigma^2}+2)}\Big(\frac{2b}{\sigma^2}\Big)^{A}$$ as $T\to\infty$
where $F$ is the confluent hypergeometric function.
Moreover,
if $0<A<\frac{2\alpha}{\sigma^2}+2,$ then 
the map $H:[0,\infty)\times(0,\infty)\to \mathbb{R}$ defined by $H(t,x)=\mathbb{E}_x(X_t^A)$ is uniformly bounded on the domain $[0,\infty)\times(0,\infty).$
\end{lemma}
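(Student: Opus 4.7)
The plan is to reduce the $3/2$ process to a Cox--Ingersoll--Ross process via the reciprocal transformation $Y_t := 1/X_t$, and then invoke the classical non-central chi-squared law for CIR. A direct Ito computation gives
\begin{equation}
dY_t = (\alpha + \sigma^2 - b Y_t)\,dt - \sigma\sqrt{Y_t}\,d\hat{B}_t\,,\quad Y_0 = 1/\xi\,,
\end{equation}
so $Y$ is a CIR process with effective parameters $\tilde{b} = \alpha + \sigma^2$, $\tilde{a} = b$, $\tilde{\sigma} = \sigma$ (the sign in front of the noise is immaterial for the law of $Y_T$). Here $2\tilde{b}/\tilde{\sigma}^2 = 2\alpha/\sigma^2 + 2 > 1$, so the Feller condition holds and $Y$ never reaches $0$; the reciprocal is therefore well defined pathwise and $\mathbb{E}_\xi[X_T^A] = \mathbb{E}_{1/\xi}[Y_T^{-A}]$.

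Now I use the well-known fact that, conditional on $Y_0 = y_0$, the rescaled variable $2c_T Y_T$ has the non-central chi-squared law $\chi^2_d(\lambda_T)$ with $d = 4\tilde{b}/\tilde{\sigma}^2 = 4\alpha/\sigma^2 + 4$ degrees of freedom, $c_T = 2b/(\sigma^2(1-e^{-bT}))$, and non-centrality $\lambda_T = 2c_T y_0 e^{-bT}$. Specializing at $y_0 = 1/\xi$ gives $\lambda_T/2 = 2b/(\sigma^2(e^{bT}-1)\xi)$. The moments of the non-central chi-squared are themselves classical: for $d/2+s>0$,
\begin{equation}
\mathbb{E}[(\chi^2_d(\lambda))^s] \;=\; 2^s\,\frac{\Gamma(d/2+s)}{\Gamma(d/2)}\;{}_1F_1(-s;\,d/2;\,-\lambda/2)\,.
\end{equation}
Writing $\mathbb{E}_\xi[X_T^A] = (2c_T)^A\,\mathbb{E}[(2c_T Y_T)^{-A}]$ and applying this identity with $s=-A$ (whose integrability constraint $d/2-A>0$ is exactly $A<2\alpha/\sigma^2+2$) reproduces \eqref{eqn:3/2_moment} after a short simplification, identifying $F$ with the confluent hypergeometric function ${}_1F_1$.

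The $T\to\infty$ limit is then immediate: $c_T\to 2b/\sigma^2$, $\lambda_T\to 0$, and ${}_1F_1(A;\,\cdot\,;0)=1$. For the uniform boundedness of $H(t,x) = \mathbb{E}_x[X_t^A]$, I would combine the trivial value ${}_1F_1(A;\gamma;0) = 1$ (controlling the regime $x\to\infty$, where $H$ approaches the stationary moment of Step~4) with the classical asymptotic ${}_1F_1(A;\gamma;z)\sim \Gamma(\gamma)/\Gamma(\gamma-A)\,(-z)^{-A}$ as $z\to-\infty$. Inserting the latter into the explicit formula shows that the singular prefactor $(1-e^{-bt})^{-A}$ is exactly compensated as $\xi\to 0^+$, producing a bound of order $\xi^A e^{Atb}$ in that regime. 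The main subtlety is that boundedness as literally stated fails at $t=0$, since $H(0,x)=x^A$ blows up with $x$; the usable statement is uniform boundedness on $[t_0,\infty)\times(0,\infty)$ for any fixed $t_0>0$, which follows from the two ${}_1F_1$ asymptotics above together with the continuity of $H$ on this set.
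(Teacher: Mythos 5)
Your derivation of \eqref{eqn:3/2_moment} and the $T\to\infty$ limit follows essentially the same route as the paper: the reciprocal transformation $Y=1/X$ to a CIR process (with the Feller condition holding because $2(\alpha+\sigma^2)/\sigma^2>1$), followed by the explicit inverse-moment formula. You reconstruct that formula from the non-central chi-squared law of $2c_TY_T$ and the ${}_1F_1$ moment identity; the paper simply cites Hurd--Kuznetsov and Dereich--Neuenkirch--Szpruch. These are two presentations of the same fact.

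Your objection to the uniform-boundedness clause is correct, and you have identified a genuine error in the lemma as stated: $H(0,x)=\mathbb{E}_x(X_0^A)=x^A$, which is unbounded for $A>0$, so $H$ cannot be uniformly bounded on $[0,\infty)\times(0,\infty)$. The paper's proof of this clause only appeals to the boundedness of $z\mapsto F(A,\gamma,z)$ on the negative ray (Abramowitz--Stegun 13.1.5 gives $F(A,\gamma,z)\sim\frac{\Gamma(\gamma)}{\Gamma(\gamma-A)}(-z)^{-A}\to 0$ as $z\to-\infty$), but this alone does not bound $H$: the prefactor $\big(\tfrac{2b}{\sigma^2(1-e^{-bt})}\big)^A$ diverges as $t\to 0^+$ and is cancelled by $F$ only in the regime where the argument of $F$ also goes to $-\infty$. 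Along a path with $t\to 0^+$ and $x\to\infty$ keeping $(e^{bt}-1)x$ fixed, the argument of $F$ is fixed and $H\to\infty$, so the claim genuinely fails. Your repaired version (uniform boundedness on $[t_0,\infty)\times(0,\infty)$ for fixed $t_0>0$) is correct and follows exactly as you indicate. Be aware, however, that the lemma is used downstream with $t$ arbitrarily small --- for instance, the pathwise bound $|f_x(t,x)|\le c_1e^{-bt}x^{-2}$ with $c_1$ independent of $x$ in \eqref{eqn:3/2_b_f_x} fails at small $t$ for precisely the reason you found --- so the paper itself needs the slightly different repair $\sup_{t\ge 0}H(t,x)\le C(1+x^A)$, which also follows from the explicit formula and restores the downstream estimates. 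That is a correction to the paper, not a deficiency in your proof.
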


\begin{proof}
	Define a process $Y$ as $Y=1/X,$ then 
$$dY_t=(\theta-bY_t)\,dt-\sigma\sqrt{Y_t}\,d\hat{B}_t\,,\;Y_0=\zeta\,,$$
where $\theta:=\alpha+\sigma^2$ and $\zeta:=1/\xi.$ Since $\theta>\sigma^2/2,$ the Feller condition is satisfied.
From \cite[Theorem 3.1]{hurd2008explicit}
or \cite[Section 3]{dereich2011euler},
we have for
$A<\frac{2\theta}{\sigma^2}=\frac{2\alpha}{\sigma^2}+2,$
\begin{equation}
\begin{aligned}
\mathbb{E}(X_t^A)=\mathbb{E}(Y_t^{-A})
&=\frac{\Gamma(\frac{2\theta}{\sigma^2}-A)}{\Gamma(\frac{2\theta}{\sigma^2})}\Big(\frac{2b}{\sigma^2}\frac{1}{1-e^{-bt}}\Big)^{A}
F\Big(A,\frac{2\theta}{\sigma^2},-\frac{2b}{\sigma^2}\frac{\zeta}{e^{bt}-1}\Big)\,.
\end{aligned}
\end{equation} 
Since the  confluent hypergeometric function $F$ satisfies $$\lim_{t\to\infty}F(A,\frac{2\theta}{\sigma^2},-\frac{2b}{\sigma^2}\frac{\zeta}{e^{bt}-1})=1\,,$$ we obtain the desired result.
Moreover, if $A$ and $\frac{2\alpha}{\sigma^2}+2$ are positive, 
the function 
$z\mapsto F(A,\frac{2\alpha}{\sigma^2}+2,z)$ is uniformly bounded in $z$ on $(0,\infty),$ which is the direct result from \cite[13.1.5 on page 504]{abramowitz1965handbook}. This means that the map $H$ is uniformly bounded on the domain $[0,\infty)\times(0,\infty).$ 
\end{proof}

\subsection{First-order sensitivity of $\xi$}
\label{app:3/2_delta}

We estimate the large-time asymptotic behavior of the first-order sensitivity of $p_T$ with respect to the initial value $\xi.$
In this section, 
assume that $h$ is continuously differentiable and that $h$ and $h'$ have linear growth at most and are nonnegative (we are mainly interested in the case $h=1$).
From Eq.\eqref{eqn:3/2_HS_decompo}, it follows that
\begin{equation}  
\begin{aligned}
\frac{\partial_\xi  p_T}{p_T} 
=\frac{  f_x(T,\xi)}{f(T,\xi)} -\frac{\,\eta\,}{\xi} \,.
\end{aligned}
\end{equation}
We focus on the term $f_x(T,\xi).$
Since $f$ is $C^{1,2}$   and every coefficient is continuously differentiable in $x$ in Eq.\eqref{eqn:3/2_FK}, the function $f$ is thrice continuously differentiable in $x.$
This gives 
\begin{equation}
\label{eqn:app_f_x_pde}
-f_{xt}+\frac{1}{2}\sigma^2x^3f_{xxx}+(b-(\alpha-\frac{3}{2}\sigma^2)x)xf_{xx}+(b-2\alpha x)f_x=0\,,\;f_x(0,x)=(xh'(x)+\eta h(x))x^{\eta-1}\,.
\end{equation}
It is easy to show that
the quadruple of functions
\begin{equation}
\label{eqn:3/2_quad_initial_sen}
((b-(\alpha-\frac{3}{2}\sigma^2)x)x,\sigma {x}^{3/2},-b+2\alpha x, (xh'(x)+\eta h(x))x^{\eta-1})\,,\;x>0
\end{equation} 
satisfies Assumptions \ref{assume:a} -- \ref{assume:h}.
The corresponding process 
$\hat{X}$ is the solution of 
$$d\hat{X}_t=(b-(\alpha-\frac{3}{2}\sigma^2)\hat{X}_t)\hat{X}_t\,dt+\sigma \hat{X}_t^{3/2}\,d\hat{B}_t \,.$$

We show that the remainder function $f$ satisfies
\begin{equation}
\label{eqn:3/2_delta_decom}
f_x(t,x)=\mathbb{E}^{\hat{\mathbb{P}}}[(\hat{X}_th'(\hat{X}_t)+\eta h(\hat{X}_t))\hat{X}_t^{\eta-1}e^{-2\alpha\int_0^t\hat{X}_s\,ds}|\hat{X}_0=x]  e^{bt}\,.
\end{equation}
To achieve this equality, we use the  Feynman--Kac formula in Remark \ref{remark:FK}. However, Remark \ref{remark:FK} cannot be applied directly because the
volatility function $\sigma x^{3/2}$ of $\hat{X}$ does not have linear growth. Instead, we define $Y:=1/\hat{X}$ and  $g(t,y):=f_x(t,1/y).$ Then, $Y$ is a CIR process  
$$dY_t=((\alpha-\sigma^2/2)-bY_t)\,dt-\sigma\sqrt{Y_t}\,d\hat{B}_t\,,$$
and Eq.\eqref{eqn:app_f_x_pde} becomes
$$-g_t+\frac{1}{2}\sigma^2yg_{yy}+((\alpha-\sigma^2/2)-by)g_y+(b-2\alpha /y)g=0\,,\;g(0,y)=(y^{-1}h'(y^{-1})+\eta h(y^{-1}))y^{1-\eta}\,.$$
Now, the quadruple of functions
\begin{equation}\label{eqn:app_reciprocal_3/2}
\Big((\alpha-\sigma^2/2)-by,-\sigma\sqrt{y},-b+2\alpha /y,(y^{-1}h'(y^{-1})+\eta h(y^{-1}))y^{1-\eta}\Big)\,,\;y>0
\end{equation} 
satisfies all the conditions in Remark \ref{remark:FK}. We know that the function $\max_{0\le t\le T}|f(t,1/y)|$ (thus, $\max_{0\le t\le T}|g(t,y)|=\max_{0\le t\le T}|f_x(t,1/y)|$ by the Schauder estimation) is bounded in $y$ by Lemma \ref{lem:3/2_MMG}.
It follows that
\begin{equation}
\begin{aligned}
g(t,y)&=\mathbb{E}^{\hat{\mathbb{P}}}[(Y_t^{-1}h'(Y_t^{-1})+\eta h(Y_t^{-1}))Y_t^{1-\eta}e^{-2\alpha\int_0^tY_t^{-1}\,ds}|Y_0=y]  e^{bt}\\
&=\mathbb{E}^{\hat{\mathbb{P}}}[(\hat{X}_th'(\hat{X}_t)+\eta h(\hat{X}_t))\hat{X}_t^{\eta-1}e^{-2\alpha\int_0^t\hat{X}_s\,ds}|\hat{X}_0=1/y]  e^{bt}\,.
\end{aligned}
\end{equation}
Thus, 
$$f_x(t,x)=g(t,1/x)=\mathbb{E}^{\hat{\mathbb{P}}}[(\hat{X}_th'(\hat{X}_t)+\eta h(\hat{X}_t))\hat{X}_t^{\eta-1}e^{-2\alpha\int_0^t\hat{X}_s\,ds}|\hat{X}_0=x]  e^{bt}\,,$$
which is the desired result.

To analyze 
$$f_x(t,x)e^{-bt}=\mathbb{E}^{\hat{\mathbb{P}}}[(\hat{X}_th'(\hat{X}_t)+\eta h(\hat{X}_t))\hat{X}_t^{\eta-1}e^{-2\alpha\int_0^t\hat{X}_s\,ds}|\hat{X}_0=x]\,,$$
we apply the Hansen--Scheinkman decomposition.
It can be shown that the pair
$(2b,x^{-2})$
is the recurrent eigenpair by considering the generator of $\hat{X}$ with killing rate $2\alpha \hat{X}_t,$ which is given as
$$\mathcal{L}\psi(x):=\frac{1}{2}\sigma^2x^3\psi''(x)+(b-(\alpha-\frac{3}{2}\sigma^2)x)x\psi'(x)-2\alpha x\psi(x)\,. $$
Consider a consistent family 
$(\tilde{\mathbb{P}}_t)_{t\ge0}$  of probability measures 
where each $\tilde{\mathbb{P}}_t$ is defined on   $\mathcal{F}_t$ as
$$\frac{d\tilde{\mathbb{P}}_t}{d\hat{\mathbb{P}}_t}
=e^{2bt-2\alpha\int_{0}^{t}\hat{X}_s\,ds}\,\frac{\hat{X}_0^2}{\hat{X}_t^2}\,.$$
Then,
$$\tilde{B}_t:=2\sigma\int_0^t\hat{X}_s^{1/2}\,ds+\hat{B}_t\,,\;t\ge0$$
is a $(\tilde{\mathbb{P}}_t)_{t\ge0}$-Brownian motion, and $\hat{X}$ follows
$$d\hat{X}_t=(b-(\alpha+\frac{1}{2}\sigma^2)\hat{X}_t)\hat{X}_t\,dt+\sigma \hat{X}_t^{3/2}\,d\tilde{B}_t\,.$$
Using this consistent family of probability measures, we have
\begin{equation}\label{eqn:3/2_reminder_f_y}
\begin{aligned}
f_x(t,x)
&=\mathbb{E}^{\hat{\mathbb{P}}}[(\hat{X}_th'(\hat{X}_t)+\eta h(\hat{X}_t))\hat{X}_t^{\eta-1}e^{-2\alpha\int_0^t\hat{X}_s\,ds}|\hat{X}_0=x]  e^{bt}\\
&=\mathbb{E}^{\tilde{\mathbb{P}}}[(\hat{X}_th'(\hat{X}_t)+\eta h(\hat{X}_t))\hat{X}_t^{\eta+1}|\hat{X}_0=x]  e^{-bt}x^{-2}\,.
\end{aligned}
\end{equation}

We can obtain the large-time behavior of $\partial_\xi p_T$ by 
using Eq.\eqref{eqn:3/2_reminder_f_y}.
Since $h$ and $h'$ have linear growth at most, there is a positive constant $c_0$ such that
$|\hat{X}_th'(\hat{X}_t)+\eta h(\hat{X}_t)|\hat{X}_t^{\eta+1}\leq c_0\hat{X}_t^{\eta+3}.$
By Lemma \ref{lem:3/2_MMG},
the expectation
\begin{equation}
\label{eqn:3/2_remainder}
\mathbb{E}^{\tilde{\mathbb{P}}}\big[|\hat{X}_th'(\hat{X}_t)+\eta h(\hat{X}_t)|\hat{X}_t^{\eta+1}\big|\hat{X}_0=x\big]
\end{equation}
is uniformly bounded in $(t,x)$ since the constant $A$ in the lemma satisfies
$A=\eta+3<\frac{2\alpha}{\sigma^2}+3.$
Therefore,
\begin{equation} \label{eqn:3/2_delta_decay}
\begin{aligned}
\left|\frac{\partial_\xi p_T}{p_T}+\frac{\,\eta\,}{\xi}\right|
=\left|\frac{  f_\xi(T,\xi)}{f(T,\xi)}  \right|\leq c_1  e^{-bT}
\end{aligned}
\end{equation}
for some positive constant $c_1.$ This gives the desired result.

\subsection{Second-order sensitivity of $\xi$}
\label{app:3/2_gamma}

We investigate the large-time asymptotic behavior of the second-order sensitivity  with respect to the initial value $\xi.$
In this section, assume that
$h$ is twice continuously differentiable, $h$ and  $h'$ have linear growth at most and $h''$ is bounded, moreover $h,$ $h',$ $h''$ are nonnegative (we are mainly interested in the case $h=1$).
From Eq.\eqref{eqn:3/2_HS_decompo}, 
\begin{equation} \label{eqn:3/2_gamma}
\begin{aligned}
\frac{{\partial_{\xi\xi}} p_T}{p_T}
=\frac{  f_{xx}(T,\xi)}{f(T,\xi)} -\frac{2\eta}{\xi} \frac{f_x(T,\xi)}{f(T,\xi)} +\frac{\eta(\eta+1)}{\xi^2} \,.
\end{aligned}
\end{equation}
Since we already estimated the large-time asymptotic behavior of 
$f_x(T,\xi),$ we focus on the second-order derivative $f_{xx}(T,\xi).$
Define
$$\hat{f}(t,x):=\mathbb{E}^{\tilde{\mathbb{P}}}[(\hat{X}_th'(\hat{X}_t)+\eta h(\hat{X}_t))\hat{X}_t^{\eta+1}|\hat{X}_0=x] $$
so that Eq.\eqref{eqn:3/2_reminder_f_y} gives $f_x(t,x)=\hat{f}(t,x)e^{-bt}x^{-2}.$ Thus,
\begin{equation}
\label{eqn:3/2_f_yy}
f_{xx}(t,x)=\hat{f}_x(t,x)e^{-bt}x^{-2}-2\hat{f}(t,x)e^{-bt}x^{-3}\,.
\end{equation} 

We need to estimate the large-time behavior of $\hat{f}_x(t,x).$
The Feynman--Kac formula states that  
$$-\hat{f}_t+\frac{1}{2}\sigma^2x^3\hat{f}_{xx}+(b-(\alpha+\frac{1}{2}\sigma^2)x)x\hat{f}_x=0\,,\;\hat{f}(0,x)=(h'(x)x+\eta h(x))x^{\eta+1}\,.$$
Differentiate this equation in $x,$ then 
\begin{equation}
\begin{aligned}
&-\hat{f}_{xt}+\frac{1}{2}\sigma^2x^3\hat{f}_{xxx}+(b-(\alpha-\sigma^2)x)x\hat{f}_{xx}+(b-(2\alpha+\sigma^2)x)\hat{f}_{x}=0\,,\\
&\hat{f}_x(0,x)=(h''(x)x^2+2(\eta+1) x h'(x)+\eta(\eta+1)h(x))x^{\eta}\,.
\end{aligned}
\end{equation}
It is easy to show that
the quadruple of functions $$\Big((b-(\alpha-\sigma^2)x)x,\sigma {x}^{3/2},-b+(2\alpha+\sigma^2) x, (h''(x)x^2+2(\eta+1) x h'(x)+\eta(\eta+1)h(x))x^{\eta}\Big),\,x>0$$ satisfies Assumptions \ref{assume:a} -- \ref{assume:h}.
The corresponding process 
$\tilde{X}$ is the solution of 
$$d\tilde{X}_t=(b-(\alpha-\sigma^2)\tilde{X}_t)\tilde{X}_t\,dt+\sigma \tilde{X}_t^{3/2}\,d\tilde{B}_t \,.$$
We observe that  
$$\hat{f}_x(t,x)=\mathbb{E}^{\tilde{\mathbb{P}}}[(h''(\tilde{X}_t)\tilde{X}_t^2
+2(\eta+1) \tilde{X}_th'(\tilde{X}_t)+\eta(\eta+1)h(\tilde{X}_t))\tilde{X}_t^{\eta}e^{-(2\alpha+\sigma^2)\int_0^t\tilde{X}_s\,ds}|\tilde{X}_0=x]e^{bt}\,.$$
This can be obtained by the same argument used in the derivation of Eq.\eqref{eqn:3/2_delta_decom} by
considering $Y=1/X.$

To analyze the expectation 
above,
we apply the Hansen--Scheinkman decomposition.
It can be shown that the pair
$(2b,x^{-2})$
is the recurrent eigenpair
by considering the generator of $\tilde{X}$ with killing rate $(2\alpha+\sigma^2)\tilde{X}_t,$ which is given as
$$\mathcal{L}\psi(x):=\frac{1}{2}\sigma^2x^3\psi''(x)+(b-(\alpha-\sigma^2)x)x\psi'(x)-(2\alpha+\sigma^2)x\psi(x)\,. $$
Consider a consistent family $(\bar{\mathbb{P}}_t)_{t\ge0}$ of probability measures
where each $\bar{\mathbb{P}}_t$ is defined on $\mathcal{F}_t$
as
$$\frac{d\bar{\mathbb{P}}_t}{d\tilde{\mathbb{P}}_t}
=e^{2bt-(2\alpha+\sigma^2)\int_{0}^{t}\tilde{X}_s\,ds}\,\frac{\tilde{X}_0^2}{\tilde{X}_t^2}\,.$$
Then,
$$\bar{B}_t:=2\sigma\int_0^t\tilde{X}_s^{1/2}\,ds+\tilde{B}_t\,,\;t\ge0$$
is a $(\bar{\mathbb{P}}_t)_{t\ge0}$-Brownian motion, and $\tilde{X}$ follows
$$d\tilde{X}_t=(b-(\alpha+\sigma^2)\tilde{X}_t)\tilde{X}_t\,dt+\sigma \tilde{X}_t^{3/2}\,d\bar{B}_t\,.$$
Using this consistent family of probability measures $(\bar{\mathbb{P}}_t)_{t\ge0},$ we have
\begin{equation}\label{eqn:3/2_reminder_f_yy}
\begin{aligned}
\hat{f}_x(t,x)&=\mathbb{E}^{\tilde{\mathbb{P}}}[(h''(\tilde{X}_t)\tilde{X}_t^2
+2(\eta+1) \tilde{X}_th'(\tilde{X}_t)+\eta(\eta+1)h(\tilde{X}_t))\tilde{X}_t^{\eta}e^{-(2\alpha+\sigma^2)\int_0^t\tilde{X}_s\,ds}|\tilde{X}_0=x]e^{bt}\\
&=\mathbb{E}^{\bar{\mathbb{P}}}[(h''(\tilde{X}_t)\tilde{X}_t^2
+2(\eta+1) \tilde{X}_th'(\tilde{X}_t)+\eta(\eta+1)h(\tilde{X}_t))\tilde{X}_t^{\eta+2}|\tilde{X}_0=x]e^{-bt}x^{-2}\,.
\end{aligned}
\end{equation}

Since  $h,$ $h'$ have linear growth at most and $h''$ is bounded, 
by Lemma \ref{lem:3/2_MMG},
the expectation
$$\mathbb{E}^{\bar{\mathbb{P}}}\big[|h''(\tilde{X}_t)\tilde{X}_t^2
+2(\eta+1) \tilde{X}_th'(\tilde{X}_t)+\eta(\eta+1)h(\tilde{X}_t)|\tilde{X}_t^{\eta+2}\big|\tilde{X}_0=x\big]$$
converges to a positive constant as $t\to\infty$
since the constant $A$ in the lemma satisfies
$A=\eta+4<\frac{2\alpha}{\sigma^2}+4.$
Thus,
\begin{equation}
\begin{aligned}
|f_{xx}(t,x)|
&=|\hat{f}_x(t,x)e^{-bt}x^{-2}-2g(t,x)e^{-bt}x^{-3}|\\
&\leq  |\hat{f}_x(t,x)|e^{-bt}x^{-2}+2|g(t,x)|e^{-bt}x^{-3}\\
&\le  c_1e^{-2bt}x^{-4}+c_1e^{-bt}x^{-3}\\
&\le c_2 e^{-bt}
 \end{aligned}
\end{equation}
 for some positive constants $c_1$ and $c_2,$ which are independent of $t.$
From  Eq.\eqref{eqn:3/2_delta_decay},
we conclude that
\begin{equation} 
\begin{aligned}
\left|\frac{{\partial_{\xi\xi}} p_T}{p_T}
-\frac{\eta(\eta+1)}{\xi^2}\right|
=\left|\frac{  f_{xx}(T,\xi)}{f(T,\xi)} -\frac{2\eta}{\xi} \frac{f_x(T,\xi)}{f(T,\xi)} \right|\leq \left|\frac{  f_{xx}(T,\xi)}{f(T,\xi)}\right| +\frac{2\eta}{\xi}\left| \frac{f_x(T,\xi)}{f(T,\xi)} \right|\leq c_3e^{-bT}
\end{aligned}
\end{equation}
for some positive constant $c_3.$ This gives the convergence rate of the second-order sensitivity.
We can also provide  a higher-order convergence rate of the second-order sensitivity as follows.
From Eq.\eqref{eqn:3/2_f_yy} and $p_T=f(T,\xi)e^{\lambda T}\phi(\xi)$ presented in
Eq.\eqref{eqn:3/2_decom}, it follows that
\begin{equation}
\begin{aligned}
\frac{\partial_{\xi\xi}p_T}{p_T}-\Big(\frac{\partial_\xi p_T}{p_T}\Big)^2+\frac{2}{\xi}\Big(\frac{\partial_\xi p_T}{p_T}-\frac{\phi'(\xi)}{\phi(\xi)}\Big)-\frac{\phi''(\xi)}{\phi(\xi)}+\Big(\frac{\phi'(\xi)}{\phi(\xi)}\Big)^2
&=\frac{g_x(T,\xi)}{g(T,\xi)}\frac{f_x(T,\xi)}{f(T,\xi)}-\Big(\frac{f_x(T,\xi)}{f(T,\xi)}\Big)^2\,.
\end{aligned}
\end{equation}
Using $\phi(\xi)=\xi^{-\eta},$ we have
\begin{equation}
\begin{aligned}
\left|\frac{\partial_{\xi\xi}p_T}{p_T}-\Big(\frac{\partial_\xi p_T}{p_T}\Big)^2+\frac{2}{\xi}\frac{\partial_\xi p_T}{p_T}+\frac{\eta}{\xi^2}\right|
&\leq\left|\frac{g_x(T,\xi)}{g(T,\xi)}\frac{f_x(T,\xi)}{f(T,\xi)}\right|+\Big(\frac{f_x(T,\xi)}{f(T,\xi)}\Big)^2\leq c_4e^{-2bT}
\end{aligned}
\end{equation}
for some positive constant $c_4.$
For the last inequality, we used Eq.\eqref{eqn:3/2_delta_decay} and Eq.\eqref{eqn:3/2_reminder_f_yy}.

\subsection{Sensitivity of $b$}

We investigate the large-time asymptotic behavior of the  sensitivity of $p_T$ with respect to the parameter $b.$
In this section, assume that $h$ is continuously differentiable and that $h,$ $h'$ have linear growth at most and are nonnegative.
Recall the   Hansen--Scheinkman decomposition 
$p_T 
=f(T,\xi)\xi^{-\eta}\,e^{-\lambda T}$
and the remainder function  $f(t,x)=\mathbb{E}_x^{\hat{\mathbb{P}}}[h(X_t)X_t^{\eta}] $
in Eq.\eqref{eqn:3/2_HS_decompo} and Eq.\eqref{eqn:3/2_reminder}.
Since $\eta$ is independent of $b$ and
$\lambda=b\eta,$ we have 
$$\frac{\partial_b p_T}{p_T}=\frac{f_b(T,\xi)}{f(T,\xi)}-\eta T\,.$$
It can be easily shown that $f$ is continuously differentiable in $b$  by considering the density function of $X_t$ or by using \cite[Theorem 4.8]{park2018sensitivity}.
We focus on  the  large-time behavior of $f_b(T,\xi).$ 
Differentiate Eq.\eqref{eqn:3/2_FK} in $b,$ then
\begin{equation}
\label{eqn:f_b_PDE}-f_{bt}+\frac{1}{2}\sigma^2x^3f_{bxx}+(b-\alpha x)xf_{bx}+xf_x=0\,,\;f_b(0,x)=0\,.
\end{equation}
Recall that the quadruple of functions in Eq.\eqref{eqn:3/2_quad_initial_sen}
satisfies Assumptions \ref{assume:a} -- \ref{assume:h}.

From this PDE, we observe that the remainder function $f$ satisfies
$$f_b(t,x)
=\mathbb{E}_x^{\hat{\mathbb{P}}}\Big[\int_0^tX_sf_x(s,X_s)\,ds\Big]\,.$$
However, to obtain this equality,   
the  Feynman--Kac formula in Remark \ref{remark:FK} cannot be applied directly because the
volatility function $\sigma x^{3/2}$  of $X$ does not have linear growth. Instead, we define $Y:=1/X$ and  $g(t,y):=f_b(t,1/y).$ Then, $Y$ is a CIR process  
$$dY_t=(\alpha+\sigma^2-bY_t)\,dt-\sigma\sqrt{Y_t}\,d\hat{B}_t\,,$$
and Eq.\eqref{eqn:f_b_PDE} becomes
$$-g_t+\frac{1}{2}\sigma^2yg_{yy}+(\alpha+\sigma^2-by)g_y+(1/y)f_x(t,1/y)=0\,,\;g(0,y)=0\,.$$
This PDE satisfies all the conditions in Remark \ref{remark:FK}. 
Note that the function $(1/y)f_x(t,1/y)$ has linear growth at most in $y$ by Eq.\eqref{eqn:3/2_reminder_f_y} because Eq.\eqref{eqn:3/2_remainder} is uniformly   bounded in $(t,x).$ 
It follows that
\begin{equation}
\begin{aligned}
g(t,y)&=\mathbb{E}^{\hat{\mathbb{P}}}\Big[\int_0^t (1/Y_s)f_x(s,1/Y_s)\,ds \Big|Y_0=y\Big]   \\
&=\mathbb{E}^{\hat{\mathbb{P}}}\Big[\int_0^t X_sf_x(s,X_s)\,ds \Big|X_0=1/y\Big] \,.
\end{aligned}
\end{equation}
Thus, 
$$f_x(t,x)=g(t,1/x)=\mathbb{E}^{\hat{\mathbb{P}}}\Big[\int_0^t X_sf_x(s,X_s)\,ds \Big|X_0=x\Big] \,,$$
which is the desired result.

Since $h$ and $h'$ have linear growth at most, there is a positive constant $c_0$ such that
$|Y_th'(Y_t)+\eta h(Y_t)|Y_t^{\eta+1}\leq c_0Y_t^{\eta+3}.$
From Eq.\eqref{eqn:3/2_reminder_f_y}, we have
\begin{equation}\label{eqn:3/2_b_f_x}
\begin{aligned}
|f_x(t,x)| 
&\leq\mathbb{E}^{\tilde{\mathbb{P}}}\big[|Y_th'(Y_t)+\eta h(Y_t)|Y_t^{\eta+1}\big|Y_0=x\big]  e^{-bt}x^{-2}\\
&\leq  c_0 \mathbb{E}^{\tilde{\mathbb{P}}}[Y_t^{\eta+3}|Y_0=x]  e^{-bt}x^{-2} \\
&\leq c_1e^{-bt}x^{-2} 
\end{aligned}
\end{equation}
for some positive constant $c_1,$ which is independent of $t$ and $x.$ Here,
we used Lemma \ref{lem:3/2_MMG}, which gives that
the expectation
$\mathbb{E}^{\tilde{\mathbb{P}}}[Y_t^{\eta+3}|Y_0=x]$
is uniformly bounded in $(t,x)$ on $[0,\infty)\times (0,\infty)$
since the constant $A$ in the lemma satisfies
$0<A=\eta+3<\frac{2\alpha}{\sigma^2}+3.$
Thus,
$$|f_b(t,x)|
\leq\mathbb{E}_x^{\hat{\mathbb{P}}}\Big[\int_0^tX_s|f_x(s,X_s)|\,ds\Big]\leq
c_1 \int_0^t
e^{-bs}\mathbb{E}_x^{\hat{\mathbb{P}}}[X_s^{-1}]\,ds \,.$$
By Lemma \ref{lem:3/2_MMG}, the expectation $\mathbb{E}_x^{\hat{\mathbb{P}}}[X_s^{-1}]$ is bounded in $s$ on $[0,\infty)$ since the expectation converges to a positive constant. Therefore,
$$|f_b(t,x)|
\leq
c_2 b \int_0^t
e^{-bs}\,ds \leq c_2(1-e^{-bt})\leq c_2$$
for some positive constant $c_2.$
Since $f(T,\xi)$ converges to a positive constant as $T\to\infty,$ we conclude that  
$$\left|\frac{1}{T}\frac{\partial_b p_T}{p_T}+\eta \right|=\frac{1}{T}\left|\frac{f_b(T,\xi)}{f(T,\xi)}\right|\leq \frac{c_3}{T}$$
for some positive constant $c_3.$

\subsection{Sensitivity of $a$}
\label{sec:3/2_sen_a}

We study the large-time asymptotic behavior of the  sensitivity of $p_T$ with respect to the parameter $a.$
In this section, assume that $h$ is nonzero, nonnegative, continuously differentiable and that $h,$ $h'$ have linear growth at most. 
From the decomposition in 
Eq.\eqref{eqn:3/2_HS_decompo}, it follows that
$$\frac{\partial_a p_T}{p_T}=\frac{f_a(T,\xi)}{f(T,\xi)}
-(\partial_a\eta) \ln\xi-T\partial_a\lambda \,.$$
It can be easily shown that  $f$ is continuously differentiable in $a$ by considering the density function of $X_t$ or by using \cite[Theorem 4.8]{park2018sensitivity}.
We focus on the  large-time behavior of $f_a(T,\xi).$

Differentiate Eq.\eqref{eqn:3/2_FK} in $a,$ then
$$-f_{at}+\frac{1}{2}\sigma^2x^3f_{axx}+(b-\alpha x)xf_{ax}-(\partial_a\alpha)x^2f_x=0\,,\;f_a(0,x)=h(x)x^\eta(\ln x)\partial_a\eta\,.$$
It follows that 
\begin{equation} \label{eqn:3/2_f_a}
f_a(t,x)
=\mathbb{E}_x^{\hat{\mathbb{P}}}\Big[-(\partial_a\alpha)\int_0^tX_s^2f_x(s,X_s)\,ds+h(X_t)X_t^\eta (\ln X_t)\partial_a\eta\Big]\,.
\end{equation}
However, to obtain this equality,  
the  Feynman--Kac formula in Remark \ref{remark:FK} cannot be applied directly because the
volatility function $\sigma x^{3/2}$  of $X$ does not have linear growth. Instead, we define $Y:=1/X$ and  $g(t,y):=-f_a(t,1/y).$ Then, $Y$ is a CIR process  
$$dY_t=(\alpha+\sigma^2-bY_t)\,dt-\sigma\sqrt{Y_t}\,d\hat{B}_t\,,$$
and Eq.\eqref{eqn:f_b_PDE} becomes
$$-g_t+\frac{1}{2}\sigma^2yg_{yy}+(\alpha+\sigma^2-by)g_y+(\partial_a\alpha)(1/y^2)f_x(t,1/y)=0\,,\;g(0,y)=h(1/y)(1/y)^\eta(\ln y)\partial_a\eta\,.$$
This PDE satisfies all the conditions in Remark \ref{remark:FK}. 
Note that the function $(\partial_a\alpha)(1/y^2)f_x(t,1/y)$ is bounded in $y$ by Eq.\eqref{eqn:3/2_reminder_f_y} since Eq.\eqref{eqn:3/2_remainder} is bounded in $(t,x),$ moreover  $g(0,y)$ is bounded below since $\partial_a\eta <0$ and $\ln y<0$ for small $y>0$   and $h(1/y)(1/y)^\eta(\ln y)$ is bounded  for large $y.$
It follows that
\begin{equation}
\begin{aligned}
g(t,y)&= \mathbb{E}^{\hat{\mathbb{P}}}\Big[-(\partial_a\alpha)\int_0^t (1/Y_s^2)f_x(s,1/Y_s)\,ds-h(1/Y_t)(1/Y_t)^\eta(\ln Y_t)\partial_a\eta \Big|Y_0=y\Big]   \\
&=\mathbb{E}^{\hat{\mathbb{P}}}\Big[-(\partial_a\alpha)\int_0^t X_s^2f_x(s,X_s)\,ds+h(X_t)X_t^\eta(\ln X_t)\partial_a\eta \Big|X_0=1/y\Big] \,.
\end{aligned}
\end{equation}
Thus,
$$f_a(t,x)=g(t,1/x)=\mathbb{E}^{\hat{\mathbb{P}}}\Big[-(\partial_a\alpha)\int_0^t X_s^2f_x(s,X_s)\,ds+h(X_t)X_t^\eta(\ln X_t)\partial_a\eta \Big|X_0=x\Big]  \,,$$
which is the desired result.

Using Lemma \ref{lem:3/2_MMG} and Eq.\eqref{eqn:3/2_b_f_x}, we have
\begin{equation} 
|f_a(t,x)|
\leq c_1\mathbb{E}_x^{\hat{\mathbb{P}}}\Big[\int_0^te^{-bs}\,ds\Big]+c_2\leq \frac{c_1}{b}+c_2
\end{equation}
for some positive constants $c_1$ and $c_2.$
Since $f(T,\xi)$ converges to a positive constant as $T\to\infty,$ we conclude that  
$$\left|\frac{1}{T}\frac{\partial_a p_T}{p_T}+\partial_a\lambda \right|=\frac{1}{T}\left|\frac{f_a(T,\xi)}{f(T,\xi)}-(\partial_a\eta) \ln\xi\right|\leq \frac{c_3}{T}$$
for some positive constant $c_3.$ Furthermore,  direct calculation gives
$$\partial_a\lambda=-\frac{b}{\sigma^2}\frac{\sqrt{(a+\sigma^2/2)^2+2q\sigma^2}-(a+\sigma^2/2)}{\sqrt{(a+\sigma^2/2)^2+2q\sigma^2}}
\,.$$

\subsection{Sensitivity of $\sigma$}

We study the large-time asymptotic behavior of the  sensitivity of $p_T$ with respect to the parameter $\sigma.$
In this section, assume that
$h$ is nonzero, nonnegative,  twice continuously differentiable and that $h,$ $h'$ have linear growth at most and $h''$ is bounded.
From the decomposition in Eq.\eqref{eqn:3/2_HS_decompo},
it follows that
$$\frac{\partial_\sigma p_T}{p_T}=\frac{f_\sigma(T,\xi)}{f(T,\xi)}
-(\partial_\sigma\eta) \ln\xi
-T\partial_\sigma\lambda \,.$$
It can be easily shown that $f$ is continuously differentiable in $\sigma$ by considering the density function of $X_t$ or by using \cite[Theorem 4.13]{park2018sensitivity}.
We focus on the large-time behavior of $f_\sigma(T,\xi).$

Differentiate Eq.\eqref{eqn:3/2_FK} in $\sigma,$ then
$$-f_{\sigma t}+\frac{1}{2}\sigma^2x^3f_{\sigma xx}+(b-\alpha x)xf_{\sigma x}+\sigma x^3f_{xx}-(\partial_\sigma \alpha)x^2f_x=0\,,\;f(0,x)=h(x)x^{\eta}(\ln x)\partial_\sigma\eta\,.$$
It follows that 
\begin{equation} 
\begin{aligned}
f_\sigma(t,x)
&=\mathbb{E}_x^{\hat{\mathbb{P}}}\Big[\sigma\int_0^tX_s^3f_{xx}(s,X_s)\,ds-(\partial_\sigma\alpha)\int_0^tX_s^2f_x(s,X_s)\,ds+h(X_t)X_t^\eta (\ln X_t) \partial_\sigma\eta\Big]\\
&=\mathbb{E}_x^{\hat{\mathbb{P}}}\Big[\sigma\int_0^tX_s^3f_{xx}(s,X_s)\,ds\Big]+\mathbb{E}_x^{\hat{\mathbb{P}}}\Big[-(\partial_\sigma\alpha)\int_0^tX_s^2f_x(s,X_s)\,ds+h(X_t)X_t^\eta (\ln X_t) \partial_\sigma\eta\Big]\,.
\end{aligned}
\end{equation}
We claim that the two expectations on the right-hand side are bounded in $t$ on $[0,\infty).$ 
The second expectation is bounded in $t$ on $[0,\infty)$ by the same method used in the analysis of Eq.\eqref{eqn:3/2_f_a}.
To estimate the first expectation, observe that
\begin{equation} 
f_{xx}(t,x)=g_x(t,x)e^{-bt}x^{-2}-2g(t,x)e^{-bt}x^{-3} \,,
\end{equation} 
which is presented in Eq.\eqref{eqn:3/2_f_yy}.
It follows that
$$\mathbb{E}_x^{\hat{\mathbb{P}}}\Big[\int_0^tX_s^3f_{xx}(s,X_s)\,ds\Big]=\mathbb{E}_x^{\hat{\mathbb{P}}}\Big[\int_0^tX_sg_x(s,X_s)e^{-bs}\,ds\Big]-2\mathbb{E}_x^{\hat{\mathbb{P}}}\Big[\int_0^tg(s,X_s)e^{-bs}\,ds\Big]\,.$$
By the same method used in Eq.\eqref{eqn:3/2_b_f_x}, 
there is a positive constant $c_1,$ which is independent of $t$ and $x,$ such that
$$|g(t,x)|\leq \mathbb{E}^{\tilde{\mathbb{P}}}\big[|Y_th'(Y_t)+\eta h(Y_t)|Y_t^{\eta+1}\big|Y_0=x\big]\leq c_1\,.$$
Thus, $\mathbb{E}_x^{\hat{\mathbb{P}}}[\int_0^tg(s,X_s)e^{-bs}\,ds]$ is bounded in $t$ on $[0,\infty).$
From Eq.\eqref{eqn:3/2_reminder_f_yy},
\begin{equation} 
\begin{aligned}
|g_x(t,x)|
&\leq \mathbb{E}^{\bar{\mathbb{P}}}\big[|h''(Z_t)Z_t^2
+2(\eta+1) Z_th'(Z_t)+\eta(\eta+1)h(Z_t)|Z_t^{\eta+2}\big|Z_0=x\big]e^{-bt}x^{-2}\\
&\leq c_2\mathbb{E}^{\bar{\mathbb{P}}}[Z_t^{\eta+4}\big|Z_0=x]e^{-bt}x^{-2} 
\end{aligned}
\end{equation}
for some positive constant $c_2.$
By Lemma \ref{lem:3/2_MMG},
the expectation
$\mathbb{E}^{\bar{\mathbb{P}}}[Z_t^{\eta+4}\big|Z_0=x]$
is uniformly bounded in $(t,x)$ on the domain $[0,\infty)\times(0,\infty),$ which implies that
$$|g_x(t,x)|
\leq  c_3e^{-bt}x^{-2} $$
for some positive constant $c_3.$
Thus,
$$\mathbb{E}_x^{\hat{\mathbb{P}}}\Big[\int_0^tX_s|g_x(s,X_s)|e^{-bs}\,ds\Big]\leq c_3\mathbb{E}_x^{\hat{\mathbb{P}}}\Big[\int_0^tX_s^{-1}e^{-2bs}\,ds\Big]=c_3 \int_0^t\mathbb{E}_x^{\hat{\mathbb{P}}}[X_s^{-1}]e^{-2bs}\,ds\,.$$
By Lemma \ref{lem:3/2_MMG}, the expectation $\mathbb{E}_x^{\hat{\mathbb{P}}}[X_s^{-1}]$ is bounded in $s$ on $[0,\infty)$ since the expectation converges to a positive constant, which gives the desired result.
In conclusion,
$$\left|\frac{1}{T}\frac{\partial_\sigma p_T}{p_T}+\partial_\sigma\lambda \right|=\frac{1}{T}\left|\frac{f_\sigma(T,\xi)}{f(T,\xi)}-(\partial_\sigma \eta) \ln\xi\right|\leq \frac{c_4}{T}$$
 for some positive constant $c_4.$
Furthermore, direct calculation gives
$$\partial_\sigma\lambda=
 \frac{b(a+{\sigma^2}/{2}+2 q-\sqrt{(a+{\sigma^2}/{2})^2+2 q\sigma^2})}{\sigma \sqrt{(a+{\sigma^2}/{2})^2+2 q\sigma^2}} -\frac{2b}{\sigma^3}(\sqrt{(a+{\sigma^2}/{2})^2+2 q \sigma^2}-a- {\sigma^2}/{2})\,.$$

\bibliographystyle{apa}

\bibliography{sensitivity}
\end{document}